\documentclass{article}
\usepackage{wright}

\usepackage[many]{tcolorbox}

\tcbset{
  algorithmbox/.style={
    breakable,
    enhanced,
    colback=black!2,
    colframe=black!50,
    colbacktitle=black!7,
    coltitle=black,
    coltext=black,
    boxrule=0.5pt,
    arc=0.5pt,
    left=6pt,
    right=6pt,
    top=5pt,
    bottom=5pt,
    before skip=10pt,
    after skip=10pt
  }
}

\newtcolorbox[
  auto counter,
  crefname={Algorithm}{Algorithms}
]{myalgorithm}[2][]{
  algorithmbox,
  fonttitle=\bfseries,
  title={Algorithm~\thetcbcounter: #2},
  #1
}

\title{Pretty good shadow tomography}
\author{Sabee Grewal\thanks{IBM Quantum, Columbia University. \texttt{sabee@ibm.com}}  \and Angelos Pelecanos\thanks{UC Berkeley. \texttt{\{apelecan,jspilecki,jswright\}@berkeley.edu}} \and Jack Spilecki\footnotemark[2] \and Ewin Tang\thanks{Princeton. \texttt{ent@cs.princeton.edu}} \and John Wright\footnotemark[2]}

\date{}

\begin{document}

\maketitle

\begin{abstract}
We give a shadow tomography algorithm that uses $O(\log(m) \log(m/\delta) / \varepsilon^2)$ copies of the input state. 
Specifically, given a quantum state $\rho$ and observables $O_1,\ldots,O_m$, the algorithm outputs estimates of $\tr(O_i\rho)$ for all $i\in[m]$ that are all accurate to within $\pm\varepsilon$, except with probability at most $\delta$.
The key idea is to work in the average-case setting: for a random state drawn from a known prior, a pretty good measurement on $O(\log(m)/\eps^2)$ copies gives a constant-error guarantee on all but a small fraction of the prior. 
Boosting this guarantee gives an average-case shadow tomography algorithm, and the minimax principle then (non-constructively) implies the existence of a measurement with the same worst-case guarantee.

We also identify an obstruction to further improvement with this approach.
While pretty good measurements amplify success over the randomness of the prior, they do not amplify success over the randomness of measurement.
This issue appears to be inherent, and gives rise to the remaining $\log(m)$ factor separating the complexity of the quantum problem from its classical counterpart.
\end{abstract}

\vfill
{\small
\noindent
\textbf{A.I.\ disclosure.}
The main result was developed through heavy interaction with generative A.I.\ models, with the proofs generated by the models themselves; the analysis presented here was modified to add conceptual clarity.
See \cref{subsec:ai} for more details and additional context for this result.
}
\vspace{2em}

\newpage

\hypersetup{linktocpage}
\tableofcontents
\thispagestyle{empty}

\newpage

\section{Introduction}

Measuring quantum states is delicate business.
Observing a state necessarily disturbs it, so if we measure where it lies along one axis, we destroy the information of where it lies along a different axis.
This central property of quantum mechanics is known as \emph{measurement uncertainty}, and like Heisenberg uncertainty, it arises because observables need not commute with each other.
Unlike in the classical world, where observables commute happily, it matters in which order we apply measurements to quantum states, as states degrade with each successive measurement; we grapple with this issue every time we try to extract information from quantum states.

Measurement uncertainty has been formalized many times across almost a century of literature.
Early literature focuses on quantifying the tradeoff between how much information one gains from a measurement and how much disturbance it causes, starting with Heisenberg~\cite{Hei27}\footnote{Heisenberg originally described his principle in terms of how a measurement in one basis disturbs the state in another; what we know now as the Heisenberg uncertainty principle is a later refinement, capturing the slightly different phenomenon that quantum states cannot be ``certain'' (low standard deviation) with respect to two different axes, and was shown by Kennard~\cite{Ken27}, Robertson~\cite{Rob29}, and Schr\"odinger~\cite{Sch30}.} and progressively sharpened over time~\cite{AK65,AG88,FP96,Oza03,BLW13,BHOW14}; see also the review~\cite{BLW14} and the closely-related gentle measurement lemma~\cite{Win99,ON02}.
More recently, researchers have investigated more operational variants of this question, giving bounds on the quality of (unbiased) estimators for the expectations of two non-commuting observables from measuring one state~\cite{Nag89,Hay99}; see~\cite{CSLA21} for a recent exposition. 
However, essentially all of this literature focuses on the single-copy setting.

Since measurement uncertainty is a fundamental limit on single-copy measurement, we might expect the same phenomenon, and its operational consequences, to persist in the many-copy setting. 
The natural question to pose here is: does estimating non-commuting observables require more copies than estimating commuting ones?
Put another way, what is the cost of non-commutativity of measurement?
One might even expect the effect to be more than constant, an asymptotically worse statistical rate resulting from the inability to reuse states after measurement.
Little is known about this question.
Later, we discuss formulations of this question with a more statistical flavor; but in the quantum computing literature, this question is captured by the problem of \emph{shadow tomography}, posed by Aaronson in 2016~\cite{Aar16}. 

In the problem of shadow tomography, we are given copies of an unknown state $\rho$, along with some observables $O_1, \dots, O_m$.
Our goal is to estimate every expectation $\tr(O_i \rho)$ to within $\pm\eps$ accuracy with a total failure probability of at most $\delta$.
When $\rho$ is a classical distribution, there is no issue: we can reuse the same samples to estimate every observable, requiring only $O(\log(m / \delta)/\eps^2)$ samples; the $\log(m)$ comes from a union bound.
For a general quantum state, however, you run into immediate issues with non-commutativity: measuring one observable can disturb the state, preventing the measurement of the rest of them.
A naive algorithm, therefore, uses $O(m\log(m/\delta)/\eps^2)$ copies, while running full tomography requires $O(d^2/\eps^2)$ copies.
Aaronson showed that, surprisingly, shadow tomography can be solved using a number of copies that scales polynomially in $1/\eps$, $\log(m)$, and $\log(d)$~\cite{Aar18}!

Since then, subsequent work has steadily brought the sample complexity down~\cite{AR19,BO24}, until the latest work of~\cite{COPW26}, which uses only $O(\log(m)\sqrt{\log(d)}/\eps^3)$ copies.
This complexity is essentially optimal for the general approach taken by all prior work: designing sufficiently gentle measurements that reveal useful information while disturbing the state as little as possible, so that it can be reused in later rounds.
The persistent dependence on dimension suggested that this dependence might be necessary for shadow tomography, and perhaps reflects a fundamental cost of non-commutativity.
However, as we do not have any lower bound for shadow tomography stronger than $\Omega(\log(m/\delta)/\eps^2)$, the question remains open.

\subsection{Results}

We make further progress on shadow tomography with a new dimension-independent algorithm.


\begin{restatable}{theorem}{WorstCaseShadow}
\label{thm:intro-worst}
There is an algorithm for the shadow tomography problem that consumes 
\[
O\!\left(\frac{\log(m)\log(m/\delta)}{\eps^2}\right)
\]
copies of the input state.
\end{restatable}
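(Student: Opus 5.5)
The plan follows the abstract's three-stage route: first an average-case guarantee with constant error via a pretty good measurement, then boosting, and finally minimax to pass to the worst case.

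\textbf{Step 1 (average case, constant error).} Fix a prior $\mu$ over states $\rho$. Discretize the possible answer vectors: let $\mathcal{A}$ be the set of answer strings $a\in(\eps\mathbb{Z}\cap[-1,1])^m$. For each $\rho$ in the support of $\mu$, let $a(\rho)$ be its rounded answer vector. This partitions the prior into classes $\sigma_a = \E_{\rho\sim\mu}[\rho^{\otimes n}\mid a(\rho)=a]$ with weights $p_a$.
\begin{itemize}
\item Apply the pretty good measurement to the ensemble $\{p_a,\sigma_a\}$. Its success probability is at least the square of the optimal one (Barnum--Knill), and its error probability is controlled by pairwise fidelities $\sum_{a\ne b}\sqrt{p_ap_b}\,F(\sigma_a,\sigma_b)$.
\item Accept any output $b$ with $\|b-a\|_\infty\le\eps$, so only pairs with $|a_i-b_i|>\eps$ for some $i$ count as confusions.
\item For such a pair, a single observable $O_i$ separates $\rho$ from $\rho'$ by $\eps$. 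Hence $F(\rho^{\otimes n},\rho'^{\otimes n})\le (1-\Omega(\eps^2))^n$.
\item Concavity/joint-convexity arguments lift this to the mixtures $\sigma_a,\sigma_b$, giving a total confusion bound of roughly $m\cdot e^{-\Omega(n\eps^2)}$ after a union over which coordinate separates.
\end{itemize}
Taking $n=O(\log(m)/\eps^2)$ makes the average error a small constant. I expect this to be the main obstacle. The naive union over all answer-vector pairs is exponential in $m$, so I would instead run the PGM-style analysis coordinate-wise: bound, for each $i$, the $\mu$-mass of states whose $i$-th estimate is off, and sum over $i$. I would also need a fidelity bound for mixtures that does not lose the count of classes. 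If this fails, the fallback is a constant-success guarantee with a Markov-type statement: all but a small fraction of the prior gets a correct answer with probability $\ge 2/3$.

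\textbf{Step 2 (boosting).} Repeat the Step 1 measurement $k=O(\log(m/\delta))$ times on fresh batches of $n$ copies and output the coordinatewise median. The main issue is that Step 1 gives constant success only on most of the prior, not on each $\rho$ with independent repetitions. I would therefore design the Step 1 guarantee per coordinate, as an expected number of wrong coordinates at most a small constant fraction. Then a Chernoff/median argument, union-bounded over the $m$ coordinates, yields average-case failure $\le\delta$ in total with $O(\log(m)\log(m/\delta)/\eps^2)$ copies.

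\textbf{Step 3 (minimax).} Consider the zero-sum game in which the algorithm picks a POVM on $N$ copies and the adversary picks $\rho$, with payoff equal to the failure probability. The measurement set is convex and compact, and the payoff is bilinear in (POVM, distribution over $\rho$). After discretizing states by a net, or by using Sion's theorem with the compact state space, minimax gives a single measurement whose worst-case failure is at most $\sup_\mu$ of the average-case failure, which is $\le\delta$. Constants are absorbed by slightly shrinking $\eps$ to handle the rounding.
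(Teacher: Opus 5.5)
\paragraph{Step 1 is the gap.} It is the whole difficulty, not a technicality. The pairwise-fidelity bound $\sum_{a\ne b}\sqrt{p_ap_b}\,F(\sigma_a,\sigma_b)$ and Barnum--Knill control only the PGM's \emph{exact}-identification error. On an $\eps$-grid, adjacent classes have fidelity near $1$, so you need the variant in which only far pairs count as confusions. That variant is false for the PGM. In the paper's cycle example $\ket{v_i}=(\ket{i}+\ket{i+1})/\sqrt{2}$, every pair whose parameters differ by more than the tolerance is orthogonal. Yet the PGM is off by any constant multiple of the tolerance with probability $\Omega(1)$.

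Tensor-power structure does not rescue it. The PGM is closed under coarse-graining, so your class-PGM read at coordinate $i$ is just the ordinary PGM on $\{\mu,\rho^{\otimes n}\}$ followed by rounding $\tr(O_i\hat\rho)$. Hence the $m=1$ case of your claimed $m\,e^{-\Omega(n\eps^2)}$ bound is, up to rescaling $\eps$, the PGM Decay Conjecture. That conjecture fails: there are tensor-power ensembles on which the error decays only like $1/\mathrm{poly}(n)$. (Separately, an upper bound on $F(\sigma_a,\sigma_b)$ comes from monotonicity of fidelity under measuring $O_i$ on every copy; joint concavity points the wrong way.)

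Bounding coordinate by coordinate the probability that the $i$-th estimate is wrong is the same false statement. A Markov-type fallback gives bad mass only $O(1/(n\eps^2))$ per coordinate, e.g.\ from the $2/n$ mean-squared-error bound. After a union bound over $m$ coordinates, this forces $n=\Omega(m/\eps^2)$.

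\paragraph{What is needed, and Steps 2--3.} You need that, for each $O_i$, the prior mass of states on which the estimator errs with probability $>1/8$ is $e^{-\Omega(n\eps^2)}$, even though the error probability itself need not be that small. The paper gets this from three ingredients absent from your plan:
\begin{enumerate}
\item The integral PGM, whose $\calD_S$ is the derivative of $\log$ at $S$. This gives $D(B_x^{\otimes k}\,\|\,B^{(k)}_{\hat x})\le\Delta_x(p,n)$, the relative-entropy gain from $k$ extra copywise $O_i$-measurements.
\item A random number of copies $n=kt$ with $t$ uniform on $\{T,\dots,2T-1\}$. Then $\sum_x q_x\Delta_x(p,n)$ telescopes and averages to at most $(D(q\|p)+\log 2)/T$.
\item The choice of $q$ as $p$ conditioned on the bad set, for which $D(q\|p)=\log(1/\Pr[\mathsf{Bad}])$. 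Pinsker, Hoeffding and Jensen then turn the $1/T$ bound into $\Pr[\mathsf{Bad}]\le 2e^{-T/128}$.
\end{enumerate}

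Step 2 also has a gap. Rerunning one prior's measurement and taking medians cannot push the failure probability below that prior's bad mass. The Chernoff step needs per-state error below $1/2$, not an averaged guarantee. With a fixed prior you must take $T=\Theta(\log(m/\delta))$, which gives only $O(\log^2(m/\delta)/\eps^2)$. Reaching $\log(m)\log(m/\delta)$ requires reweighting the prior across rounds, $w_r(x)=w_{r-1}(x)\bigl(1+2e_{q_r}(x)\bigr)$, as in the paper's appendix. Alternatively, apply minimax already to the constant-error guarantee, with payoff $\Pr[\text{coordinate } i \text{ wrong on } \rho_x]$ against distributions over pairs $(x,i)$, and then boost by worst-case medians. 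Your Step 3 is fine.
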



Notably, this algorithm has no dependence on the dimension of the input state.
It gives the first dimension-free shadow tomography algorithm with only polylogarithmic dependence on $m$; prior to this work, the best dimension-free bound was $O(\sqrt{m}/\eps^2)$ copies, due to~\cite{COPW26}, improving an earlier $O(\sqrt{m}\log(m)/\eps^2)$ bound of Sinha~\cite{Sin25}.
With this, the best-known bounds for shadow tomography, with constant $\delta$, can be summarized as follows:

\begin{enumerate}
    \item $O(\log(m)/\eps^2)$ copies suffice in the high-precision regime $\eps \lesssim 1/d$~\cite{PSW25,PSTW25}. 

    \item $O(\log(m)\sqrt{\log(d)}/\eps^3)$ copies suffice~\cite{COPW26}; for $\eps \gtrsim 1/d$, this gives the best known bound when $\sqrt{\log(d)}/\eps \lesssim \log(m)$. 
    
    \item $O(\log^2(m)/\eps^2)$ copies suffice by this work; for $\eps \gtrsim 1/d$, this gives the best known bound when $\sqrt{\log(d)}/\eps \gtrsim \log(m)$. 
\end{enumerate}

Our key idea is to depart from previous work and study shadow tomography in the \emph{average case}, where the input state is drawn from a known prior.
Let us first state this average-case variant more formally. 
Suppose that the unknown state is promised to belong to some known family of states $\calF = \{\rho_1, \dots, \rho_\ell\}$, and let $p = (p_1, \dots, p_\ell)$ be a known prior on this family.
The input state $\brho = \rho_{\bx}$ is obtained by drawing $\bx \sim p$, and we are given copies of $\brho$ together with observables $O_1, \dots, O_m$.

As before, our goal is to output estimates of the expectation values $\tr(O_i \brho)$ for every $i \in [m]$ to accuracy $\pm \eps$. 
The key difference is that we no longer require the algorithm to succeed for every possible input state in $\calF$. 
Instead, we require the algorithm to succeed with probability at least $1-\delta$ over both the random choice of $\brho$ from the prior and the randomness of the measurement. 

Working in the average case gives us access to tools that are naturally defined with respect to a prior.
Most importantly, we can use (variants of) the \emph{pretty good measurement} (PGM), a canonical measurement first designed for the task of quantum state discrimination. 
With this, we can attain a shadow tomography algorithm in the average-case setting.

\begin{theorem}
\label{thm:intro-avg}
There is an algorithm for the average-case shadow tomography problem that consumes 
\[
O\!\left(\frac{\log(m)\log(m/\delta)}{\eps^2}\right)
\]
copies of the input state.
\end{theorem}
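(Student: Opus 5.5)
The plan follows the abstract: first a constant-error average-case guarantee from a pretty good measurement on $k = O(\log(m)/\eps^2)$ copies, then boosting over independent batches.

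\textbf{Step 1 (a base measurement that is good on most of the prior).} Take $k = C\log(m)/\eps^2$ copies of $\brho=\rho_{\bx}$. Consider the ensemble $\{p_x, \rho_x^{\otimes k}\}$ and apply the pretty good measurement with elements
\[
E_x = p_x\, \sigma^{-1/2}\rho_x^{\otimes k}\sigma^{-1/2}, \qquad \sigma=\sum_y p_y \rho_y^{\otimes k}.
\]
On outcome $y$, output the vector $(\tr(O_i\rho_y))_{i\in[m]}$. Say that $y$ is \emph{bad} for $x$ if $|\tr(O_i\rho_y)-\tr(O_i\rho_x)|>\eps$ for some $i$. The goal is to show that the probability the guess is bad is at most a small constant $c$.

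To do this I would compare with a classical ``ideal'' procedure. Fix $i$ and look at the operators $\rho_x^{\otimes k}$ together with the projector $\Pi_i^{x}$ onto outcomes of the $k$-fold measurement of $O_i$ whose empirical mean lies within $\eps/2$ of $\tr(O_i\rho_x)$. By Hoeffding, $\tr(\Pi_i^x\rho_x^{\otimes k})\ge 1-e^{-\Omega(k\eps^2)} = 1-m^{-\Omega(C)}$.

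The key inequality I would aim for is the standard PGM bound, in the style of Barnum--Knill or Montanaro. For any subset $S$ of pairs and any per-$x$ ``test'' that accepts $x$ with high probability, the PGM's probability of outputting $y$ with $(x,y)$ bad is bounded by a quantity controlled by $\sum_{x,y\ \mathrm{bad}} \sqrt{p_xp_y}\,\tr(\sqrt{\rho_x^{\otimes k}}\sqrt{\rho_y^{\otimes k}})$-type overlaps. If $y$ is bad for $x$ via observable $i$, then $\rho_x^{\otimes k}$ and $\rho_y^{\otimes k}$ are nearly orthogonal on the commuting projectors $\Pi_i^x$ versus $\Pi_i^y$. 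A union bound over the $m$ observables costs a factor of $m$, which is beaten by $m^{-\Omega(C)}$. This is where the $\log(m)$ in $k$ comes from.

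\textbf{Main obstacle.} Making this rigorous is the hard step. The PGM's error is naturally bounded by pairwise fidelities, and summing over all bad pairs $y$ can blow up with $\ell$. I would instead work directly with the \emph{average} error
\[
\sum_x p_x \sum_{y\ \mathrm{bad}} \tr(E_y\rho_x^{\otimes k}).
\]
Inserting $\rho_x^{\otimes k}\approx \Pi\rho_x^{\otimes k}\Pi$ and using operator Cauchy--Schwarz with $\sigma^{-1/2}$, I expect to get a bound of the form $m\cdot e^{-\Omega(k\eps^2)}$ plus the gentle-measurement error. That error is only constant-size, which is why only constant success is claimed here.

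\textbf{Step 2 (boosting).} Run the base procedure on $r = O(\log(m/\delta))$ independent batches of $k$ copies, giving guesses $y_1,\dots,y_r$. For each $i$, output the median of $\tr(O_i\rho_{y_j})$. For a fixed $x$, the guesses are i.i.d. conditionally on $x$, but their error probability $q_x$ varies with $x$, and the average of $q_x$ is at most $c$. By Markov, $q_x\le 1/4$ except on a prior-mass $4c$ fraction of $x$. That leftover fraction is not amplified by repetition, which matches the obstruction noted in the abstract.

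To fix this, I would rerun Step 1 with a per-$i$ guarantee rather than an all-$i$ guarantee: require that for each $i$, the probability that $|\tr(O_i\rho_y)-\tr(O_i\rho_x)|>\eps$ is at most $\delta/m$ on average. This is achieved by taking $k = O(\log(m/\delta)\cdot\log m/\eps^2)$ in a single batch. Equivalently, I would use batches of size $O(\log m/\eps^2)$ together with a PGM on the joint $r$-batch ensemble, so that amplification over the prior is inherited. A final union bound over $i$ then gives total failure probability at most $\delta$, using $O(\log(m)\log(m/\delta)/\eps^2)$ copies in total.
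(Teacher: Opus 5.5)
\textbf{There is a genuine gap at the core of your argument.} Step 1, and again the ``fix'' at the end of Step 2, both rest on one claim. The claim is that the pretty good measurement's prior-averaged error on a single observable, $\Pr_{\bx\sim p,\wh{\bx}}[|\tr(O_i\rho_{\wh{\bx}})-\tr(O_i\rho_{\bx})|>\eps]$, decays like $e^{-\Omega(k\eps^2)}$. That would let a union bound over the $m$ observables (or a per-observable target of $\delta/m$) be paid for with $k=O(\log m/\eps^2)$ (or $O(\log m\log(m/\delta)/\eps^2)$) copies. This is exactly the PGM Decay Conjecture, and the paper reports it is false: there are families, priors and observables for which this error is only $1/\poly(k)$.

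Such an example also refutes your weaker Step 1 claim of constant all-$i$ error. Take $m$ independent tensor factors of it with a product prior, and let $O_j$ act on the $j$th factor. The standard PGM then factors as a product of PGMs, so one run is correct on all $m$ observables with probability about $(1-1/\poly(k))^m$. This tends to $0$ at $k=O(\log m/\eps^2)$, and your single-batch fix would need $k=\poly(m/\delta)$.

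So no care with projectors and operator Cauchy--Schwarz can prove these claims. Pairwise-fidelity bounds of Barnum--Knill/Harrow--Winter type scale with $\ell$, and the high-probability analogue of Barnum--Knill fails for quantum ensembles. A PGM on the joint $r$-batch ensemble is just a PGM on more copies, so it inherits no amplification.

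\textbf{The missing idea is to change what you make exponentially small.} The PGM does not amplify over its own measurement randomness, but it does amplify over the prior. What decays is the prior mass of $\mathsf{Bad}_{O_i}$, the set of $x$ whose conditional error probability for $O_i$ exceeds $1/8$. The paper proves $\Pr_{\bx\sim p}[\bx\in\mathsf{Bad}_{O_i}]\le 2e^{-T/128}$ for the \emph{integral} PGM applied to $k\bt$ copies, with $\bt$ uniform on $\{T,\ldots,2T-1\}$ and $k=\lceil 12/\eps^2\rceil$, in three steps:
\begin{enumerate}
\item Hoeffding and Pinsker bound the conditional error by $4e^{-k\eps^2/2}+\sqrt{D(B_x^{\otimes k}\|B^{(k)}_{\wh{\bx}})/2}$.
\item Because the integral PGM is the derivative of the matrix logarithm, this relative entropy is at most $\Delta_x(p,n)$.
\item A telescoping entropy argument over the random copy number gives $\frac1T\sum_t\sum_x q_x\Delta_x(p,kt)\le(D(q\|p)+\log 2)/T$. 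Taking $q$ to be $p$ conditioned on $\mathsf{Bad}_{O_i}$, so that $D(q\|p)=-\log\Pr[\mathsf{Bad}_{O_i}]$, forces the claimed bound.
\end{enumerate}

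Repetition then has to be coordinatewise, taking medians per observable, never asking one run to be right on all $i$ at once.
\begin{itemize}
\item With $T=\Theta(\log(m/\delta))$ and a plain median trick, this gives $O(\log^2(m/\delta)/\eps^2)$.
\item For the stated bound, the paper takes $T=\Theta(\log m)$ and removes the leftover bad mass by reweighting. Round $r$ is run with prior proportional to $p_x\prod_{s<r}(1+2e_{q_s}(x))$, where $e_q(x)$ is the largest per-observable conditional error under prior $q$. Then $\sum_x p_x\prod_{r\le R}(1+2e_{q_r}(x))\le(3/2)^R$, and each coordinate's median fails with probability at most $(\sqrt3/2)^R$.
\end{itemize}

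This is the opposite of your reading of the abstract. The leftover fraction of the prior \emph{can} be amplified away, by reweighting. What cannot be amplified is the per-state measurement error, which is why independent repetitions are needed on top of the $O(\log m/\eps^2)$-copy measurement.
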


\Cref{thm:intro-worst} follows directly from \cref{thm:intro-avg} by Sion's minimax theorem~\cite{Sio58} (which generalizes the classical minimax theorem of von Neumann~\cite{vN28}).
We note that this approach is fundamentally non-constructive: the minimax theorem merely tells us that an algorithm exists, without giving an explicit description of one.\footnote{
The minimax measurement can in principle be found constructively by solving a semidefinite program over POVMs. 
This does not give an especially illuminating algorithm, however.
}
Indeed, this non-constructiveness may be the fundamental strength of this approach. 
As we will discuss later, a major obstacle to designing shadow tomography algorithms is that we do not understand what a hard instance looks like for any given set of observables.
The average-case formulation lets us avoid this problem entirely: rather than identifying the hardest prior, we prove that our algorithm works for \emph{every} prior, allowing the measurement to depend on the prior it is given.
The minimax theorem then turns these average-case guarantees into a worst-case guarantee without ever requiring us to identify which prior is hardest.

However, this approach appears to be bottlenecked at $\log^2(m)$.
In the next section, we will discuss why, and in the process detail our current understanding of pretty good measurements and average-case shadow tomography.

\paragraph{Concurrent work.}
Concurrent to our work, Jeronimo, Huang, and Liu attained the same complexity bound, going through similar techniques, though with a more intricate algorithm and analysis.
Though our proof is AI-generated, we can be fairly confident that it was generated independently: we first produced a shadow tomography algorithm using $O(\log^2(m)\log(m/\delta)/\eps^2)$ copies on July 28th and the algorithm presented here on August 2nd.
Jeronimo, Huang, and Liu posted their paper to arXiv on August 6th, initially attaining a $\widetilde{O}(\log^4(m/\delta)/\eps^2)$ shadow tomography algorithm, and subsequently improved the algorithm to the $O(\log(m)\log(m/\delta)/\eps^2)$ algorithm, posted on September 14th.

\paragraph{Related work on the cost of noncommutativity.}

Another natural way to investigate the cost of non-commutativity is with a statistical lens: given observables $O_1, \dots, O_m$ and $n$ copies of the state $\rho$, compute unbiased estimators $\bo_1,\dots,\bo_m$ of $\tr(O_1 \rho), \dots, \tr(O_m \rho)$ such that the total variance $\sum_i \Var[\bo_1] + \dots + \Var[\bo_m]$ is as small as possible.
In particular, the question is about the rate: as $n \to \infty$, the total variance will scale as some $C/n + o(1/n)$; the goal is to identify $C$, the dominant contribution to the rate.
For this question, results indicate that the cost of non-commutativity is at most a factor of $2$: while this variance can be as large as $m$ for commuting observables, it is at most $2m$ for non-commuting observables.
This factor of two was first described by Nagaoka and Hayashi in the single-copy setting~\cite{Nag89,Hay99} and later extended to various multi-copy settings~\cite{TAD20,CSDV19,ABR26}.
In particular, this gap is the same factor of two which separates the Quantum Cram\'{e}r--Rao bound~\cite{Hel67} from the Holevo--Cram\'{e}r--Rao bound~\cite{Hol11}, and is a direct consequence of the \emph{measurement incompatibility} that arises from non-commutativity.
This factor of two is tight; see, for example, \cite[Section 3.1.1]{DGG20}.

For these formulations, the central focus is on the variance, for which the pretty-good measurement has especially nice behavior (see \Cref{subsec:pgm-mse}).
By comparison, shadow tomography demands control of higher moments, since the desired $\log(m)$ scaling comes from demonstrating an exponential tail on the estimator distribution.
This makes the two formulations substantially different: even though the cost of non-commutativity is a constant factor for variances, it need not be the same for shadow tomography.\footnote{
Here, we have restricted our discussion to allowing for general measurements.
In restricted settings, such as requiring measurements to only operate on single copies, the cost of non-commutativity is essentially $m$: it is near-optimal to measure each observable on $O(\log(m)/\eps^2)$ copies of $\rho$, without any reuse~\cite{CCHL22}.
}

\section{Background}
\label{background}

In this section, we will describe the broad approach that we took when designing our shadow tomography algorithm, and what motivated us to consider this approach.
We will then survey several proof directions that we attempted, some of which were successful, some of which were not, and we will conclude by describing the high level proof strategy that we have ultimately decided to take.

\subsection{The pretty good measurement in full state tomography}

To motivate the approach we take to the shadow tomography problem, let us first consider the more challenging task of full state tomography.
Suppose one is given $n$ copies of an unknown state, where the state is promised to be an element of some family $\calF = \{\rho_1, \ldots, \rho_{\ell}\}$ of states.
For ease of exposition, we will assume that this family is of finite size, but in practice the state families considered in full state tomography (e.g.\ the set of all pure states) are usually infinite and continuous.\footnote{We note, however, that it always suffices to consider finite-size families, as continuous families of states can always be discretized.}
If $\rho_i$ is the unknown state, a tomography algorithm is allowed to perform any measurement on the input $\rho_i^{\otimes n}$, and its goal is to output an estimator $\widehat{\brho}$ which is close to $\rho_i$ in trace distance (with, say, 99\% probability).

There is a common two-step strategy for designing full state tomography algorithms, often achieving optimal sample complexities.
\begin{enumerate}
\item First, identify a natural ``hard'' distribution $p = (p_1, \ldots, p_{\ell})$ over the states of this family.
Ideally, solving tomography over a state drawn from this distribution should be every bit as hard as solving tomography over a worst-case state drawn from the family $\calF$.
\item Second, define the tomography algorithm to be the \emph{Pretty Good Measurement (PGM)} over this distribution of states. This is the measurement $M = \{M_1, \ldots, M_{\ell}\}$ defined as
\begin{equation}\label{eq:pgm-def}
    M_i = S^{-1/2} \cdot p_i \rho_i^{\otimes n} \cdot S^{-1/2}, \text{ for all $i \in [\ell]$,}
    \qquad \text{where } S = p_1 \rho_1^{\otimes n} + \cdots + p_{\ell} \rho_{\ell}^{\otimes n}.
\end{equation}
\end{enumerate}
In other words, given the input $\rho_i^{\otimes n}$, the algorithm will perform the measurement $M$, receive some outcome $\bj \in [\ell]$, and output the estimator $\widehat{\brho} = \rho_{\bj}$, which is hopefully close to $\rho_i$ in trace distance.

The PGM originates from the literature on the related problem of \emph{quantum state discrimination}.
An instance of the average-case variant of quantum state discrimination is specified by a set $\calL$ of labels $x$, a set of states $\{\sigma_x\}_{x \in \calL}$ labeled by $\calL$, and a distribution $q = (q_x)_{x \in \calL}$ on the states.
It is convenient to take the set of labels to be $\calL = \{1, \ldots, \ell\}$,
but later we will sometimes view the labels as specifying a real-valued parameter of the states that we would like to compute.
Given a state $\sigma_{\bx}$ drawn from this distribution, the goal is to output $\widehat{\bx} \in \calL$ such that $\widehat{\bx} = \bx$.
Hence, viewing $\bx$ as a parameter of the state $\sigma_{\bx}$, quantum state discrimination is the difficult task of \emph{exactly} computing this parameter (whereas in full state tomography, we are only asked to approximate a given parameter).
The algorithm's success probability is given by its average case success probability over the different input states, i.e.\
\begin{equation*}
    \E_{\bx \sim q}\big[\Pr[\widehat{\bx} = \bx \mid \sigma_{\bx}]\big].
\end{equation*}
If $P_{\mathrm{OPT}}(q)$ is the success probability of the optimal strategy and $\overline{P}_{\mathrm{OPT}}(q) = 1 - P_{\mathrm{OPT}}(q)$ is its failure probability (and similarly for $P_{\mathrm{PGM}}(q)$ and $\overline{P}_{\mathrm{PGM}}(q)$),
then Barnum and Knill~\cite{BK02} showed that $P_{\mathrm{PGM}}(q) \geq P_{\mathrm{OPT}}(q)^2$,
which in turn implies that
\begin{equation}\label{eq:pretty-good}
    \overline{P}_{\mathrm{PGM}}(q) \leq 2 \cdot \overline{P}_{\mathrm{OPT}}(q).
\end{equation}
This gives a formal sense in which the PGM is ``pretty good'' and justifies the PGM's role as the canonical default measurement in quantum state discrimination.

There is no analogous justification for why the PGM should serve the same role in full state tomography, yet in practice many algorithms are designed with the above two-step strategy.
Examples include Hayashi's algorithm for pure state tomography~\cite{Hay98}, in which the chosen distribution is the Haar measure over pure states;
the tomography algorithm of~\cite{SSW25} for rank-$r$ projector states (i.e.\ states of the form $\rho = P/r$, where $P$ is a rank-$r$ projector),  in which the chosen distribution is the Haar measure over projector states;
the algorithm of~\cite{PSTW25} for general mixed state tomography based on the random purification channel, in which the chosen distribution is the Hilbert-Schmidt measure over mixed states;
and one of the two mixed state tomography algorithms of~\cite{HHJ+16}.
In all these cases, the families of states being learned are highly symmetric, allowing one to choose distributions on these states which are themselves highly symmetric. This leads to tomography algorithms which are elegant to describe and clean to analyze.

As an aside, let us note one more property of the PGM in quantum state discrimination.
If the states $\sigma_x$ \emph{are} actually tensor product states, i.e.\ $\sigma_x = \tau_x^{\otimes n}$, for some set of states $\{\tau_x\}_{x \in \calL}$, then~\cite{HW12} observed that the PGM's failure probability can always be bounded by
\begin{equation}\label{eq:pgm-fail}
    \overline{P}_{\mathrm{PGM}}(q) \leq |\calL| \cdot F^n,
\end{equation}
where $F$ is the maximum pairwise fidelity between two distinct states $\tau_x \neq \tau_y$.
Crucially, this means that in the setting of quantum state discrimination, the PGM's failure probability \emph{decays exponentially} in the number of copies of the state one is given so long as this maximum fidelity is bounded away from 1, say $F \leq 1 - \epsilon$, a property we will revisit throughout this section.

\subsection{Hard distributions in shadow tomography}
\label{subsec:background-minimax}

Now let us attempt to design a shadow tomography algorithm using this strategy.
Let $O_1, \ldots, O_m$ be the observables we would like to estimate,
and let $\calF = \{\rho_1, \ldots, \rho_k\}$ be the finite family of quantum states that our unknown state is promised to be drawn from.
The first step in this strategy is to identify a natural ``hard'' distribution $p = (p_1, \ldots, p_k)$ over the states in $\calF$ for which the task of shadow tomography is every bit as hard as it is for a worst case state drawn from $\calF$.

What should this hard distribution look like?
It is perhaps difficult to answer this in a precise, formal sense, but we can list a couple of properties that it should intuitively satisfy.
First, if $\brho$ is a state drawn from this  distribution, then for every observable $O_i$, the observable value $\tr(O_i \cdot \brho)$ should have a reasonably large variance over the randomness in the state $\brho$.
If not, then one could predict $\tr(O_i \cdot \brho)$  by ignoring the input and outputting a fixed value independent of the state.
Second, for any $i \neq j$, we should want different observable values $\tr(O_i \cdot \brho)$ and $\tr(O_j \cdot \brho)$ to vary around their means in uncorrelated ways.
If instead observable values are correlated, then correctly predicting one observable would allow one to correctly predict another one, without having to take the second observable into account when designing one's measurement.

Designing distributions which satisfy these properties seems challenging, and to date, there is no canonical method to specify a hard distribution for a given set of observables.
Because of how multiple observables interact, hard distributions can be highly nonsymmetric and challenging to describe elegantly.
Surprisingly, this is still true when the observables themselves are highly structured, like the Pauli matrices $\{I, X, Y, Z\}^{\otimes k}$.

\paragraph{Worst-case quantum state discrimination.}
A similar issue has been faced before, in the literature on the \emph{worst}-case variant of quantum state discrimination.
An instance of this problem is specified only by a set of states $\{\sigma_x\}_{x \in \calL}$ (without an additional probability distribution).
Given a state $\sigma_{x}$ in this set, the goal is to output an $\widehat{\bx} \in \calL$ such that $\widehat{\bx} = x$.
The algorithm's success probability is given by its worst-case success probability over the different input states, i.e.\
\begin{equation*}
    \min_{x \in \calL}\big\{\Pr[\widehat{\bx} = x \mid \sigma_x]\big\}.
\end{equation*}
A natural way to design an algorithm for this problem is to identify a ``hard'' distribution $q = (q_1, \ldots, q_k)$ on these states and consider the PGM on this distribution. 
Exactly this approach was taken by Montanaro in the work~\cite{Mon19}, where he considered the performance of the PGM for worst-case state discrimination when defined on the uniform distribution $q = (\frac{1}{|\calL|}, \ldots, \frac{1}{|\calL|})$.
This is clearly not the \emph{hardest} distribution for most sets of states, but Montanaro's results showed that it was still ``hard enough'' to rederive some known bounds on worst-case state discrimination.
However, if one wants to build an algorithm based on the truly hardest choice of distribution $q$, then again one seems stuck, as we do not know a clean way to express exactly what this distribution is.

However, Harrow and Winter~\cite{HW12} observed a simple way to get around this issue.
They point out that a direct application of the minimax theorem shows that if $P_{\mathrm{OPT}}$ is the optimal probability of success in the worst-case setting, then
\begin{equation*}
P_{\mathrm{OPT}} = \min_q\big\{P_{\mathrm{OPT}}(q)\big\}.
\end{equation*}
Hence, to lower bound the optimal success probability in the worst case,
it suffices to lower bound the optimal success probability in the average case, for all prior distributions $q$.
This eliminates the need to identify a candidate hard distribution: instead, one is directly given a distribution $q$ and can design an algorithm specifically for this distribution $q$.
Using this, Harrow and Winter were able to take average-case lower bounds on the PGM shown by Barnum and Knill~\cite{BK02} and port them over to the worst-case setting.

\paragraph{Minimax for shadow tomography.}
Inspired by this, let us consider an average-case variant of shadow tomography in which the input state $\brho$, rather than being a worst-case element of the family $\calF$, is sampled from a known prior $p = (p_1, \ldots, p_{\ell})$ on the states in $\calF$.
Let us write $\bo = (\bo_1, \ldots, \bo_m)$ for its $m$ observable values, where $\bo_i = \tr(O_i \cdot \brho)$.
A shadow tomography algorithm can be specified by a POVM $M = \{M_{\widehat{o}}\}_{\widehat{o}}$, where the measurement outcomes $\widehat{o} = (\widehat{o}_1, \ldots, \widehat{o}_m) \in \R^m$ are the predictions for the $m$ observable values.
Its success probability can be written as
\begin{equation*}
    P_{\mathrm{succ}}(p, M) \coloneqq \Pr_{\brho, \widehat{\bo}}\big[\text{for all }i \in [m], \,|\widehat{\bo}_i - \bo_i| \leq \eps\big],
\end{equation*}
where $\widehat{\bo}$ is the result of measuring $\brho^{\otimes n}$ with $M$.
The set of all probability distributions $p$ is convex, as is the set of all POVMs $M$, and $\mathrm{P}_{\mathrm{succ}}(p, M)$ is linear in both $p$ and $M$.
As a result, the minimax theorem tells us that
\begin{equation*}
    \max_M \min_p \{P_{\mathrm{succ}}(p, M)\} = \min_p \max_M \{P_{\mathrm{succ}}(p, M)\}.
\end{equation*}
The left-hand side is minimized at a distribution $p$ which places all its probability mass on a single, worst case input state $\rho_i$.
Hence, this tells us that
\begin{equation*}
    \max_M \min_{i \in [\ell]} \{P_{\mathrm{succ}}(\rho_i, M)\} = \min_p \max_M \{P_{\mathrm{succ}}(p, M)\}.
\end{equation*}
The left-hand side is just the probability of success of the optimal measurement for shadow tomography in the traditional, worst-case setting, and the right-hand side is just the optimal probability of success in the average-case setting with prior $p$, minimized over all prior $p$.
Hence, just as in quantum state discrimination, if we can solve shadow tomography in the average-case setting, we can solve it in the worst-case setting.

\subsection{The PGM in shadow tomography}

The basic algorithm we will consider in this work is the natural plug-in estimator associated with the pretty good measurement.
Given $\brho^{\otimes n}$, where $\brho$ is drawn from the prior $p$, it acts as follows:
\begin{enumerate}
    \item Measure with the PGM associated with the prior $p$ on the states $\rho_1^{\otimes n}, \ldots, \rho_{\ell}^{\otimes n}$. Let $\bi \in [\ell]$ be the measurement outcome, and set $\widehat{\brho} = \rho_{\bi}$.
    \item Output $\widehat{\bo}_1 = \tr(O_1 \cdot \widehat{\brho})$, \ldots, $\widehat{\bo}_m = \tr(O_m \cdot \widehat{\brho})$.
\end{enumerate}
Surprisingly, this natural strategy has not previously been considered in the literature.
All previous attempts which have achieved sample complexity which is polylogarithmic in $m$ (as well as in $d$) have operated in the general framework proposed by Aaronson in his very first shadow tomography paper in~\cite{Aar18}, in which one maintains a hypothesis state $\widehat{\brho}$ which serves as one's guess for the true state, and one repeatedly updates $\widehat{\brho}$ by finding observables whose values it predicts incorrectly.
Most of the algorithms considered in these works apply to the online setting, in which one has to predict the value of the first observable before being given the second, and so on, and it is known that any algorithm in the online setting whose sample complexity is polylogarithmic in $m$ must also have at least a $\sqrt{\log(d)}$ dependence on the dimension.

A subset of our authors were originally convinced that the PGM strategy above \emph{must} be the right strategy for achieving sample complexity $O(\log(m))$, as it is mathematically elegant, and Math is typically fair.
Furthermore, presuming that it \emph{does} work, there is a clear path towards proving it,
which we formalize in the following conjecture.

\begin{conjecture}[PGM Decay Conjecture]
For any observable $O \in \C^{d \times d}$ with $\Vert O \Vert_{\infty} \leq 1$,
    \begin{equation*}
        \Pr\big[|\tr(O \cdot \widehat{\brho}) - \tr(O \cdot \brho)| > \eps\big] \leq O\big(C^{-n \epsilon^2}\big),
    \end{equation*}
    for some absolute constant $C >0$.
    Here, $\brho$ and $\widehat{\brho}$ are distributed as above.
\end{conjecture}

\noindent
If this conjecture were true, then applying it to the observables $O_1, \ldots, O_m$, we would have that, by a union bound,
\begin{equation*}
    \Pr\big[\text{exists $k \in [m]$ such that }|\widehat{\bo}_k - \bo_k| > \eps\big] \leq O(m \cdot C^{-n \epsilon^2}),
\end{equation*}
where we recall that $\bo_k = \tr(O_k \cdot \brho)$, for each $k \in [m]$ (and similarly for $\widehat{\bo}_k$).
Then if we set $n = O(\log(m)/\eps^2)$, the PGM will succeed with probability at least .99, as desired.
Note that as we have observed in \Cref{eq:pgm-fail} above, the PGM's failure probability \emph{does} decay exponentially as a function of the number of copies $n$ in the setting of quantum state discrimination, and so it is reasonable to hope that a similar behavior should occur here as well.
Unfortunately, the PGM Decay Conjecture actually turns out to be \emph{false}, and there are examples of families $\calF$, priors $p$, and observables $O$ in which the error probability of the PGM decays as $1/\poly(n)$ rather than our desired $1/\exp(n)$. In the next few subsections, we will discuss natural attempts at proving the PGM Decay Conjecture and why these fail, as these failures will illuminate features that our eventual algorithm will have.

Let us note one interesting feature of the above algorithm: the measurement does not depend on the observables we are trying to estimate;
these are only taken into account when the measurement outcome is classically post-processed.
In the traditional, worst-case setting, an algorithm for performing shadow tomography whose measurement does not depend on the observables is said to be solving the \emph{classical shadows} problem (named after the algorithm in the work~\cite{HKP20}), and it is known that classical shadows algorithms cannot succeed with $n = O(\log(m)/\eps^2)$ copies. Indeed, the work of Grier, Pashayan, and Schaffer~\cite{GPS24} shows that classical shadows algorithms require $\widetilde{\Omega}(\sqrt{d})$ copies, even to estimate a single observable.
However, these bounds only apply to the worst-case setting, and it is consistent with our state of knowledge that a $O(\log(m)/\eps^2)$-copy classical shadows algorithm could exist in the average case setting (indeed, we conjecture that this is the case).
As we have shown, this would imply an $O(\log(m)/\eps^2)$-copy algorithm for shadow tomography in the worst case. How could this be, given that we have worst-case lower bounds against classical shadows algorithms?
The answer is that the worst-case algorithm this reduction would produce is equal to the average-case algorithm for the prior $p$ which is hardest for the observables $O_1, \ldots, O_m$, and this prior would presumably encode the observables $O_1, \ldots, O_m$.

\subsection{A mean-squared error bound for the PGM} \label{subsec:pgm-mse}

The first difficulty one encounters when trying to prove the PGM Decay Conjecture is that very little work has been done on the PGM, outside of the literature on quantum state discrimination.
Results from this literature will help us understand with what probability the PGM will \emph{perfectly} identify the input state (or, perhaps, some feature of the input state),
but in our setting we only want the PGM to \emph{approximately} identify the input state, in a very weak sense (i.e.\ it should find a state which is close to the input on the observable $O$).
As a result, existing tools do not seem to apply to our setting.
The one exception we are aware of is the work of Mishra, Lami, and Wilde~\cite{MLW26},
who proved bounds on the ability of the PGM to approximately predict some feature of the input state.

\subsubsection{The Mishra--Lami--Wilde bound}
\label{sec:mishra-lami-wilde}

To state their result, suppose we have a set of states $\{\sigma_x\}_{x \in \calL}$, where the label set $\calL$ is a subset of $[-1, 1]$, and let $q = (q_x)_{x \in \calL}$ be a prior on this set.
We think of the label $x \in \calL$ as encoding some parameter of the state $\sigma_x$ that we would like to predict;
given a state $\sigma_{\bx}$ drawn from the prior $p$, the goal of an algorithm is to output a value $\widehat{\bx} \in \R$ such that $\widehat{\bx}$ is close to $\bx$.
(This is in contrast with the traditional setting of quantum state discrimination, where we only care about identifying $\bx$ exactly.)
The \emph{mean-squared error (MSE)} of the algorithm is given by
\begin{equation*}
    \E_{\bx \sim p} \big[\E_{\widehat{\bx}} (\widehat{\bx} - \bx)^2\big].
\end{equation*}
Let us write $\mathrm{MSE}_{\mathrm{OPT}}$ for the minimal MSE achievable by an algorithm on the prior $p$, and let us write $\mathrm{MSE}_{\mathrm{PGM}}$ for the MSE that the PGM achieves.
Then the main result of~\cite{MLW26} is that
\begin{equation}\label{eq:mlw}
\mathrm{MSE}_{\mathrm{PGM}} \leq 2 \cdot \mathrm{MSE}_{\mathrm{OPT}}.
\end{equation}
This is the analogue of the Barnum and Knill~\cite{BK02} bound from \Cref{eq:pretty-good} in the setting of estimating a parameter of a state, and the proofs of these two statements are similar.

Let us now apply this bound to derive a consequence related to the PGM Decay Conjecture.
Let $O \in \C^{d \times d}$ be an observable with $\Vert O \Vert_{\infty} \leq 1$ that we wish to estimate, and let $\calF = \{\rho_1, \ldots, \rho_{\ell}\}$ be our family of states.
For simplicity, we will assume that the observable values $\tr(O \cdot \rho_1), \ldots,\tr(O \cdot \rho_{\ell})$ are distinct,
and we will define the label set to be $\calL = \{\tr(O \cdot \rho_i)\}_{i \in [\ell]}$.
In addition, we will set the states and prior as follows: for the observable value $o \in \calL$ corresponding to the state $\rho_i \in \calF$, we set $\sigma_o = \rho_i^{\otimes n}$ and $q_o = p_i$.
Now we can apply the main result of \cite{MLW26}, which says that \Cref{eq:mlw} holds in this setting.

To interpret what this means, let us first understand what the PGM looks like for this ensemble.
First, the average of the states is
\begin{equation*}
    S
    = \sum_{o \in \calL} q_{o} \cdot \sigma_{o}
    = \sum_{i=1}^{\ell} p_i \cdot \rho_i^{\otimes n}.
\end{equation*}
In addition, for each $o \in \calL$, the POVM element associated with the label $o$ is given by
\begin{equation*}
    M_{o} = S^{-1/2} q_{o} \sigma_{o} S^{-1/2}
    = S^{-1/2} p_i \rho_i^{\otimes n} S^{-1/2},
\end{equation*}
where $\rho_i$  is the state in $\calF$ with observable value $o$.
This is identical to the PGM that we get from the ensemble $\{\rho_i^{\otimes n}\}$ with prior $p$, except with the output label classically post-processed via the mapping $i \mapsto o_i$.
The fact that the PGM is well-behaved under classical post-processing was previously observed and applied in the prior work of~\cite{KT08,BJL23,LdW25} and the independent and concurrent work of~\cite{AG26}, who referred to it as the PGM being ``closed under coarse-graining''.

Next, let us understand the optimal strategy (in terms of the mean-squared error) for estimating the observable  $O$.
One strategy for doing this, though perhaps not precisely \emph{the} optimal one, is to simply measure the observable $O$ on each copy of the input state independently and then average the results. 
To analyze this, let us first condition on the input state $\rho \in \calF$, and write $o$ for its observable value.
Write $\widehat{\bx}^1, \ldots, \widehat{\bx}^n$ for the results of the $n$ measurements, and let $\widehat{\bx}$ be their average.
Then each $\widehat{\bx}^i$ is an unbiased estimator for the true observable value, meaning $\E[\widehat{\bx}^i] = o$, and each $\widehat{\bx}^i$ falls within the interval $[-1, 1]$ (since $\Vert O \Vert_{\mathrm{op}} \leq 1$).
This means that $\Var[\widehat{\bx}^i]\leq 1$ for all $i$, and so $\Var[\widehat{\bx}] \leq 1/n$, since $\widehat{\bx}$ is their average.
As a result, we can bound the MSE of this algorithm by
\begin{equation*}
    \E_{\brho \sim p} \big[\E_{\widehat{\bx}} (\widehat{\bx} - \bo)^2\big]
    =  \E_{\brho \sim p} \big[\Var[\widehat{\bx} \mid \brho]\big]
    \leq \E_{\brho \sim p}[1/n] = 1/n.
\end{equation*}
This then implies that $\mathrm{MSE}_{\mathrm{OPT}} \leq 1/n$ as well, since this is the MSE of the algorithm with minimal MSE.
Hence, by \Cref{eq:mlw}, we have
\begin{equation}\label{eq:mse-for-one-observable}
\E_{\brho \sim p} \big[\E_{\widehat{\bo}} (\widehat{\bo} - \bo)^2\big]
=
\mathrm{MSE}_{\mathrm{PGM}} \leq 2 \cdot \mathrm{MSE}_{\mathrm{OPT}} \leq 2/n.
\end{equation}
Then a simple application of Markov's inequality tells us that $|\widehat{\bo} - \bo| \leq \epsilon$
except with probability $2/(n \epsilon^2)$.
This is similar to our desired bound in the PGM Decay Conjecture, though much weaker than it: we are hoping for a bound which is inverse exponential in $n \epsilon^2$, whereas this is only inverse polynomial.

For later, let us briefly note that the strategy from above satisfies a stronger form of concentration than just a mean-squared error bound.
In particular, since $\widehat{\bx}$ is an average of $n$ independent random variables contained in $[-1, 1]$, a Hoeffding bound will tell us that $|\widehat{\bx} - \bo| \leq \epsilon$ except with probability $2 e^{-n\epsilon^2/2}$.

\subsubsection{A new shadow tomography guarantee}

Even though this bound doesn't achieve the PGM Decay Conjecture, it still allows us to say something new and interesting about shadow tomography. 
Let $O_1, \ldots, O_m$ be the $m$ observables we would like to estimate,
and let $\widehat{\bo}_1, \ldots, \widehat{\bo}_m$ be the predictions made by the PGM-based shadow tomography algorithm.
Then applying \Cref{eq:mse-for-one-observable} to each observable separately, we have that
\begin{equation}\label{eq:mse-by-linearity}
    \E_{\brho\sim p} \E_{\widehat{\bo}} \Vert \widehat{\bo} - \bo \Vert_2^2
    = \E_{\brho \sim p} \E_{\widehat{\bo}}\sum_{i=1}^m (\widehat{\bo}_i - \bo_i)^2
    = \sum_{i=1}^m \E_{\brho \sim p}  \E_{\widehat{\bo}}(\widehat{\bo}_i - \bo_i)^2
    \leq \sum_{i=1}^m 2/n = 2m/n.
\end{equation}
We will try to interpret what this bound means shortly.
First, though, let us observe that although this is an average-case guarantee, one can use a similar minimax argument as before to show that it implies a worst-case guarantee as well.
In particular, doing so implies the following statement.

\begin{proposition}[Mean-squared error bound for shadow tomography]\label{prop:mse}
    For any observables $O_1, \ldots, O_m \in \C^{d \times d}$ with $\Vert O_i \Vert \leq 1$,
    there exists an algorithm such that given $n$ copies of an unknown (worst-case) state $\rho \in \C^{d \times d}$, the algorithm outputs estimates $\widehat{\bo}_1, \ldots, \widehat{\bo}_m \in [-1, 1]$ such that
    \begin{equation*}
        \E_{\widehat{\bo}} \Vert \widehat{\bo} - o \Vert_2^2 \leq \frac{2m}{n},
    \end{equation*}
    where $o_i = \tr(O_i \cdot \rho)$ for all $1 \leq i \leq m$.
\end{proposition}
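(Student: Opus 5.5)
The plan is to derive \cref{prop:mse} from the average-case bound \eqref{eq:mse-by-linearity} by a minimax argument, in the same spirit as \cref{subsec:background-minimax}. The payoff is now the loss $L(\rho, M) = \E_{\widehat{\bo} \sim M(\rho^{\otimes n})} \Vert \widehat{\bo} - o(\rho)\Vert_2^2$ instead of a success probability.

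First, we reduce to a finite, compact setting. We discretize the outcome space: restrict POVMs to outcomes in a finite $\eta$-net $N \subset [-1,1]^m$, for instance a grid. We also discretize the states to a finite $\eta$-net $\calF$ of density matrices in trace distance, which keeps every $\tr(O_i\rho)$ within $\eta$ of its value at the net point. With these choices, the set of POVMs with outcomes in $N$ on $(\C^d)^{\otimes n}$ is convex and compact, and the set of priors on $\calF$ is a simplex. The loss $L(p, M) = \E_{\brho\sim p} L(\brho, M)$ is bilinear: for fixed $M$ it is linear in $p$, and for fixed $p$ it equals $\sum_{\widehat{o}} \tr(M_{\widehat{o}} \sum_i p_i \|\widehat{o} - o(\rho_i)\|^2 \rho_i^{\otimes n})$, which is linear in $M$. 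Sion's (or von Neumann's) minimax theorem therefore gives $\min_M \max_{\rho \in \calF} L(\rho, M) = \max_p \min_M L(p,M)$. We use here that a maximum of a linear function over the simplex is attained at a vertex.

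Second, we bound the right-hand side. For each prior $p$ on $\calF$, the PGM plug-in estimator gives $L(p, \mathrm{PGM}) \le 2m/n$ by \eqref{eq:mse-by-linearity}. That bound comes from the Mishra--Lami--Wilde inequality \eqref{eq:mlw} together with the per-copy averaging strategy, applied observable by observable. Its outputs $\tr(O_i\rho_{\bj})$ lie in $[-1,1]$, but they may not lie in $N$. Rounding them to the nearest net point changes each coordinate by at most $\eta$, so the loss increases by only $O(m\eta)$. Hence $\max_p \min_M L \le 2m/n + O(m\eta)$, and there is a single measurement $M$ with $L(\rho, M) \le 2m/n + O(m\eta)$ for every $\rho \in \calF$.

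Third, we pass to arbitrary $\rho$ and remove the slack. For a general $\rho$, take the nearest $\rho' \in \calF$. The outcome distributions of $M$ on $\rho^{\otimes n}$ and on $\rho'^{\otimes n}$ differ in total variation by at most $n\eta$. Since the squared loss is bounded by $4m$, and $o(\rho)$ and $o(\rho')$ differ coordinatewise by at most $\eta$, we get $L(\rho,M) \le 2m/n + O(mn\eta + m\eta)$. This is where I expect the main obstacle: the statement asks for exactly $2m/n$, so the $\eta$-dependence must be eliminated. The approach I would try first is a compactness argument. Take $\eta_k \to 0$ and let $M^{(k)}$ be the corresponding measurements. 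Rather than working with $M^{(k)}$ directly, view each as a POVM-valued measure on the compact set $[-1,1]^m$, equivalently a positive operator-valued measure with total mass $I$. This space is weak-* compact, so along a subsequence the $M^{(k)}$ converge to a limiting POVM $M$. For each fixed $\rho$, the loss $L(\rho,\cdot)$ is the integral of a continuous bounded function against $\tr(\cdot\,\rho^{\otimes n})$, so it is continuous in this topology. Passing to the limit then gives $L(\rho, M) \le 2m/n$ for every $\rho$. The remaining point to check is that the MLW bound \eqref{eq:mlw} itself holds for finite priors without any extra discretization loss, which it does as stated.
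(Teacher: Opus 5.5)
Your proposal is correct and follows the paper's route. The Mishra--Lami--Wilde bound, combined with the per-copy averaging strategy and summed over observables by linearity, gives the average-case bound $2m/n$ for the PGM plug-in estimator under every prior, and the minimax theorem turns this into a single worst-case measurement. The paper only invokes ``a similar minimax argument'' here, so your discretization of states and outcomes, and the weak-* compactness limit that removes the $\eta$ slack, correctly fill in details the paper leaves implicit. You could also skip the rounding step by taking the outcome set to be $\{o(\rho') : \rho' \in \calF\}$, which is independent of the prior and already contains every PGM output.
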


To understand this bound, note that Markov's inequality implies that
\begin{equation}\label{eq:just-markoved}
    \Vert \widehat{\bo} - o \Vert_2^2 \leq O(m/n)
\end{equation}
with probability at least .99. If, for example, we set $n = O(1/\eps^2)$ to be large enough, then when this occurs, we have
\begin{equation*}
    \Vert \widehat{\bo} - o \Vert_2^2 \leq \frac{1}{100} m \eps^2.
\end{equation*}
An averaging argument then implies that for at least 99\% of the observables, we have $|\widehat{\bo}_i - o_i| \leq \epsilon$, whereas at most $1\%$ of them violate this.
More generally, given $O(\frac{1}{\alpha\eps^2})$ copies of $\rho$, we have an algorithm which, with probability $\geq 0.9$ (amplifiable through the median trick), accurately estimates a $1-\alpha$ fraction of the observables to within $\eps$ error.
This already is a non-trivial result: Aaronson considers this setting in \cite{Aar07}, and proposes a PAC learning approach to solving it.
His algorithm uses $\widetilde{O}(\frac{\log(d)}{\eps^2 \alpha})$ copies; we recover this complexity without the dimension dependence.
We note that a similar guarantee, with a similar high level proof strategy, had previously appeared in the literature on quantum random access codes (QRACs) due to a result of~\cite{LdW25}, which gives an algorithm for extracting most, but not all, of the bits of a QRAC.

Unfortunately, though, we do not necessarily know \emph{which} of the observables are incorrect; if we did, we could simply recurse on these ``bad'' observables in order to achieve a $\poly(\log(m),1/\epsilon)$ sample complexity.
Without knowing which are bad, then, the only obvious way to use this guarantee to estimate all $m$ observables is to apply \Cref{eq:just-markoved} with $n = O(m/\epsilon^2)$ copies, giving us
\begin{equation*}
    \Vert \widehat{\bo} - o \Vert_2^2 \leq \epsilon^2.
\end{equation*}
For this to be true, it must be that $|\widehat{\bo}_i - o_i| \leq \epsilon$ for all $1 \leq i \leq m$.
However, even trivial algorithms for shadow tomography can achieve $n = O(m/\epsilon^2)$ copies, making this sample complexity not so interesting.

All that said, the bound in \Cref{prop:mse} is still quite interesting, because we do not know any way to achieve this sort of mean-squared error guarantee without going through the PGM and the minimax theorem.
Early in the project, we viewed this as an encouraging sign that the PGM was the ``right'' approach to shadow tomography.
Not only that, the bound in \Cref{prop:mse} is the optimal bound that one can show on the mean-squared error of a shadow tomography algorithm.
To see this, we first note that this statement for general $m$ reduces to the case when $m = 1$.
This is because if we have an example of a prior and an observable $O$ such that any algorithm must satisfy $\E(\widehat{\bo}- \bo)^2 \geq \Omega(1/n)$, then we could simply take our $m$ observables to be $O_1 = \cdots = O_m = O$,
and then any algorithm must satisfy $\E \Vert \widehat{\bo} - \bo \Vert_2^2 \geq \Omega(m/n)$ by linearity.
Then, in the $m = 1$ case, an example of a prior and observable for which any algorithm must satisfy $\E(\widehat{\bo} - \bo)^2 \geq \Omega(1/n)$ is given by
\begin{align*}
    p_1 = p_2 = \frac{1}{2},\quad
    \rho_1 &= \Big(\frac{1}{2} + \frac{1}{\sqrt{n}}\Big) \cdot\ketbra{0} + \Big(\frac{1}{2} - \frac{1}{\sqrt{n}}\Big) \cdot\ketbra{1},\\
    \rho_2 &= \Big(\frac{1}{2} - \frac{1}{\sqrt{n}}\Big) \cdot\ketbra{0} + \Big(\frac{1}{2} + \frac{1}{\sqrt{n}}\Big) \cdot\ketbra{1}, \quad
    O = \ketbra{0}.
\end{align*}
Although we will not prove that the bound holds in this case, intuitively it is because $n$ samples are needed to estimate the bias of a coin up to error $1/\sqrt{n}$.

\subsection{Negative results for the PGM}

Following our success with the mean-squared error bound, we made various attempt at deriving $\poly\log(m)$-copy shadow tomography algorithms using the PGM, either by establishing the PGM Decay Conjecture or by other means.
In this subsection, we give an overview of some of these attempts, explain why they fail, and describe what they say about what an eventual PGM-based shadow tomography algorithm must look like.

\subsubsection{Bootstrapping the mean-squared error bound?}\label{sec:bootstrap}

Given nonnegative weights $w_1, \ldots, w_m$, consider the weighted mean-squared error given by
\begin{equation*}
    \E_{\brho \sim p} \sum_{i=1}^m w_i \cdot (\widehat{\bo}_i - \bo_i)^2.
\end{equation*}
If $w_1 + \cdots + w_m = m$, then the same argument that established \Cref{eq:mse-by-linearity} also implies that the weighted MSE of the PGM is at most $2m/n$.
Based on this, one strategy we tried was to ``bootstrap'' the seemingly-weak weighted mean-squared error bounds into an algorithm with $\polylog(m)$ copy complexity by repeatedly performing the PGM and adjusting the weights we use when calculating its weighted mean-squared error.
For example, suppose we run the PGM on our state with $n = O(1/\epsilon^2)$ copies.
As we have seen, it will correctly estimate 99\% of the observables up to error $\epsilon$, with reasonably high probability.
Now, suppose there \emph{was} a way, using our knowledge of the prior distribution $p$ and the values that the PGM output, to infer which of these predictions we have higher confidence in, and which of these predictions we have lower confidence in.
Then we could perform the PGM again, and this time when we analyze its mean-squared error, we could place more weight on those observables whose predictions we are less confident about, forcing the output of the second PGM to be more accurate on these observables than the first PGM was.
We could then repeat this until we are confident about all the observables.

\paragraph{A Pauli counterexample.}
Unfortunately, there is a simple counterexample which shows that this approach, at least as stated here, cannot work.
In this counterexample, let us suppose that $d$ is a power of $2$, and let our observables $O_1, \ldots, O_m$ be $m$ pairwise anticommuting Pauli matrices.
Let the family of states be $\calF=\{\rho_1, \ldots, \rho_m\}$, where for each $1 \leq i \leq m$,
\begin{equation*}
\rho_i = \frac{I + O_i}{d},
\end{equation*}
and let the prior be $p = (\tfrac1m,\ldots,\tfrac1m)$.
Then
\begin{equation*}
    \tr(O_j \cdot \rho_i) =
    \left\{\begin{array}{rl}
            1 & \text{if $i = j$,}\\
            0 & \text{otherwise}
            \end{array}\right.
\end{equation*}
Thus, performing shadow tomography on these $m$ observables with the error parameter set to a small constant, say $\eps = 0.1$, is equivalent to identifying which state one is given.
Now, suppose that after the first step of our bootstrapping scheme, the PGM returns $\widehat{\bo}_1, \ldots, \widehat{\bo}_m$ satisfying
\begin{equation}\label{eq:step-one-of-the-bootstrap}
    \Vert \widehat{\bo} - \bo \Vert_2^2 \leq O(m/n) = O(m\epsilon^2) = (\text{a very small constant}) \cdot m.
\end{equation}
To achieve this bound, it could be that the PGM simply outputs $\widehat{\bo}_1 = \cdots = \widehat{\bo}_m = 0$, which would satisfy the even stronger bound
\begin{equation*}
    \Vert \widehat{\bo} - \bo \Vert_2^2 = 1.
\end{equation*}
This is a possible output if the only property we use of the first PGM is that its outputs satisfy \Cref{eq:step-one-of-the-bootstrap}. But this tells us nothing interesting, as we could have output these values even without looking at the input state, and so we cannot use this to bootstrap any further.

Incidentally, another reason that bootstrapping fails on this example, at least for the particular form of bootstrapping that we consider here, is that if one only performs PGM measurements, then solving this example requires performing entangled measurements across $n = \Omega(\log(m))$ copies at a time (so, $O(1/\eps^2)$ copies at a time does not suffice).
This is because it can be shown that when the PGM is performed on this example with $t = o(\log(m))$ copies, the result will essentially be a uniformly random label in $[m]$, except with a slight bias towards the true label $i$,
in which case the PGM would have to be repeated many times before one could extract the true label.

\subsubsection{A high probability version of the Barnum--Knill bound?}
\label{sec:general-quantum-case}

In the previous section, we saw that the PGM inherits the mean-squared error guarantee of the optimal algorithm, with a factor-2 loss.
What if the optimal algorithm exhibits stronger forms of concentration? Will the PGM inherit these too?

In the setting of \Cref{sec:mishra-lami-wilde}, in which we have a prior $(q_x)_{x \in \calL}$ on states $\{\sigma_x\}_{x \in \calL}$, let us consider the probability that the algorithm succeeds at estimating the parameter $\bx$ of the input state $\sigma_{\bx}$, meaning that its output satisfies $|\widehat{\bx} - \bx| \leq \epsilon$.
Let us write $P_{\mathrm{OPT}}(q, \epsilon)$ for the success probability of the optimal measurement (and $\overline{P}_{\mathrm{OPT}}(q, \epsilon) = 1 - P_{\mathrm{OPT}}(q, \epsilon)$ for its failure probability), and define $P_{\mathrm{PGM}}$ and $\overline{P}_{\mathrm{PGM}}$ similarly. Could it be that, say
\begin{equation}\label{eq:conjectured-bk-analogue}
    P_{\mathrm{PGM}}(q, \epsilon) \geq P_{\mathrm{OPT}}(q, \epsilon)^2,
    \quad \text{and therefore} \quad
    \overline{P}_{\mathrm{PGM}}(q,\epsilon) \leq 2 \cdot \overline{P}_{\mathrm{OPT}}(q, \epsilon)?
\end{equation}
This would be a natural analogue of the Barnum--Knill bound in \Cref{eq:pretty-good}, except to the problem of approximately computing a parameter, rather than exactly computing it.
And a bound of this form would be enough to prove our PGM Decay Conjecture.
This is because as noted in \Cref{sec:mishra-lami-wilde} above, there is an algorithm whose probability of failure can be bounded by $2e^{-n\epsilon^2/2}$, and so the probability of failure of the optimal algorithm can as well. 
Therefore, \Cref{eq:conjectured-bk-analogue} would imply that $\overline{P}_{\mathrm{PGM}} \leq 4 e^{-n\epsilon^2/2}$, as claimed by the PGM Decay Conjecture.

\ignore{
\subsubsection{The classical special case}

How reasonable is \Cref{eq:conjectured-bk-analogue}?
As it turns out, it \emph{does} hold (or at least a small variant of it does) in the classical special case, which is the case when all the states $\sigma_x$ are diagonal in the standard basis.
In this case, it is known (see, for example, XXX) that the problem is equivalent to a problem involving probability distributions.
In the distribution version, for each $x \in \calL$, there is a discrete probability distribution $a_x = (a_{x,1}, \ldots, a_{x, d})$ on the set $[d]$.
We draw $\bx \sim q$, and then the algorithm is provided a sample $\bi \sim a_x$.
Given this, the algorithm outputs an estimate $\widehat{\bx}$, and the goal is for $|\widehat{\bx} - \bx| \leq \epsilon$.
Given $\bi \in [d]$, an algorithm can reason about the parameter it is trying to estimate by inspecting the Bayesian posterior distribution $\Pr[x \mid \bi]$, which will tell it the likelihood that the sample $\bi$ came from a given distribution $a_x$.
In this setting, then, the PGM is the algorithm which performs \emph{posterior sampling}, i.e.\ sampling its output $\widehat{\bx}$ from the posterior distribution.

To compare the PGM to the optimal algorithm, suppose that on input $\bi \in [d]$, the optimal algorithm deterministically outputs a value $\widehat{y}_{\bi}$. (That the optimal algorithm is deterministic in its input is without loss of generality.)
Let us write $s_{\bi}$ for the success probability of the optimal algorithm, condition on receiving input $\bi \in [d]$.
This means that if we consider the interval $I_{\bi} = [\widehat{y}_{\bi} - \epsilon, \widehat{y}_{\bi} + \epsilon]$, then $\bx$ falls within this interval with probability $s_{\bi}$; in other words, the posterior distribution $\Pr[\bx \mid \bi]$ places a total probability of $s_{\bi}$ on points $\bx$ within this interval.
Since the PGM samples from the posterior distibution, this means that the PGM's output $\widehat{\bx}$ also falls within the interval $I_{\bi}$ with probability equal to $s_{\bi}$.
So, conditioned on the value of $\bi$, the probability that \emph{both} $\bx$ and $\widehat{\bx}$ fall within $I_{\bi}$ is $s_{\bi}^2$. And when this happens, we always have that $|\widehat{\bx} - \bx|\leq2 \epsilon$. 
Averaging over all $\bi \sim p$, this tells us that the probability that the PGM's output is within $2\epsilon$ of the true parameter $\bx$ is at least
\begin{equation*}
    P_{\mathrm{PGM}}(q,2\epsilon) \geq \E_{\bi}[s_{\bi}]^2 \geq (\E_{\bi} [s_{\bi}])^2 = P_{\mathrm{OPT}}(q,\epsilon)^2,
\end{equation*}
where in the first step we used Jensen's inequality.
This shows that a variant of \Cref{eq:conjectured-bk-analogue} is true in the classical case, with a proof that is similar to the Barnum--Knill proof of \Cref{eq:pretty-good}.

\subsubsection{The general quantum case}
}

\paragraph{A quantum counterexample.}
As it turns out, \Cref{eq:conjectured-bk-analogue} is true in the classical case, when the states are all diagonal in the standard basis, via a proof that is similar to the Barnum--Knill proof of \Cref{eq:pretty-good}.
We attempted to generalize this argument to the quantum case, but we found doing so challenging.
Eventually, we discovered why: there is a strong counterexample, in which the optimal success probability is exactly 1, but the PGM does significantly worse.
Suppose we are in $d$-dimensional space, with $d$ odd, and have chosen to index the standard basis as $\ket{0}, \ldots, \ket{d-1}$.
Consider the family of pure states given by
\begin{equation*}
    \ket{v_i} = \frac{1}{\sqrt{2}} \cdot \ket{i~\text{(mod d)}} + \frac{1}{\sqrt{2}} \cdot \ket{i + 1 ~\text{(mod d)}},
    \qquad \text{for $i \in \{0, \ldots, d-1\}$}.
\end{equation*}
We will assume the uniform prior on this set of states.
Given a state $\ket{v_i}$ from this family, the ``parameter'' we want to estimate of this state is given by the function
\begin{equation*}
    f(i) =
    \left\{\begin{array}{ll}
    \frac{2}{d-1} \cdot i, & \text{if }i \in \{0, \ldots, (d-1)/2\},\\
    \frac{2}{d-1} \cdot (d-i), & \text{if }i \in \{(d-1)/2 + 1, \ldots, d-1\}.
    \end{array}\right.
\end{equation*}
This function increases linearly from 0 to 1 as $i$ goes from $0$ to $(d-1)/2$, and then it decreases from there linearly back to 0.
We have chosen to have this particular function so that the parameter of $\ket{v_i}$ changes slowly as we increase $i$, and the two different regimes are chosen so that the pure state
\begin{equation*}
    \ket{v_{d-1}} = \frac{1}{\sqrt{2}} \ket{d-1 ~\text{(mod 2)}} + \frac{1}{\sqrt{2}}\ket{d ~\text{(mod 2)}}
    = \frac{1}{\sqrt{2}} \ket{d-1} + \frac{1}{\sqrt{2}}\ket{0}
\end{equation*}
has a parameter which is close to that of its two neighbors $\ket{v_d}$ and $\ket{v_0}$.
Note that here we have decided to separate the way we index into the states from the parameters of the states we wish to estimate; this is done for convenience.

Given a state $\ket{v_{\bi}}$ sampled from this prior, suppose we want to estimate its parameter up to error $\eps = 2/(d-1)$.
Then there is an algorithm for this task which succeeds with probability 1: simply measure the state in the standard basis, let $\bj \in \{0, \ldots, d-1\}$ be the outcome, and output $f(\bj)$.
This will always either output the correct parameter value, or a value which is within $\eps = 2/(d-1)$ of it,
as each state only has support on $i$ and $i+1$ (modulo $d$), and so this very simple algorithm will always succeed.

But how does the PGM do? It turns out that for a parameter $\epsilon'$, its output will be farther from the true value by at least $\epsilon'$ with probability that scales like $1/\poly(d \epsilon')$.
So for $\epsilon'$ any constant multiple of $\epsilon=2/(d-1)$, its output will be off by at least $\epsilon'$ with probability $\Omega(1)$, which rules out \Cref{eq:conjectured-bk-analogue} as well as versions of it which are weaker by constant factors.
We will not prove this, though it is not  a difficult argument to carry out, but let us sketch out a few of its parts for intuition's sake.
The PGM has measurement operators $\{M_i\}_{i \in \{0, \ldots, d-1\}}$, which are defined via
\begin{align*}
    M_i = \ketbra{w_i}, \quad \text{where } \ket{w_i} = S^{-1/2} \cdot \ket{v_i} \text{ and } S = \sum_{i=0}^{d-1} \frac{1}{d} \cdot \ketbra{v_i}.
\end{align*}
Computing $S^{- 1/2}$ is fairly easy, because $S$ is a circulant matrix, and it is well-known that the eigenvectors of a circulant matrix are given by the Fourier vectors
\begin{equation*}
    \ket{\psi_i} = \frac{1}{\sqrt{d}} \sum_{j=0}^{d-1} \omega^{i j} \cdot \ket{j},
\end{equation*}
where $\omega$ is the $d$-th root of unity.
The vectors $\ket{\psi_i}$ with $i$ small are ``local'' in the sense that they put very similar coefficients on standard basis vectors $\ket{j}$ which have similar values of $j$, whereas when $i$ is large, these vectors are ``nonlocal'', and may put very different coefficients on two nearby basis vectors.
The vectors $\ket{v_i}$ are similarly ``local'', and put most of their amplitude on the Fourier vectors $\ket{\psi_i}$ with small $i$,
and the matrix $S$ also has most of its eigenweight placed on these ``small $i$'' Fourier vectors.
Inverting this, $S^{-1/2}$ will therefore place a \emph{large} amount of weight on these ``large $i$'' Fourier vectors, and so multiplying by it will serve to amplify the weight on the ``large $i$'' Fourier vectors. This means that the $\ket{w_i} = S^{-1/2} \cdot \ket{v_i}$ vectors will have their mass much more spread out than the $\ket{v_i}$ vectors in the standard basis, and so measuring a vector $\ket{v_i}$ with the PGM will often result in a faraway measurement outcome $M_j$.

\subsubsection{Making use of the tensor product structure?}
\label{sec:tensor}

The counterexample in the previous section shows that we probably cannot hope to prove the PGM Decay Conjecture by proving a statement about the PGM's success probability on a generic, unstructured ensemble of states, and then use it to derive the PGM Decay Conjecture as a special case (in the same way that we could derive \Cref{eq:mse-for-one-observable} as a special case of \Cref{eq:mlw}).
Instead, we would have to use specific properties about our particular ensemble, namely the fact that our ensemble consists of $n$-fold tensor products of $d$-dimensional states (whereas the counterexample in \Cref{sec:general-quantum-case} is not a tensor product). Could this be the missing piece?

\paragraph{A single qubit counterexample.}
We were able to generate numerical evidence suggesting that the answer was still no, and that in fact the PGM sometimes has subexponential decay, perhaps as bad as $e^{-\sqrt{n}}$.
This evidence was based on an ensemble of $\ell$ qubit pure states $\ket{v_0}, \ldots, \ket{v_{\ell}}$ of the form
\begin{equation*}
\ket{v_i}=
\cos(\theta_i) \cdot \ket{0} + \sin(\theta_i) \cdot \ket{1},
\end{equation*}
with the $\ell$ angles, very roughly, being evenly spaced between $0$ and $\pi/2$ (so that $\theta_i \approx i \cdot \pi/(2(\ell-1))$).
We considered various ways to modify the PGM to thwart this ensemble, without success.
These included adding noise to the ensemble, with the hope of combating the issue of $S$ being poorly conditioned, as well as considering other variants of the PGM. To expand on this, the PGM is defined by computing the $S$ matrix, and then setting the measurement operator $M_i$ equal to $q_x \sigma_x$ ``divided by'' $S$.
There is, however, no unique way to define what it means to divide one matrix $A$ by another matrix $B$.
The traditional definition of the PGM takes this to mean $B^{-1/2} A B^{-1/2}$, but there is another natural way to divide matrices based on the Sylvester superoperator,
and yet another based on the derivative of the matrix logarithm. This last one corresponds to what is known as the ``integral PGM'', which was first considered in the work~\cite{CL26a} and has more recently been applied in~\cite{CL26b}.
However, none of these seemed to fare any better on this family of counterexamples.

\subsection{Our use of artificial intelligence} \label{subsec:ai}

Eventually, we were able to use artificial intelligence to make significant progress on this problem.
First, we were able to use GPT 5.5 to give a counterexample to the PGM Decay Conjecture, one which shows that the PGM's error probability can decay as slowly as $1/\poly(n)$.
(In fact, current models seem to be able to produce many such counterexamples.)
Next, using GPT 5.6, we were able to produce a proof showing that shadow tomography can be solved with $O(\log^3(m)/\eps^2)$ copies.
This proof uses a good number of the ideas here---the minimax theorem, the PGM, the result of~\cite{MLW26}, etc.---but frustratingly, the A.I.\ figured out these ideas on its own, without any human input.
This GPT proof is based on a more sophisticated version of the bootstrapping idea from \Cref{sec:bootstrap}.
Eventually, we were able to get GPT to produce a $O(\log^2(m)/\epsilon^2)$ copy shadow tomography algorithm using a substantially different proof (again, this proof was discovered with very little human input).
After some amount of refinement that we participated in, this is the proof that we will present here.

\paragraph{An exponential decay for the PGM.}
This proof addresses a basic mystery about the behavior of the PGM that our investigations raised: our intuition from the state discrimination literature, and in particular from \Cref{eq:pgm-fail} above, is that the PGM should improve inverse-exponentially as the number of copies of the state increases.
However, the counterexample to the PGM Decay Conjecture shows that there are examples where the PGM's failure probability only decays as inverse-polynomial.
So if it isn't the failure probability that decays as an inverse-exponential, what is it?
Our proof provides an answer in terms of the ``Good'' and ``Bad'' parts of the prior, which are defined as follows:
\begin{equation*}
    \mathsf{Good}_O \coloneqq \{\rho \in \calF \mid \Pr_{\widehat{\bo}}[|\widehat{\bo} - o| \leq \epsilon \mid \rho] \geq 7/8\},
    \qquad
    \mathsf{Bad}_O = \calF \setminus \mathsf{Good}_O.
\end{equation*}
In other words, conditioned on the state being $\rho$, what is the probability that the PGM's output is within $\epsilon$ of its observable value $o = \tr(O \cdot \rho)$? If the answer is at least $7/8$, then $\rho$ is Good, and if the answer is less than $7/8$, then the answer is Bad.
Then our main result is that the probability that a state from the prior falls into the Bad set decays exponentially in the number of copies $n$:
\begin{equation}\label{eq:no-bad}
    \Pr_{\brho \sim p}[\brho \in \mathsf{Bad}_O] \leq O(C^{-n \epsilon^2}).
\end{equation}
With this in place, union bounding over our $m$ observables $O_1, \ldots, O_m$, we have that
\begin{equation*}
    \Pr_{\brho \sim p}[\exists i \in [m] \mid \brho \in \mathsf{Bad}_{O_i}] \leq O(m \cdot C^{-n \epsilon^2}).
\end{equation*}
Hence by taking $n = O(\log(m)/\epsilon^2)$ copies, we can guarantee that with probability at least 0.99, the state $\brho$ from the prior will be Good for all $m$ observables. 
If this is the case, then we can simply repeat the PGM $\log(m)$ times and take the median value seen for each observable, and the algorithm will succeed with high probability, giving us our $O(\log(m)^2/\epsilon^2)$ sample complexity.

Now let us observe how this result is in line with the intuition we have gained so far.
First, the algorithm uses PGM measurements on $O(\log(m))$ copies at a time, which we saw was necessary in \Cref{sec:bootstrap}.
Second, the bound in \Cref{eq:no-bad} still allows the PGM to have error probability that decays as slowly as $1/\poly(n)$, as our counterexamples show is possible. (Indeed, even if \emph{every} state in $\calF$ is Good, this only guarantees an error probability of at most $1/8$!)
Finally, as we will eventually see, the proof of this statement inherently relies on the fact that the PGM is being applied to tensor product states, rather than applying to general ensembles, as \Cref{sec:general-quantum-case} suggested might be necessary.
w

We note two caveats about \Cref{eq:no-bad}: first, it is much more convenient for us to consider the integral PGM when proving it rather than the traditional PGM. (It \emph{is} possible to prove it with the traditional PGM as well, but doing so requires shoehorning the traditional PGM into a proof that is much more well-suited for the integral PGM.)
Second, the algorithm we actually study does not use a fixed number of copies $n$ for the integral PGM, but rather it samples the number of copies to be fed to the integral PGM uniformly at random from $n/2$ and $n$.
It is not clear to us if this is a necessary ingredient of the algorithm, or if it is just a technical limitation of the current proof strategy.

\section{Preliminaries}

For $u\in[0,1]$, we write $h(u) \coloneqq -u\log u-(1-u)\log(1-u)$ for the binary entropy function.
For a positive semidefinite operator $X$, let $\supp(X)$ denote the span of its eigenvectors with nonzero eigenvalues.
For quantum states $\rho$ and $\sigma$, we write 
\[
D(\rho \| \sigma) \coloneqq \tr\big(\rho (\log \rho - \log \sigma) \big)
\]
for their relative entropy, provided $\supp(\rho) \subseteq \supp(\sigma)$; otherwise we set $D(\rho \| \sigma) = \infty$. We denote the von Neumann entropy of $\rho$ by $H(\rho)$:
\begin{equation*}
    H(\rho) \coloneq -\tr\big( \rho \log \rho \big). 
\end{equation*}

\subsection{The integral PGM}


We begin by recalling a variant of the pretty good measurement that will be used in our proof.
Let $\{\rho_x\}_{x\in\calL}$ be a finite family of quantum states, let $p=(p_x)_{x\in\calL}$ be a prior, and define the \emph{average state} as 
\[
    S(p,n) \coloneqq \sum_{x\in\calL} p_x \rho_x^{\otimes n}.
\]
We will usually just write $S$ when the prior $p$ and number of copies $n$ are clear from context.

One way to think about the pretty good measurement is as follows.
For each $x$, we would like to construct a measurement operator by ``dividing''
$p_x\rho_x^{\otimes n}$ by the average state $S$.
More precisely, suppose that $\calD_S$ is a positive linear map which satisfies $\calD_S(S)=I$ on $\supp(S)$; then the operators
\[
    M_x \coloneqq \calD_S\!\left(p_x\rho_x^{\otimes n}\right)
\]
form a POVM on $\supp(S)$. This is because $M_x\succeq0$ for each $x$ by the positivity of $\calD_S$ and because
\[
    \sum_x M_x
    = \calD_S\!\Big(\sum_xp_x\rho_x^{\otimes n}\Big)
    = \calD_S(S)
    = I.
\]
Since every state $\rho_x^{\otimes n}$ is supported on $\supp(S)$, we may complete this measurement arbitrarily on the orthogonal complement.

The standard pretty good measurement corresponds to the choice
\[
    \calD_S(X) = S^{-1/2}XS^{-1/2},
\]
where inverses are taken on $\supp(S)$.
In this work, it will instead be more convenient to use a different notion of division by $S$.

\begin{definition}[The integral pretty good measurement]
\label{def:integral-pgm}
Let $\Pi$ denote the projector onto $\supp(S)$.
The \emph{integral pretty good measurement} associated with the ensemble
$\{p_x,\rho_x^{\otimes n}\}_{x\in\calL}$ is defined by $M_x
 \coloneqq \calD_S\!\left(p_x\rho_x^{\otimes n}\right)$, where
\[
    \calD_S(X)
    \coloneqq \int_0^\infty (S+t\Pi)^{-1} \cdot X \cdot (S+t\Pi)^{-1}\,\dt.
\]
\end{definition}

It is easy to verify that $\calD_S$ is positive and satisfies $\calD_S(S)=I$, and hence that the integral PGM is indeed a valid measurement; we omit the details.

The main reason we use the integral PGM is that $\calD_S$ is the derivative of the matrix logarithm at $S$:
\[
    \calD_S(X) = \left. \frac{\mathrm{d}}{\mathrm{d}u} \log(S+uX) \right|_{u=0}.
\]
This connection allows us to relate probabilities produced by the integral PGM to the matrix-logarithm expressions that appear in relative entropy. 
In particular, the following consequence of this identity will be the main property of the integral PGM that we use in our analysis.

\begin{lemma}[A useful inequality for the integral PGM] \label{lem:integral_PGM_inequality}
    Let $\sigma$ be a quantum state and let $X \succeq 0$ satisfy $\supp(\sigma) \subseteq \supp(X) \subseteq \supp(S)$. Then 
    \begin{equation*}
        \tr \big(
        \sigma \cdot ( \log X - \log S ) \big) \leq \log \tr \big( \sigma \cdot \calD_S(X) \big).
    \end{equation*}
\end{lemma}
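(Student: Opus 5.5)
The plan is to combine the operator concavity of the logarithm with a scaling trick that turns an additive tangent-line bound into the logarithmic one claimed in \cref{lem:integral_PGM_inequality}. Throughout, everything lives on $\supp(S)$, with $\log$ and inverses taken there, and $\Pi$ is the projector onto $\supp(S)$. Since $X$ may be singular on $\supp(S)$, I would first prove the statement for $X_\eta \coloneqq X + \eta\Pi$ with $\eta>0$, and only at the end let $\eta\to 0$.

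\textbf{Step 1: tangent-line bound.} The function $\log$ is operator concave on positive definite operators (on $\supp(S)$). Hence for $Y\succ 0$ on $\supp(S)$, the map $u\mapsto \log(S+u(Y-S))$ is operator concave on $[0,1]$. Comparing its value at $u=1$ with the one-sided derivative at $u=0$ gives
\begin{equation*}
\log Y \preceq \log S + \calD_S(Y-S) = \log S + \calD_S(Y) - \Pi .
\end{equation*}
Here I use the identification of $\calD_S$ with the Fréchet derivative of $\log$ at $S$, together with $\calD_S(S)=\Pi$. Taking the trace against $\sigma$, which is supported in $\supp(S)$, yields
\begin{equation*}
\tr\big(\sigma(\log Y - \log S)\big) \le \tr\big(\sigma\,\calD_S(Y)\big) - 1 .
\end{equation*}

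\textbf{Step 2: scaling.} Apply this with $Y = cX_\eta$ for a constant $c>0$. Since $\log(cX_\eta) = (\log c)\Pi + \log X_\eta$ and $\tr(\sigma\Pi)=1$, and since $\calD_S$ is linear, we get for every $c>0$
\begin{equation*}
\tr\big(\sigma(\log X_\eta - \log S)\big) \le c\,a_\eta - 1 - \log c, \qquad a_\eta \coloneqq \tr\big(\sigma\,\calD_S(X_\eta)\big).
\end{equation*}
Choosing $c = 1/a_\eta$ makes the right-hand side equal to $\log a_\eta$, which is exactly the claimed inequality for $X_\eta$. This requires $a_\eta>0$. That holds because $X_\eta \succeq \eta\Pi$, so
\begin{equation*}
\calD_S(X_\eta) \succeq \eta\,\calD_S(\Pi) = \eta \int_0^\infty (S+t\Pi)^{-2}\,\dt \succ 0 \quad \text{on } \supp(S).
\end{equation*}

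\textbf{Step 3: limit $\eta\to 0$.} As $\eta\to0$, the right-hand side converges: $a_\eta \to a \coloneqq \tr(\sigma\,\calD_S(X))$ by linearity. On the left, $\log X_\eta$ restricted to $\supp(X)$ converges to $\log X$, because $\supp(X)\subseteq\supp(S)$ makes $X_\eta$ block diagonal with respect to $\supp(X)\oplus(\supp(S)\ominus\supp(X))$. The hypothesis $\supp(\sigma)\subseteq\supp(X)$ means $\sigma$ only sees the $\supp(X)$ block, so $\tr(\sigma\log X_\eta)\to\tr(\sigma\log X)$.

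The delicate case is $a=0$. Then the left-hand side must be $-\infty$, but it is finite. To rule this out, I would argue $a>0$ directly. Because $\supp(\sigma)\subseteq\supp(X)$, there is $\lambda>0$ with $X \succeq \lambda\Pi_X$, where $\Pi_X$ is the projector onto $\supp(X)$. It then suffices to show that $\tr(\sigma\,\calD_S(\Pi_X))>0$. This follows since the $t$-integrand $(S+t\Pi)^{-1}\Pi_X(S+t\Pi)^{-1}$ has trace against $\sigma$ that is nonnegative and not identically zero.

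\textbf{Main obstacle.} The analytic core is Step 1: justifying that operator concavity of $\log$ yields the global tangent inequality with derivative exactly $\calD_S$. This is the standard fact that an operator concave function lies below its Fréchet tangent, applied via concavity of $u\mapsto\log(S+u(Y-S))$ and the difference-quotient bound. Once that is in place, the support bookkeeping in Step 3 is routine.
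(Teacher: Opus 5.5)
Your proposal is correct and follows essentially the same route as the paper: operator concavity of $\log$ gives the tangent-line bound at $S$ with derivative $\calD_S$, your choice $c = 1/a_\eta$ is exactly the paper's normalization by $a$ that makes $f'(0)=0$, and the singular case is handled by the same $\eta \to 0^+$ regularization. Your separate argument that $a>0$ is valid but not needed, because the inequality for each $X_\eta$, together with the finite limit of its left-hand side, already forces $a_\eta$ to stay bounded away from zero.
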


\begin{proof}
The assumption $\supp(\sigma) \subseteq \supp(X) \subseteq \supp(S)$ allows us to restrict attention to $\supp(S)$. On this subspace, $S$ has no zero eigenvalues and is therefore positive definite.
We first verify that $\calD_S$ is the derivative of the matrix logarithm at $S$.
Recall the standard integral representation
\[
    \log A
    = \int_0^\infty \bigg( \frac{I}{1+t}-(A+tI)^{-1} \bigg)\, \dt, \qquad A\succ0,
\]
which follows by applying the corresponding scalar identity to the eigenvalues of $A$.
Thus, for any Hermitian matrix $H$ and $u$ sufficiently close to zero,
\[
    \log(S+uH)
    = \int_0^\infty \bigg( \frac{I}{1+t}-(S+uH+tI)^{-1}
  \bigg)\, \dt.
\]
By the Leibniz integral rule, we may differentiate under the integral sign, giving
\begin{align*}
    \frac{\mathrm{d}}{\mathrm{d}u}\log(S+uH)
    &=
    -\int_0^\infty
    \frac{\partial}{\partial u}
    (S+uH+tI)^{-1} \, \dt\\
    &=
    \int_0^\infty
    (S+uH+tI)^{-1} \cdot 
    H \cdot 
    (S+uH+tI)^{-1} \, \dt,
\end{align*}
where in the second line we used the standard identity
\[
    \frac{\mathrm{d}}{\mathrm{d}u}A(u)^{-1}
    =
    -A(u)^{-1} \cdot A'(u) \cdot A(u)^{-1}.
\]
Evaluating at $u=0$ gives
\begin{equation}
\label{eq:integral_PGM_matrix_log_derivative}
    \left.
    \frac{\mathrm{d}}{\mathrm{d}u}\log(S+uH)
    \right|_{u=0}
    =
    \calD_S(H).
\end{equation}

We now use \cref{eq:integral_PGM_matrix_log_derivative} to prove the desired inequality. Suppose first that $X$ is positive definite on $\supp(S)$, and set $a\coloneqq\tr\!\left(\sigma \cdot \calD_S(X)\right)$.
We will compare $S$ with the rescaled operator $X/a$. Define
\[
    f(u)
    \coloneqq
    \tr\Big(
        \sigma \cdot \log\big((1-u) \cdot S+u \cdot X/a\big)
    \Big),
    \qquad 0\leq u\leq1.
\]
By operator concavity of the logarithm, $f$ is a concave function of $u$.

The choice of the normalization $a$ ensures that $f'(0)=0$. Indeed, since $(1-u) \cdot S+u \cdot X/a = S+u \cdot (X/a-S)$, \cref{eq:integral_PGM_matrix_log_derivative} gives
\begin{align*}
    f'(0)=
    \tr\big(
        \sigma \cdot \calD_S(X/a-S)
    \big) =
    \frac{1}{a} \cdot \tr\big(\sigma \cdot \calD_S(X)\big)
    -
    \tr\big(\sigma \cdot \calD_S(S)\big) =1-1=0,
\end{align*}
where we have used the definition of $a$, together with
$\calD_S(S)=I$ and $\tr(\sigma)=1$.

Since $f$ is concave, it lies below its tangent line at $u=0$. Hence $f(1) \leq f(0)+f'(0) = f(0)$.
Substituting the definition of $f$ gives
\[
    \tr\!\big(\sigma \cdot \log(X/a)\big)
    \leq
    \tr(\sigma \cdot \log S).
\]
Finally, since $a$ is a scalar and $\tr(\sigma)=1$,
\[
    \tr\!\big(\sigma \cdot \log(X/a)\big) = \tr(\sigma \cdot \log X)-\log a.
\]
Rearranging therefore gives the desired inequality:
\[
    \tr\!\big(\sigma \cdot (\log X-\log S)\big)
    \leq
    \log a
    =
    \log\tr\!\big(\sigma \cdot \calD_S(X)\big).
\]

Finally, suppose that $X$ is singular. For $\eta>0$, the operator
$X+\eta I$ is positive definite, so the preceding argument applies to it.
We then let $\eta\to 0^+$.
The assumption $\supp(\sigma)\subseteq\supp(X)$ ensures that $\sigma$
has no support on the zero-eigenspace of $X$, and hence we have the convergence
\[
    \tr\!\big(\sigma \cdot \log(X+\eta I)\big)
    \to
    \tr(\sigma \cdot \log X).
\]
On the other hand, by linearity of $\calD_S$, $\tr(\sigma \cdot \calD_S(X+\eta I)) \to \tr(\sigma \cdot \calD_S(X))$.
Taking $\eta\to 0^+$ therefore gives the desired inequality for singular $X$ as well.
\end{proof}

\section{Average-case shadow tomography}

We execute the PGM-based strategy outlined in \cref{background}: we solve the shadow tomography problem in the average-case setting and then appeal to the minimax theorem to obtain an algorithm in the worst-case setting.
We begin by formally defining the average-case problem.

\begin{problem}[Average-case shadow tomography]
\label{prob:average-case-shadow-tomography}
Let $\calF=\{\rho_x\}_{x\in\calL}$ be a known finite family of $d$-dimensional quantum states, let $p=(p_x)_{x\in\calL}$ be a known prior, and let $O_1,\ldots,O_m$ be known observables satisfying $\norm{O_i}_\infty\leq1$.
An input state $\brho=\rho_{\bx}$ is drawn according to $\bx\sim p$. 
Given copies of $\brho$, the goal is to output estimates $\wh{\bo}_i$ of $\bo_i \coloneqq \tr(O_i \cdot \brho)$ for each $i \in [m]$ such that 
\[
    \Pr_{\brho,\wh{\bo}}\Big[
        \max_{i\in[m]} \Big\{\big| \wh{\bo}_i-\tr(O_i \cdot \brho) \big|\Big\} \leq\eps
    \Big]\geq 1 - \delta.
\]
The probability is over both the draw of the input state and
the randomness of the algorithm.
\end{problem}

The main result of this section is an algorithm for average-case shadow tomography. 

\begin{theorem}
\label{thm:average-case-shadow-tomography}
For any $\delta, \, \epsilon > 0$, there is an algorithm that solves \cref{prob:average-case-shadow-tomography} using
\[
    O\left(
        \frac{\log(m)\log(m/\delta)}{\eps^2}
    \right)
\]
copies of the input state.
\end{theorem}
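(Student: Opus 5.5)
The plan is to analyze the algorithm sketched in \cref{subsec:ai}. Fix $n$ (to be chosen) and a number of rounds $K$. In each round $j\in[K]$ we draw $\bn_j$ uniformly from $\{\lceil n/2\rceil,\ldots,n\}$. We then apply the integral PGM of \cref{def:integral-pgm} for the ensemble $\{p_x,\rho_x^{\otimes \bn_j}\}$ to $\bn_j$ fresh copies, obtaining a label $\widehat{\bx}_j$. Finally, for each $i\in[m]$ we output the median of $\tr(O_i\rho_{\widehat{\bx}_1}),\ldots,\tr(O_i\rho_{\widehat{\bx}_K})$. This uses at most $nK$ copies.

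For a single observable $O$ with $\norm{O}_\infty\le1$, define $\mathsf{Good}_O$ as the set of $\rho\in\calF$ for which one round (including the randomness of $\bn$) outputs $\widehat{\bx}$ with $|\tr(O\rho_{\widehat{\bx}})-\tr(O\rho)|\le\eps$ with probability at least $7/8$. The whole proof rests on the prior-level decay bound \cref{eq:no-bad}:
\[
\Pr_{\bx\sim p}[\rho_{\bx}\in\mathsf{Bad}_O]\le e^{-cn\eps^2}.
\]

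Given this bound, the rest is bookkeeping. Conditioned on $\rho_{\bx}\in\bigcap_i\mathsf{Good}_{O_i}$, the $K$ rounds are independent. Each round is $\eps$-accurate on $O_i$ with probability at least $7/8$, so by a Chernoff bound the median is wrong with probability at most $e^{-c'K}$. Taking $K=O(\log(m/\delta))$ and union bounding over $i$ keeps this contribution below $\delta/2$. Union bounding the decay bound over $O_1,\ldots,O_m$ costs $me^{-cn\eps^2}$. With $n=O(\log(m)/\eps^2)$ this is only a small constant, and I do not yet see how to bring it down to $\delta/2$ at this choice of $n$. Obtaining the full $\delta$-dependence will therefore require something additional. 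One option is to strengthen the decay bound so that the Good threshold $1-\beta$ and the prior-level decay trade off, for example $\Pr[\mathsf{Bad}]\lesssim e^{-cn\eps^2}/\beta$. The other is to accept $n=O(\log(m/\delta)/\eps^2)$ and compensate with fewer rounds. I would settle this parameter balance only after seeing exactly what form of the decay bound the argument yields.

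The decay bound itself is the main obstacle, and here I would exploit the tensor-product structure through \cref{lem:integral_PGM_inequality}. Let $A=\{y:\tr(O\rho_y)\ge\tr(O\rho_x)+\eps\}$ (the other tail is symmetric), and let $X=\sum_{y\notin A}p_y\rho_y^{\otimes n}$. By linearity of $\calD_S$, the probability that state $x$ yields an outcome outside $A$ equals $\tr(\rho_x^{\otimes n}\calD_S(X))$. Applying \cref{lem:integral_PGM_inequality} with $\sigma=\rho_x^{\otimes n}$ gives
\[
\log\Pr[\text{correct side}\mid x]\ge \tr\bigl(\rho_x^{\otimes n}(\log X-\log S)\bigr).
\]
The right-hand side is a difference of relative entropies, namely $D(\rho_x^{\otimes n}\|S)-D(\rho_x^{\otimes n}\|X)$ up to normalization. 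Hence a Bad $x$ (failure probability above $1/8$ at a typical $n'$) forces $\log(S/X)$ to carry $\Omega(1)$ weight under $\rho_x^{\otimes n'}$, meaning the far-away part $A$ of the prior dominates near $x$.

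To turn this into a bound on the prior mass of Bad states, I would test with the two-outcome measurement $\{\Pi,I-\Pi\}$ that applies $O$ to each copy and thresholds the empirical mean. By Hoeffding, this measurement separates $\rho_x^{\otimes n}$ from $\rho_y^{\otimes n}$, $y\in A$, with error $e^{-\Omega(n\eps^2)}$. By data processing for relative entropy, this bounds how much $\log S-\log X$ can exceed $\log$ of the prior ratio. Summing over Bad $x$ with weights $p_x$, I would aim for an entropy-counting inequality of the form
\[
p(\mathsf{Bad})\cdot\Omega(1)\le \E_{n'}\bigl[I_{n'}-I_{n'/2}\bigr]+e^{-\Omega(n\eps^2)}.
\]
Here $I_{n'}$ is a mutual-information-type quantity between the label and the $n'$ copies, restricted to the $O$-coarse-grained information. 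This is exactly where the random $\bn\in[n/2,n]$ enters. The quantity $I_{n'}$ is monotone and bounded, because the coarse-grained $O$-value can carry only about $\log(1/\eps)$ useful bits. A telescoping (pigeonhole) argument over $n'$ then shows that the average increment is small, with the exponential $e^{-\Omega(n\eps^2)}$ coming from the Hoeffding separation. Getting the exponent rather than a $1/n$ rate out of this telescoping is the delicate point, and I expect most of the work to lie there.
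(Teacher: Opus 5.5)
Your architecture matches the paper's: integral PGM with a random copy number, \cref{lem:integral_PGM_inequality}, Hoeffding, and telescoping in $n$. There are, however, two genuine gaps.

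\textbf{The $\delta$-dependence.} You leave this open, and neither of your fallbacks closes it. With the same prior $p$ in every round, the rounds are i.i.d.\ given $\bx=x$. So repetition does nothing for a state whose per-round estimate of $\tr(O_i\rho_x)$ overshoots by more than $\eps$ with probability above $1/2$. The decay bound controls such states only at a constant threshold, so a $1/\beta$ tradeoff buys nothing. Your bookkeeping therefore needs $me^{-cn\eps^2}\le\delta/2$, i.e.\ $n=\Omega(\log(m/\delta)/\eps^2)$. The Good states, at threshold $1/8$, still need $K=\Omega(\log(m/\delta))$ rounds, so fewer rounds is not an option. The result is only the paper's warm-up bound $O(\log^2(m/\delta)/\eps^2)$.

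The missing idea (\cref{app:boost}) is that \cref{lem:randomized-integral-pgm} holds for \emph{every} prior. Hence at $n=\Theta(\log(m)/\eps^2)$ one has $\E_{\bx\sim q}[e_q(\bx)]\le 1/4$ for all $q$, where $e_q(x)$ is the worst-over-$i$ per-round error probability of the estimator built from $q$. Round $r$ is then run with the reweighted prior $q_r(x)\propto p_x\prod_{s<r}(1+2e_{q_s}(x))$, which is computed offline from $p$, so the rounds stay independent given $x$. The normalizer grows by at most a factor $3/2$ per round. Markov's inequality applied to $3^{\sum_r \mathbf{1}[\text{round } r\text{ errs on }O_i]}$, averaged over $\bx\sim p$, bounds the median's failure probability by $3^{-R/2}(3/2)^R=(\sqrt3/2)^R$. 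Thus $R=O(\log(m/\delta))$ rounds of $O(\log(m)/\eps^2)$ copies each suffice.

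\textbf{The decay bound.} Your sketch puts the exponent in the wrong place. Consider an inequality linear in $p(\mathsf{Bad})$ whose right side telescopes a quantity bounded by $O(\log(1/\eps))$ bits over $\Theta(n\eps^2)$ steps. It can only yield the polynomial rate $p(\mathsf{Bad})=O(\log(1/\eps)/(n\eps^2))$. In the paper the Hoeffding term is merely the constant $4e^{-k\eps^2/2}\le 1/16$, with $k=\lceil 12/\eps^2\rceil$.

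The exponent comes from a change of measure. \cref{lem:info-bound} shows that for an \emph{arbitrary} $q$ with $\supp(q)\subseteq\supp(p)$, with the PGM still built from $p$,
\[
\frac1T\sum_{t=T}^{2T-1}\sum_x q_x\Delta_x(p,kt)\le \frac{D(q\|p)+\log 2}{T}.
\]
Taking $q=p(\cdot\mid\mathsf{Bad})$ forces the left side to exceed $1/128$, while $D(q\|p)=\log(1/p(\mathsf{Bad}))$. This gives $p(\mathsf{Bad})\le 2e^{-T/128}$. The increments are bounded using the two entropic identities, data processing, and the Laplace-estimator bound, which contribute $\log\frac{s+k+1}{s+1}$ per step and sum to $\log 2$. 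No $\log(1/\eps)$ information budget appears anywhere.

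You are also missing the device that makes averaging over $x$ tractable. Your operator $X=\sum_{y\notin A}p_y\rho_y^{\otimes n}$ depends on $x$ through $A$. The paper instead appends $k$ simulated Bernoulli registers and uses $X_z=\sum_y p_y\langle z|B_y^{\otimes k}|z\rangle\rho_y^{\otimes n}$, indexed by the outcome string $z$ rather than by $x$. Summing \cref{lem:integral_PGM_inequality} over $z\sim B_x^{\otimes k}$ gives $D(B_x^{\otimes k}\|B^{(k)}_{\widehat{\bx}}(p,n))\le\Delta_x(p,n)$. Pinsker then bounds the per-$x$ error by $1/16+\sqrt{\Delta_x(p,n)/2}$, which is exactly the quantity that telescopes.
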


The basic measurement primitive underlying \cref{thm:average-case-shadow-tomography} is a single application of the integral PGM, with the number of copies fed to the measurement chosen at random. 
Given a prior $p$ on a family of states $\calF$, the measurement outputs a candidate state, whose expectation values we use as our estimates.
We formalize this procedure below.

\begin{myalgorithm}[label={alg:integral-pgm-estimator}]
    {Integral PGM estimator}

Given a known family of states $\calF=\{\rho_x\}_{x\in\calL}$,
a prior $p=(p_x)_{x\in\calL}$, observables
$O_1,\ldots,O_m$, and integers $k,T\geq1$:

\begin{enumerate}
    \item Sample $\bt$ uniformly from $\{T,\ldots,2T-1\}$ and set $\bn \coloneqq k\bt$.

    \item Apply the integral PGM (\cref{def:integral-pgm}) associated with the ensemble $\{p_x,\rho_x^{\otimes \bn}\}_{x\in\calL}$ to $\bn$ copies of the input state.
    Let $\wh{\bx}\in\calL$ denote the outcome.

    \item Output $\wh{\bo}\in[-1,1]^m$ with $\wh{\bo}_i\coloneqq \tr(O_i\cdot\rho_{\wh{\bx}}) $ for each $i\in[m].$
\end{enumerate}
\end{myalgorithm}


The output of \cref{alg:integral-pgm-estimator} is a random variable that depends on the randomness of the input $\bx$ as well as the randomness generated by the algorithm itself, in sampling $\bt$ and measuring with the integral PGM. We will write $\wh{\bo} \sim \calA(x,n)$ when we want to describe the output distribution, conditioned on a certain input or number of copies, and drop these arguments when we do not condition on their values. For example, $\wh{\bo} \sim \calA$ conditions on neither input nor number of copies used. We will sometimes abuse notation and write $\widehat{\bx} \sim \calA$ as well.

The key property of the estimator described in \cref{alg:integral-pgm-estimator} is that, for any fixed observable, only an exponentially small fraction of the prior consists of states on which the estimator produces ``bad'' estimates. 
We formalize this in the following lemma, which is the main technical ingredient in the proof of \Cref{thm:average-case-shadow-tomography}.

\begin{lemma}
\label{lem:randomized-integral-pgm}
Let $p$ be any prior on $\calF$, and run
\cref{alg:integral-pgm-estimator} with $k=\left\lceil \tfrac{12}{\eps^2}\right\rceil$. Then, for every $T\geq1$ and every $i\in[m]$,
\[
    \Pr_{\bx\sim p}\bigg[
        \Pr_{\wh{\bo} \sim \calA(\bx)}\Big[ \big|\wh{\bo}_i-\tr(O_i \cdot \rho_{\bx})\big| >\eps 
        \Big]
        > \frac{1}{8}
    \bigg]
    \leq 2e^{-T/128}.
\]
\end{lemma}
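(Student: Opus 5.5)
The plan is to apply \cref{lem:integral_PGM_inequality} to each input $x$ separately, with $\sigma=\rho_x^{\otimes n}$ and $X$ equal to the part of the average state coming from states whose $i$-th observable value is close to that of $x$. Fix $i$, write $o_y\coloneqq\tr(O_i\rho_y)$, and let $N_x\coloneqq\{y:|o_y-o_x|\le\eps\}$. For a fixed number of copies $n$, set
\[
X_x(n)\coloneqq\sum_{y\in N_x}p_y\rho_y^{\otimes n},\qquad Y_x(n)\coloneqq S(p,n)-X_x(n).
\]
By linearity of $\calD_S$, $\tr(\rho_x^{\otimes n}\calD_S(X_x(n)))$ is exactly the probability that the integral PGM returns some $\wh{\bx}\in N_x$ on input $x$. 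This is the success probability for observable $i$. The support hypotheses hold because $x\in N_x$. \Cref{lem:integral_PGM_inequality} then gives
\[
\log\Pr\big[\text{success}\mid x,n\big]\;\ge\;-\,\Delta_x(n),\qquad \Delta_x(n)\coloneqq\tr\!\big(\rho_x^{\otimes n}(\log S(p,n)-\log X_x(n))\big).
\]
So it suffices to show that $\Delta_x(\bn)$ is small, averaged over the random $\bt$, for all $x$ outside a set of $p$-mass at most $2e^{-T/128}$. The randomization of $\bn$ enters here. By concavity of $\log$ and Jensen over $\bt$, the success probability averaged over $\bt$ is at least $\exp(-\E_{\bt}\Delta_x(k\bt))$. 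It is therefore enough to prove $\E_{\bt}\Delta_x(k\bt)\le \log(8/7)$ for all but a $2e^{-T/128}$ fraction of $x$.

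Next I would rewrite $\Delta_x(n)$ information-theoretically. Adding and subtracting $\log\rho_x^{\otimes n}$,
\[
\Delta_x(n)=D\big(\rho_x^{\otimes n}\,\|\,X_x(n)\big)-D\big(\rho_x^{\otimes n}\,\|\,S(p,n)\big),
\]
where $X_x(n)$ is unnormalized. Averaged over $x\sim p$, the second term is the Holevo information $\chi(n)$ of the ensemble $\{p_y,\rho_y^{\otimes n}\}$. The first term is the Holevo information of the ensemble restricted to $N_x$, minus $\log p(N_x)$. Thus $\E_x\Delta_x(n)$ measures how much extra information $n$ copies give about the label beyond what remains inside the $\eps$-window. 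The key structural fact I would aim for is that $n\mapsto\Delta_x(n)$ telescopes over blocks of $k$ copies. Writing $n=kt$, each increment $\Delta_x(k(t+1))-\Delta_x(kt)$ should be controlled by the information one extra block of $k$ copies reveals about far labels $y\notin N_x$. For such $y$, measuring $O_i$ on $k=\lceil 12/\eps^2\rceil$ copies gives a fidelity bound of the form $F(\rho_x^{\otimes k},\rho_y^{\otimes k})\le e^{-k\eps^2/8}$, i.e.\ a constant bounded below $1$. Hence each block contributes a constant factor of exponential suppression of the far mass $Y_x$ as seen from $\rho_x^{\otimes n}$, in the spirit of \cref{eq:pgm-fail}. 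Averaging over $t\in\{T,\dots,2T-1\}$, the telescoping sum $\sum_t(\Delta(t+1)-\Delta(t))$ is at most the total available information. Any ``bad'' $x$ must then have $\Delta_x$ large at many of the $T$ block counts, and therefore a far-label likelihood ratio that fails to shrink across many independent blocks.

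Finally I would convert this into the stated tail. View the $T$ blocks as giving i.i.d.\ classical outcomes of measuring $O_i$ on each block, so the per-block log-likelihood ratio between $x$ and the far ensemble is a sum of bounded independent terms with constant positive drift. A Hoeffding/Chernoff bound then shows that the set of $x$ for which this drift fails over $T$ blocks has $p$-mass at most $2e^{-T/128}$; the factor $2$ accounts for far labels on the two sides $o_y>o_x+\eps$ and $o_y<o_x-\eps$. Markov's inequality over $x$ converts the averaged bound into the statement about the $7/8$ threshold.

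I expect the main obstacle to be the middle step: bounding $\tr(\rho_x^{\otimes n}(\log S-\log X_x))$ by a quantity that genuinely decays in the number of blocks. The difficulty is that $\log$ is operator monotone but $\log(X+Y)-\log X$ admits no simple operator upper bound in terms of $Y$. My first attempt would be to use data processing for the relative entropies above, passing to the classical outcomes of measuring $O_i$ blockwise on $\rho_x^{\otimes n}$. This would reduce $\Delta_x$ to a classical posterior-mass quantity, at the cost of handling the sign of a difference of relative entropies, which is where the random choice of $\bt$ and the telescoping are meant to help.
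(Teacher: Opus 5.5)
Your first step is sound: taking $X=X_x(n)$ in \cref{lem:integral_PGM_inequality} and then using convexity of $\exp$ over $\bt$ shows that it would suffice to have $\E_{\bt}\Delta_x(k\bt)\le\log(8/7)$ off a set of $p$-mass $2e^{-T/128}$. But this is the easy part. Everything after it describes what you would like to be true rather than giving an argument, and the mechanisms you sketch do not work.

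\begin{itemize}
\item Your $\Delta_x(n)=\tr\big(\rho_x^{\otimes n}(\log S(p,n)-\log X_x(n))\big)$ has no telescoping structure in $n$. The hard window $N_x$ changes with $x$, so $\sum_x q_x\Delta_x(n)$ is not an increment of any Holevo-type quantity. Also, $\log(X+Y)-\log X$ does not decompose over blocks.
\item Passing to blockwise $O_i$ outcomes by data processing lowers \emph{both} relative entropies in your difference. It therefore bounds the difference in neither direction.
\item Pairwise fidelity bounds $F(\rho_x^{\otimes k},\rho_y^{\otimes k})\le e^{-k\eps^2/8}$ control the aggregate far mass $Y_x$ only at the price of a factor depending on the number or weights of far labels, as in \cref{eq:pgm-fail}. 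Avoiding that factor is exactly what the lemma requires.
\item Your last step conflates two different probabilities. A Hoeffding/Chernoff bound over the $T$ blocks for a fixed $x$ controls measurement randomness, not the prior mass of bad labels. Whether $x$ is bad depends on $p$ itself. For example, take a label of tiny weight $p_x$ next to a heavy label at distance just over $\eps$: the blockwise likelihood ratio has the same positive drift regardless of $p_x$, yet $x$ is bad for all $T$ up to order $\log(1/p_x)$.
\item For the same reason, the exponential tail must come from an averaged inequality in which the prior enters. Markov over $x$ applied to an averaged bound would only give a tail polynomial in $1/T$.
\end{itemize}

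The missing idea is to replace your hard window by a soft one built from $k$ simulated $\{O_i,I-O_i\}$ outcomes. The paper applies \cref{lem:integral_PGM_inequality} separately for each string $z\in\{0,1\}^k$, with $X_z=\sum_y p_y\langle z|B_y^{\otimes k}|z\rangle\rho_y^{\otimes n}$. This yields $D\big(B_x^{\otimes k}\,\|\,B^{(k)}_{\wh{\bx}}(p,n)\big)\le\Delta_x(p,n)$ (\cref{lem:integral-pgm-recovery}), where the paper's $\Delta_x(p,n)$ is the relative entropy gained by appending $k$ Bernoulli registers. The hard window is recovered only classically, via Hoeffding and Pinsker (\cref{lem:bernoulli-prediction}). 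This soft quantity does telescope: \cref{prop:donalds-identity}, data processing from $S(\cdot,s+k)$ to $S^+(\cdot,s)$, and the Laplace-estimator bound of \cref{lem:entropic_(di)lemma} together give $\frac1T\sum_{t=T}^{2T-1}\sum_x q_x\Delta_x(p,kt)\le(D(q\|p)+\log 2)/T$ (\cref{lem:info-bound}). The exponential tail then follows by taking $q$ to be $p$ conditioned on the bad set, so that $D(q\|p)=\log(1/p(\mathsf{Bad}_{O_i}))$. In particular, the factor $2$ in $2e^{-T/128}$ comes from this $\log 2$ term, not from the two sides of the window.
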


We defer the proof of \cref{lem:randomized-integral-pgm}, and first show how it implies the average-case shadow tomography algorithm.
In what follows, we prove a slightly weaker $O(\log^2(m/\delta)/\eps^2)$ copy complexity. 
A somewhat more involved argument improves this to the bound stated in \cref{thm:average-case-shadow-tomography}, the details of which are provided in \cref{app:boost}.
We choose to present the weaker bound here because its proof is more transparent and already captures the main idea.

\begin{proof}[Proof of (the slightly weaker version of) \cref{thm:average-case-shadow-tomography}, assuming \cref{lem:randomized-integral-pgm}]
For any prior $q$ on $\calF$, run
\cref{alg:integral-pgm-estimator} with parameters $k = \lceil 12/\eps^2 \rceil$ and $T$ to be specified shortly, and let $e_q(x)$ denote the largest conditional error probability among the $m$ observable estimates on input $\rho_x$. That is,
\begin{equation*}
    e_q(x) \coloneqq \max_{i\in[m]} \bigg\{\Pr_{\wh{\bo} \sim \calA(x) }\Big[ \big|\wh{\bo}_i-\tr(O_i \cdot \rho_x)\big|>\eps 
    \Big] \bigg\}.
\end{equation*}
Here, the probability is over the randomness of \cref{alg:integral-pgm-estimator} when run with prior $q$.
By \cref{lem:randomized-integral-pgm} and a union bound over the $m$
observables,
\begin{equation*}
 \Pr_{\bx\sim q}\Big[e_q(\bx)>\frac{1}{8}\Big]
    \leq 2m e^{-T/128}.
\end{equation*}
Choosing $T = 128 \log (4m/\delta)$ suffices to make this probability at most $\delta/2$. 

Our goal now is to amplify this guarantee for those $x$ such that $e_q(x) \leq 1/8$; the remaining $x$ will be sufficiently rare as to not matter. We amplify using the median trick, a standard boosting argument. For an odd integer number of rounds $R$, run \cref{alg:integral-pgm-estimator} with the same parameters as above, obtaining the vectors $\widehat{\bo}^{(1)}, \dots, \widehat{\bo}^{(R)}$. Our final output is the coordinate-wise median of these vectors, i.e.\ $\widehat{\bo}_i \coloneq \mathrm{median}(\widehat{\bo}^{(1)}_i, \dots, \widehat{\bo}^{(R)}_i)$ for each $i \in [m]$. Fix an $i$, and define $\bF_{\!i}$ to be set of indices $j \in [R]$ which fail to be $\eps$-close to the true value:
\begin{equation*}
    \bF_{\!i} \coloneq \big\{ j \, : \, | \widehat{\bo}_i^{(j)} - o_i | > \eps \big\}.
\end{equation*}
Then the median of $\widehat{\bo}^{(1)}_i, \dots, \widehat{\bo}^{(R)}_i$ is more than $\eps$ away from $o_i$ only when $|\bF_{\!i}| > R/2$. 

Now, condition on $\bx = x$, such that $e_q(x) \leq 1/8$. Since $x$ is fixed, the indicator random variables $\{1[j \in \bF_i]\}_{j \in [R]}$ are independent and identically distributed, and 
\begin{equation*}
\E\big[ | \bF_{i}| \big] \leq e_q (x) \cdot R \leq \frac{R}{8}. 
\end{equation*}
Therefore, by an additive Chernoff bound,
\begin{equation*}
\begin{split}
    \Pr\Big[\big|\mathrm{median}(\widehat{\bo}^{(j)}_{i})_{j \in [R]} - \tr( O_i \rho) \big| > \epsilon \Big]  \leq \Pr\Big[ |\bF_{\!i}| > \frac{R}{2} \Big] \leq \Pr\Big[ |\bF_{\!i}| > \E\big[\bF_{\!i}\big] + \frac{3R}{8} \Big] \leq e^{- 2R \cdot (3/8)^2} =e^{-9R/32}.
\end{split}
\end{equation*}
Thus, taking $R = 4 \log(2m/\delta)$ suffices to make this probability less than $\delta/2m$. Then, by a union bound, the probability that any of the $m$ of the $\widehat{\bo}_i$ are not within $\eps$ of the correct value is at most $\delta/2$, as desired. 

Finally, relaxing the condition, with probability $2me^{-T/128}$ we draw a state which does \emph{not} satisfy $e_q(x) \leq 1/8$, so that
\begin{equation*}
    \Pr \Big[ \max_{i \in [m]} \big| \widehat{\bo}_i - o_i \big| > \eps \Big] \leq 2me^{-T/128} + m e^{-9R/32} \leq \frac{\delta}{2} + \frac{\delta}{2} = \delta.
\end{equation*}
The resulting algorithm uses a total number of copies
\begin{equation*}
    N \leq R \cdot 2kT \leq  O\bigg(\frac{\log^2(m/\delta)}{\eps^2}\bigg).
\end{equation*}
This completes the proof.
\end{proof}

It thus suffices to prove~\Cref{lem:randomized-integral-pgm}. The proof splits into three steps. Before we outline each step, we will need to introduce some useful notation related to Bernoulli random variables.

\subsection{Helpful notation}

Fix an observable $O_i$ throughout this subsection. 
We assume without loss of generality that $0 \preceq O_i \preceq I$, since estimating the expectation of $(I + O_i)/2 \succeq 0$ to $\pm \eps/2$ accuracy suffices to estimate $\tr(O_i \rho_x)$ to $\pm \eps$ accuracy. 

Our proof will compare the integral PGM's estimate of $\tr(O_i \cdot \rho_x)$ with the distribution obtained by measuring $O_i$ on independent copies of $\rho_x$. To make this comparison precise, we encode the expectation value $\tr(O_i \cdot \rho_x)$ as the bias of a Bernoulli random variable.
Equivalently, we can view this Bernoulli random variable as a diagonal quantum state, which we define below.

\begin{definition}[Auxiliary Bernoulli state]
    We  define the diagonal mixed state 
    \begin{equation*}
    B_x \coloneq \left(1 - \tr(O_i \rho_x)\right) \cdot\ketbra{0}{0} + \tr(O_i \rho_x)\cdot\ketbra{1}{1} = \Ber( \tr(O_i \cdot \rho_x) ),
    \end{equation*}
    and use $B_x^{\otimes k}$ for its $k$-fold tensor product:
    \begin{equation*}
        B_x^{\otimes k} \coloneq \Ber( \tr(O_i \cdot \rho_x) )^{\otimes k}.
    \end{equation*}
\end{definition}
Thus, $B_x^{\otimes k}$ corresponds to the classical probability distribution obtained by tossing $k$ coins with bias $\tr(O_i \cdot \rho_x)$. It will be useful to view it as a diagonal mixed state, since $B_x^{\otimes k}$ is the state that results when we measure each copy of the state $\rho_x^{\otimes k}$ using the POVM $\{O_i, I - O_i\}$. 

\begin{remark}
    Given $n+k$ copies of the state $\rho_x$, if we measure the last $k$ copies using the POVM $\{O_i, I-O_i\}$, the resulting state will be $\rho_x^{\otimes n} \otimes B_x^{\otimes k}$.
\end{remark}

\begin{definition}[Estimated auxiliary Bernoulli state]
\label{def:estimated_auxiliary_bernoulli}
    Let $\widehat{\bx}$ be a label estimate drawn from some distribution. We use $B_{\widehat{\bx}}^{(k)}$ to denote the \emph{estimated $k$-copy auxiliary Bernoulli state} $B_{\widehat{\bx}}^{\otimes k}$, i.e.\ the state
    \begin{equation}
        \label{eq:estimated-bernoulli-aux-state-generic}
        B_{\widehat{\bx}}^{(k)}
        \coloneq \sum_{\widehat{x}} \Pr\big[\widehat{\bx} = \widehat{x}\big] \cdot B^{\otimes k}_{\widehat{x}}.
    \end{equation}
    Throughout this paper, it will often be the case that $\widehat{\bx} = \calA(x, n)$, i.e.\ $\widehat{\bx}$ is the estimate for the label $x$ obtained after measuring $n$ copies of $\rho_x$ using~\Cref{alg:integral-pgm-estimator}. In that case, we will use the shorthand $B_{\widehat{\bx}}^{(k)}(p,n)$, which will make the dependence on the distribution over states $p$ and on the number of copies $n$ explicit:
    \begin{equation}
        \label{eq:estimated-bernoulli-aux-state-pgm}
        B_{\widehat{\bx}}^{(k)}(p, n)
        \coloneq \sum_{\widehat{x}} \Pr\big[\widehat{\bx} = \widehat{x} \, \big| \, n \big] \cdot B^{\otimes k}_{\widehat{x}}. 
    \end{equation}    
\end{definition}

Next, recall from the introduction that given a probability distribution $p$ over states $\rho_x$, we define the average $n$-copy state to be
\begin{equation*}
    S(p,n) \coloneq \sum_x p_x \cdot \rho_x^{\otimes n}.   
\end{equation*}
For the purposes of the proof below, we make the dependence on the distribution $p$ and the number of copies $n$ explicit. Moreover, we use $S^+$ to denote the average $n$-copy state with $k$ additional copies of the auxiliary Bernoulli state 
\begin{equation*}
    S^+(p, n, k) \coloneq \sum_x p_x \cdot \rho_x^{\otimes n} \otimes B_x^{\otimes k}. 
\end{equation*}
We will often omit the dependence on $k$ whenever it is clear from the context.

Finally, the following quantity will be useful:
\begin{equation*}
    \Delta_x(p, n) \coloneq D \big( \rho_x^{\otimes n} \otimes B_x^{\otimes k} \big\| S^+(p,n,k) \big) - D \big( \rho_x^{\otimes n} \big\| S(p,n) \big).
\end{equation*} 
While this quantity may seem mysterious at first, let us try to provide some intuition for it. 
Each term compares the state conditioned on $\bx=x$ with the average state when the label $\bx$ is drawn according to the prior $p$. 
The difference is that the first term includes both the $n$ copies of $\rho_x$ and the $k$ auxiliary Bernoulli registers, whereas the second term includes only the $n$ copies of $\rho_x$.
Thus, $\Delta_x(p,n)$ measures how much additional information about whether $\bx=x$ is provided by the Bernoulli registers, which themselves contain outcomes for measuring with the POVM $\{O_i, I-O_i\}$, the ideal measurement for the given observable $O_i$.
Equivalently, it measures how much relative entropy is lost when these registers are discarded.
Indeed, tracing out the last $k$ Bernoulli registers in the first term gives the states appearing in the second term. Data processing thus implies that $\Delta_x(p, n) \geq 0$.


What happens in the extreme case when $\Delta_x(p, n) = 0$? 
In this case, the additional Bernoulli registers provide no extra information for distinguishing $\bx = x$ from $\bx \sim p$, and thus can be reconstructed from $\rho_x^{\otimes n}$.
A bit more formally, equality in the data-processing inequality implies that (see, for example, the discussion around Equation (7.2.78) of~\cite{KW24}) there exists a channel $\calR$ that simultaneously maps
\begin{equation*}
    \calR(\rho_x^{\otimes n}) = \rho_x^{\otimes n} \otimes B_x^{\otimes k},
    \qquad
    \calR(S) = S^+.
\end{equation*}
Thus, starting only from $\rho_x^{\otimes n}$, the channel $\calR$ recovers the $k$ Bernoulli registers while preserving the original $n$-copy state. 
One can then measure the last $k$ copies to get an estimate for $\tr(O_i \rho_x)$, and would still have $n$ fresh copies of $\rho_x$. 
In other words, when $\Delta_x(p,n) = 0$, we can extract information about the observable $O_i$ without disturbing the state at all.

On the other hand, what happens when $\Delta_x(p,n)$ is large? 
In this case, the Bernoulli registers may contain information about whether $\bx=x$ that is not already present in the first $n$ copies of $\rho_x$.
Our proof will show that we can control the PGM's prediction error by this quantity. 
Small $\Delta_x(p,n)$ will imply that the PGM can accurately reproduce the Bernoulli registers, and our analysis will show that large values of $\Delta_x(p,n)$ cannot occur too often over the random choice of $n$. 


\subsection{Proof of \cref{lem:randomized-integral-pgm}}

The proof has three main ingredients.

\paragraph{Ingredient 1.} Upper bound the probability of error for each label $x$ using the relative entropy between the auxiliary Bernoulli states. In particular, we will prove the following statement.

\begin{lemma}
\label{lem:bernoulli-prediction}
Fix a label $x$, and let $\widehat{\bx}$ be a label estimator drawn according to some distribution. Recall our definition of the Bernoulli auxiliary state in~\Cref{def:estimated_auxiliary_bernoulli}.
Then for any observable $O_i$,
\begin{equation*}
    \Pr_{\wh{\bx}}\Big[
        \big|\tr(O_i\cdot \rho_{\wh{\bx}})- \tr(O_i\cdot \rho_x)\big|>\eps
    \Big]
    \leq 4e^{-k\epsilon^2/2} + \sqrt{
        \frac12 D\left( B_x^{\otimes k} \middle\| B_{\widehat{\bx}}^{(k)} \right)
    }.
\end{equation*}
\end{lemma}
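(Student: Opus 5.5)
Write $o_y \coloneq \tr(O_i\cdot\rho_y)$ for each label $y$. The plan is to pass from the quantum statement to a classical one about coin flips: compare the distribution $P \coloneq B_x^{\otimes k}$ (i.e.\ $k$ tosses of a coin with bias $o_x$) with the mixture $Q \coloneq B_{\wh{\bx}}^{(k)}$, which is obtained by first drawing $\wh{\bx}$ and then tossing $k$ coins of bias $o_{\wh{\bx}}$. Both are diagonal, so $D(P\|Q)$ is a classical KL divergence. Pinsker's inequality gives $\mathrm{TV}(P,Q)\le\sqrt{\tfrac12 D(P\|Q)}$. Hence for every event $E\subseteq\{0,1\}^k$ we have $Q(E)\le P(E)+\sqrt{\tfrac12 D(P\|Q)}$. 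It remains to choose $E$ so that $Q(E)$ controls the error probability and $P(E)$ is exponentially small.

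Choose $E$ to be the set of strings whose empirical mean $\bar b$ satisfies $|\bar b-o_x|>\eps/2$. Hoeffding's inequality gives $P(E)\le 2e^{-2k(\eps/2)^2}=2e^{-k\eps^2/2}$. For the lower bound on $Q(E)$, condition on $\wh{\bx}=\wh x$ with $|o_{\wh x}-o_x|>\eps$. By the triangle inequality, if $|\bar b-o_{\wh x}|\le\eps/2$ then $|\bar b-o_x|>\eps/2$, i.e.\ the string lies in $E$. So $\Pr[E\mid \wh x]\ge 1-2e^{-k\eps^2/2}$, again by Hoeffding applied to coins of bias $o_{\wh x}$. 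Averaging over $\wh{\bx}$,
\[
Q(E)\;\ge\;\Pr_{\wh{\bx}}\big[|o_{\wh{\bx}}-o_x|>\eps\big]\cdot\big(1-2e^{-k\eps^2/2}\big)\;\ge\;\Pr_{\wh{\bx}}\big[|o_{\wh{\bx}}-o_x|>\eps\big]-2e^{-k\eps^2/2}.
\]

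Combining the two displays gives
\[
\Pr_{\wh{\bx}}\big[|o_{\wh{\bx}}-o_x|>\eps\big]\le Q(E)+2e^{-k\eps^2/2}\le P(E)+\sqrt{\tfrac12 D(P\|Q)}+2e^{-k\eps^2/2}\le 4e^{-k\eps^2/2}+\sqrt{\tfrac12 D(P\|Q)},
\]
which is exactly the claimed bound.

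There is no serious obstacle. The only care needed is in splitting $\eps$ as $\eps/2+\eps/2$ in the definition of $E$, so that both Hoeffding bounds yield the exponent $k\eps^2/2$ and the constants add up to $4$.
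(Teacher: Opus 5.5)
Your proof is correct and is essentially the paper's argument: combine the two Hoeffding bounds at threshold $\epsilon/2$ with the triangle inequality, then apply Pinsker. The only difference is cosmetic. The paper works with the complementary event $\{|\bar z-\theta_x|\le\epsilon/2\}$ and lower-bounds $P(E^c)-Q(E^c)$, which is the same quantity as your $Q(E)-P(E)$.
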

This lemma holds for any distribution $\widehat{\bx}$ (not only when $\bx$ is the output of the integral PGM). Indeed, the statement and proof of this lemma are completely classical. The first term of the right hand side captures the probability that the average of $k$ Bernoulli random variables deviates more than $\epsilon/2$ from its expectation. The main tool used is Pinsker's inequality to translate the left-hand side to relative entropy.

Given the above lemma, we can write the probability of error of our algorithm for a fixed $x$ as
\begin{align*}
    \Pr_{\widehat{\bx} \sim \calA(x)} \Big[
        \big| \tr(O_i\cdot \rho_{\wh{\bx}})- \tr(O_i\cdot \rho_x) \big| >\eps 
    \Big] 
    & = \frac{1}{T} \sum_{t=T}^{2T-1} \Pr_{\wh{\bx} \sim \calA(x,kt)}\Big[
        \big|\tr(O_i\cdot \rho_{\wh{\bx}})- \tr(O_i\cdot \rho_x)\big| >\eps 
    \Big] \\
    & \leq  4e^{-k\epsilon^2/2} + \frac{1}{T} \sum_{t=T}^{2T-1} \sqrt{
        \frac12 D\left( B_x^{\otimes k} \middle\| B_{\widehat{\bx}}^{(k)}(p, kt) \right)}.
\end{align*}

\paragraph{Ingredient 2.} The next step is to upper bound the relative entropy appearing above using $\Delta_x$. In particular, we will show the following.\footnote{This lemma is a specialization of the data processing inequality that appears in Corollary 4.2 of~\cite{SBT16}. We include a proof of the special case we are interested in for completeness.}

\begin{lemma}
\label{lem:integral-pgm-recovery}
Fix a prior $p$ on $\calF$, a label $x\in\supp(p)$, and integers $n,k\geq1$.
Apply the integral PGM associated with $\{p_y,\rho_y^{\otimes n}\}_{y\in\calX}$ to $n$ copies of $\rho_x$, and let $\wh{\bx}$ denote its outcome.
Then
\begin{equation*}
    D\left(
        B_x^{\otimes k}
        \middle\| B_{\widehat{\bx}}^{(k)}(p, n)
    \right)
    \leq \Delta_x(p,n).
\end{equation*}
\end{lemma}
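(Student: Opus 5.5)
The plan is a direct computation. I will write both sides of the desired inequality as sums over the classical outcomes $b\in\{0,1\}^k$ of the Bernoulli register, and compare them term by term using \cref{lem:integral_PGM_inequality}. Throughout, write $\rho\coloneqq\rho_x^{\otimes n}$ and $S\coloneqq S(p,n)$, let $M_y=\calD_S(p_y\rho_y^{\otimes n})$ be the integral PGM operators, and write $B_y(b)\coloneqq\langle b|B_y^{\otimes k}|b\rangle$ for the diagonal entries.

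\textbf{Step 1: rewrite the estimated Bernoulli state.} For each $b$, define
\[
X_b\coloneqq\sum_y p_y B_y(b)\,\rho_y^{\otimes n}\succeq 0 .
\]
By linearity of $\calD_S$, the diagonal entries of $B^{(k)}_{\widehat{\bx}}(p,n)$ are
\[
\widehat B(b)=\sum_y \tr(\rho M_y)\,B_y(b)=\tr\big(\rho\cdot\calD_S(X_b)\big).
\]
Since $B_x^{\otimes k}$ is diagonal in the same basis, this gives
\[
D\big(B_x^{\otimes k}\,\big\|\,B^{(k)}_{\widehat{\bx}}\big)=\sum_b B_x(b)\big[\log B_x(b)-\log \widehat B(b)\big].
\]

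\textbf{Step 2: expand $\Delta_x(p,n)$.} Because every $B_y^{\otimes k}$ is diagonal, the extended average state is block diagonal:
\[
S^+=\sum_b X_b\otimes|b\rangle\langle b|,\qquad\text{so}\qquad \log S^+=\sum_b \log X_b\otimes|b\rangle\langle b|
\]
on its support. Expanding both relative entropies in the definition of $\Delta_x$, the $\tr(\rho\log\rho)$ terms cancel, and since $\sum_b B_x(b)=1$ we obtain
\[
\Delta_x(p,n)=\sum_b B_x(b)\Big[\log B_x(b)-\tr\big(\rho\,(\log X_b-\log S)\big)\Big].
\]

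\textbf{Step 3: compare term by term.} Apply \cref{lem:integral_PGM_inequality} with $\sigma=\rho$ and $X=X_b$. This gives
\[
\tr\big(\rho(\log X_b-\log S)\big)\le\log\tr\big(\rho\,\calD_S(X_b)\big)=\log\widehat B(b).
\]
Substituting this into the expression for $\Delta_x$ from Step 2 makes each bracket at least $\log B_x(b)-\log\widehat B(b)$. Summing with the nonnegative weights $B_x(b)$ yields exactly $\Delta_x(p,n)\ge D\big(B_x^{\otimes k}\,\|\,B^{(k)}_{\widehat{\bx}}(p,n)\big)$, which is the claim.

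\textbf{Main obstacle: support conditions.} The only genuinely delicate point is checking the hypotheses of \cref{lem:integral_PGM_inequality} and keeping the logarithms finite.
\begin{itemize}
\item Terms with $B_x(b)=0$ contribute nothing to either sum, so they can be dropped.
\item When $B_x(b)>0$, we have $p_xB_x(b)\rho\preceq X_b\preceq S$, using $x\in\supp(p)$ and $B_y(b)\le1$. Hence $\supp(\rho)\subseteq\supp(X_b)\subseteq\supp(S)$, as the lemma requires.
\item This also shows $\supp(\rho\otimes B_x^{\otimes k})\subseteq\supp(S^+)$, so $\Delta_x(p,n)$ is finite and the block expansion of $\log S^+$ is legitimate.
\item Finally, $\widehat B(b)>0$ whenever $B_x(b)>0$, so the left-hand relative entropy is finite as well.
\end{itemize}
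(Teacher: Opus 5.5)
Your proposal is correct and follows essentially the same route as the paper. You use the same operators $X_b$, the block decomposition $S^+=\sum_b X_b\otimes|b\rangle\langle b|$, the identity $\widehat B(b)=\tr(\rho\,\calD_S(X_b))$, a blockwise application of \cref{lem:integral_PGM_inequality} with $\sigma=\rho_x^{\otimes n}$ and $X=X_b$, and the same support check $p_xB_x(b)\rho\preceq X_b\preceq S$; the only cosmetic difference is that you expand $\Delta_x(p,n)$ as a weighted sum over $b$ first, whereas the paper sums the blockwise inequalities and then reassembles them into $\Delta_x(p,n)$.
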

\noindent This makes the probability of error of our algorithm for a fixed $x$ at most
\begin{equation}
    \label{eq:equation-avg-case-1}
    \Pr_{\wh{\bx} \sim \calA(x)} \Big[
        \big| \tr(O_i\cdot \rho_{\wh{\bx}})- \tr(O_i\cdot \rho_x) \big| >\eps 
    \Big]
    \leq 4e^{-k\epsilon^2/2} + \frac{1}{\sqrt{2}\cdot T} \sum_{t=T}^{2T-1} \sqrt{\Delta_x(p, kt)}.
\end{equation}
It is worth noting that this is the only step in the proof that depends on the output of~\Cref{alg:integral-pgm-estimator}.

For an $i$ such that $\mathsf{Bad}_{O_i} \neq \varnothing$, we can define a distribution $p^{(i)}$ as $p$ conditioned on drawing an $x \in \mathsf{Bad}_{O_i}$. That is, take
\begin{equation*}
    p^{(i)}_x \coloneqq \begin{cases}
        p_x/\Pr[\mathsf{Bad}_{O_i}] & \text{if}~x \in \mathsf{Bad}_{O_i}, \\
        0     & \text{if}~x \in \mathsf{Good}_{O_i}.
    \end{cases}
\end{equation*}
Since $x \in \mathrm{Bad}_{O_i}$ if and only if $\Pr_{\wh{\bx} \sim \calA(x)} \big[\abs{\tr(O_i\cdot \rho_{\wh{\bx}})- \tr(O_i\cdot \rho_x)}>\eps \big] > 1/8$, we have the following lower bound:
\begin{equation*}
    \sum_{x} p^{(i)}_x \cdot \Pr_{\wh{\bx} \sim \calA(x)} \Big[
        \big| \tr(O_i\cdot \rho_{\wh{\bx}})- \tr(O_i\cdot \rho_x) \big| >\eps 
    \Big] > 1/8.
\end{equation*}
However, we also have an upper bound from \Cref{eq:equation-avg-case-1}:
\begin{align*}
    \sum_{x} p^{(i)}_x \cdot \Pr_{\wh{\bx} \sim \calA(x)} \Big[
        \big| \tr(O_i\cdot \rho_{\wh{\bx}})- \tr(O_i\cdot \rho_x) \big| >\eps 
    \Big] 
    &\leq 4e^{-k\epsilon^2/2} + \frac{1}{\sqrt{2}\cdot T} \sum_x p^{(i)}_x \sum_{t=T}^{2T-1} \sqrt{\Delta_x(p, kt)} \\
    & \leq 4e^{-k\epsilon^2/2} + \frac{1}{\sqrt{2}} \cdot \sqrt{\frac{1}{T} \sum_{t=T}^{2T-1}  \sum_x p^{(i)}_x \cdot \Delta_x(p, kt)}.
\end{align*}
The second inequality above follows from Jensen's inequality and the fact that the square root is concave.

\paragraph{Ingredient 3.} The final ingredient allows us to further upper bound the quantity we have obtained under the square root, providing us with a tractable upper bound.

\begin{lemma}
\label{lem:info-bound}
Let $p$ and $q$ be priors on $\calF$ with
$\supp(q)\subseteq\supp(p)$. Then, for every $k,T\geq1$,
\[
    \frac1T
    \sum_{t=T}^{2T-1}
    \sum_x q_x \cdot \Delta_x(p, kt) \leq \frac{D(q\,\|\,p)+\log 2}{T}.
\]
\end{lemma}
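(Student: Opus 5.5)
The plan is to bound each $\Delta_x(p,n)$ by a difference of consecutive relative entropies and then telescope over the window $t\in\{T,\ldots,2T-1\}$. Write $f_x(n)\coloneqq D(\rho_x^{\otimes n}\,\|\,S(p,n))$.

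\textbf{Step 1: reduce $\Delta_x$ to an increment of $f_x$.} Consider the channel $\Phi$ that acts on $n+k$ registers, leaves the first $n$ untouched, and measures each of the last $k$ with the POVM $\{O_i,I-O_i\}$, writing the outcome into a diagonal qubit. By the Remark above, $\Phi(\rho_x^{\otimes(n+k)})=\rho_x^{\otimes n}\otimes B_x^{\otimes k}$ for every $x$. By linearity, $\Phi(S(p,n+k))=S^+(p,n,k)$. Data processing then gives
\[
D\big(\rho_x^{\otimes n}\otimes B_x^{\otimes k}\,\big\|\,S^+(p,n,k)\big)\leq f_x(n+k),
\qquad\text{hence}\qquad
\Delta_x(p,n)\leq f_x(n+k)-f_x(n).
\]
Taking $n=kt$ and summing over $t=T,\ldots,2T-1$, the right-hand side telescopes, so
\[
\frac1T\sum_{t=T}^{2T-1}\sum_x q_x\Delta_x(p,kt)\leq\frac1T\sum_x q_x\big(f_x(2kT)-f_x(kT)\big).
\]
Each $f_x$ is nondecreasing in $n$ (trace out a copy and apply data processing), so every summand is nonnegative. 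It therefore remains to show that the $q$-average of $f_x(2N)-f_x(N)$, with $N=kT$, is at most $D(q\|p)+\log 2$.

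\textbf{Step 2: split the $q$-average.} For any $n$, the standard identity
\[
\sum_x q_x f_x(n)=\chi_q(n)+D\big(S(q,n)\,\|\,S(p,n)\big),
\qquad \chi_q(n)\coloneqq\sum_x q_x D\big(\rho_x^{\otimes n}\,\|\,S(q,n)\big),
\]
splits the quantity into a Holevo term and a prior-mismatch term. The states $S(q,n)$ and $S(p,n)$ are the images of the classical distributions $q$ and $p$ under the channel $x\mapsto\rho_x^{\otimes n}$. Data processing therefore gives $0\leq D(S(q,n)\|S(p,n))\leq D(q\|p)$. The difference of these mismatch terms at $2N$ and at $N$ is thus at most $D(q\|p)$, which accounts for the $D(q\|p)$ in the bound.

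\textbf{Step 3 (the main obstacle): control the Holevo increment.} It remains to bound $\chi_q(2N)-\chi_q(N)$, which is the mutual information gained by the second block of $N$ copies. The chain rule together with strong subadditivity shows that the per-copy gains $I(X;A_{n+1}\mid A_1\cdots A_n)$ are nonincreasing in $n$. On its own this only gives $\chi_q(2N)-\chi_q(N)\leq\chi_q(N)$, which is not bounded by a constant.

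To obtain the additive $\log 2$, I would not subtract the full $\chi_q(N)$. Instead I would compare $S(p,2N)$ with the ``doubled'' mixtures $S(p,N)\otimes S(p,N)$ and $\tfrac12\big(S(p,N)\otimes S(p,N)+S(p,2N)\big)$, using operator monotonicity of $\log$ and the bound $\log\big(\tfrac12(A+B)\big)\succeq\log A-\log 2$. This is where a $\log 2$ naturally appears.

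If that route fails, my fallback is to choose the comparison prior so that the $f_x(kT)$ term absorbs the Holevo information. Concretely, I would bound $f_x(2N)$ relative to $S(p,2N)\succeq\tfrac12 S(p,2N)+\tfrac12 p_x\rho_x^{\otimes 2N}$. This is the step I expect to require the most care, and it is where the $\log 2$ constant must be extracted.
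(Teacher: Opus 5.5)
Your Step 1 inequality $\Delta_x(p,n)\le f_x(n+k)-f_x(n)$ is valid, but it is too lossy to be completed. It trades the $k$ classical outcomes of the single binary measurement $\{O_i,I-O_i\}$ for $k$ full copies of $\rho_x$. The resulting target $\sum_x q_x\big(f_x(2N)-f_x(N)\big)\le D(q\|p)+\log 2$ no longer involves $O_i$ at all, and it is false in general.

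For a counterexample, take $q=p$ uniform on $x\in\{0,1\}^r$ and $\rho_x=\sigma_{x_1}\otimes\cdots\otimes\sigma_{x_r}$, where $\sigma_b$ is the diagonal qubit state $\Ber(\tfrac12+(-1)^b\eta)$ with $0<\eta<\tfrac12$. Then $S(p,n)$ factorizes over the $r$ blocks, so $\sum_x p_x\big(f_x(2N)-f_x(N)\big)=\chi_p(2N)-\chi_p(N)=r\big(I_1(2N)-I_1(N)\big)$. Here $I_1(n)$ is the Holevo information of $\{\tfrac12,\sigma_b^{\otimes n}\}_{b\in\{0,1\}}$. The bracket equals the conditional mutual information $I(X_1;Z_{N+1},\ldots,Z_{2N}\mid Z_1,\ldots,Z_N)$. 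It is strictly positive, because the posterior given $Z_1,\ldots,Z_N$ is never degenerate and the two coins differ. So for $r$ large the left side exceeds $\log 2$ while $D(q\|p)=0$.

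Thus the full-copy Holevo increment in your Step 3 is genuinely dimension dependent. No comparison trick can extract $\log 2$ from it: not doubled mixtures, and not $S(p,2N)\succeq p_x\rho_x^{\otimes 2N}$, which only bounds $\sum_x q_x f_x(2N)$ by $D(q\|p)+H(q)$.

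The missing idea is to perform your Step 2 split on $\Delta_x$ itself, before any data processing, so that the Bernoulli registers stay inside the Holevo part. Applying the same identity to both terms of $\Delta_x$ gives
\[
\sum_x q_x\Delta_x(p,s)=\big(\chi^+_q(s)-\chi_q(s)\big)+D\big(S^+(q,s)\,\|\,S^+(p,s)\big)-D\big(S(q,s)\,\|\,S(p,s)\big),
\]
where $\chi^+_q(s)\coloneqq\sum_x q_x D\big(\rho_x^{\otimes s}\otimes B_x^{\otimes k}\,\|\,S^+(q,s)\big)$.

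Your measurement channel is then used only on the mismatch term: $D(S^+(q,s)\|S^+(p,s))\le D(S(q,s+k)\|S(p,s+k))$. This telescopes over $s=kt$ to at most $D(q\|p)$, exactly as in your Step 2.

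The Holevo difference equals $H(S^+(q,s))-H(S(q,s))-k\sum_x q_x h(\tr(O_i\rho_x))$. Measure the first $s$ copies with $\{O_i,I-O_i\}$ as well, using data processing on $D(S^+(q,s)\|S(q,s)\otimes I/2^k)$. This reduces the difference to a classical mixture of i.i.d.\ Bernoulli sequences. The Laplace-estimator redundancy bound then caps it at $\log\frac{s+k+1}{s+1}$, independently of dimension. These terms telescope to $\log\frac{2kT+1}{kT+1}\le\log 2$, which is where the $\log 2$ actually comes from.
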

By substituting $p^{(i)}$ in the place of $q$, we can exactly calculate the relative entropy on the right-hand side as follows
\begin{equation*}
    D(p^{(i)}\,\|\,p) = \sum_x p^{(i)}_x \cdot \log\Big(\frac{p^{(i)}_x}{p_x}\Big) = -\log \big(\Pr[\mathrm{Bad}_{O_i}] \big).
\end{equation*}
Combining with the inequalities above, we have that
\begin{equation*}
    \frac{1}{8} \leq 4e^{-k\epsilon^2/2} + \frac{1}{\sqrt{2}} \cdot \sqrt{\frac{-\log\big(\Pr[\mathrm{Bad}_{O_i}]\big) + \log(2)}{T}}
\end{equation*}
Setting $k \geq 12/\epsilon^2$ will make $4e^{-k\epsilon^2/2} \leq  4e^{-6} \leq 1/16$. Thus
\begin{equation*}
     \frac{1}{8} \leq \frac{1}{16} + \frac{1}{\sqrt{2}} \sqrt{\frac{-\log(\Pr[\mathrm{Bad}_{O_i}]) + \log(2)}{T}}.
\end{equation*}
Rearranging gives
\begin{equation*}
    \Pr[\mathrm{Bad}_{O_i}] \leq 2e^{-T/128}.
\end{equation*}
This completes the proof. \qed










\subsection{Deferred proofs}
\label{sec:deferred-proofs}

In this section, we fill in the proofs of the lemmas stated in the proof of \Cref{lem:randomized-integral-pgm}. We begin with the ``first ingredient''.

\begin{lemma*}[\Cref{lem:bernoulli-prediction}, restated]
Fix a label $x$, and let $\widehat{\bx}$ be a label estimator drawn according to some distribution. Recall our definition of the Bernoulli auxiliary state in~\Cref{def:estimated_auxiliary_bernoulli}.
Then for any observable $O_i$,
\begin{equation*}
    \Pr_{\wh{\bx}}\Big[
        \big|\tr(O_i\cdot \rho_{\wh{\bx}})- \tr(O_i\cdot \rho_x)\big|>\eps
    \Big]
    \leq 4e^{-k\epsilon^2/2} + \sqrt{
        \frac12 D\left( B_x^{\otimes k} \middle\| B_{\widehat{\bx}}^{(k)} \right)
    }.
\end{equation*}
\end{lemma*}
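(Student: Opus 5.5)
Write $a \coloneqq \tr(O_i\cdot\rho_x)$ and, for each label $y$, $a_y \coloneqq \tr(O_i\cdot\rho_y)$. The plan is to build a single classical test on $k$ bits and compare its acceptance probability under the two distributions $B_x^{\otimes k}$ and $B_{\widehat{\bx}}^{(k)}$. The gap between the two acceptance probabilities is bounded by total variation distance, and Pinsker's inequality then turns that into the relative entropy.

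Define the event $E \subseteq \{0,1\}^k$ of strings whose empirical mean lies within $\eps/2$ of $a$:
\[
E \coloneqq \Big\{ z : \big|\tfrac1k\textstyle\sum_j z_j - a\big| \le \eps/2 \Big\}.
\]

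Under $B_x^{\otimes k}$ the bits are i.i.d.\ $\Ber(a)$. Hoeffding's inequality therefore gives $B_x^{\otimes k}(E) \ge 1 - 2e^{-k\eps^2/2}$.

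Under $B_{\widehat{\bx}}^{(k)}$, the distribution is a mixture over $\widehat{x}$ of $\Ber(a_{\widehat{x}})^{\otimes k}$. Fix a "bad" label with $|a_{\widehat{x}} - a| > \eps$. For a string in $E$, the empirical mean is within $\eps/2$ of $a$, so it deviates from $a_{\widehat{x}}$ by more than $\eps/2$. Hoeffding then gives $\Ber(a_{\widehat{x}})^{\otimes k}(E) \le 2e^{-k\eps^2/2}$. Let $P \coloneqq \Pr[|a_{\widehat{\bx}} - a| > \eps]$ be the quantity we want to bound. Splitting the mixture into good and bad labels gives
\[
B_{\widehat{\bx}}^{(k)}(E) \le (1 - P) + P \cdot 2e^{-k\eps^2/2} \le 1 - P + 2e^{-k\eps^2/2}.
\]

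Subtracting the two bounds yields
\[
P \le B_x^{\otimes k}(E) - B_{\widehat{\bx}}^{(k)}(E) + 4e^{-k\eps^2/2} \le \mathrm{TV}\big(B_x^{\otimes k}, B_{\widehat{\bx}}^{(k)}\big) + 4e^{-k\eps^2/2}.
\]
Pinsker's inequality with natural logarithms states $\mathrm{TV} \le \sqrt{\tfrac12 D}$. Since both states are diagonal, the quantum relative entropy coincides with the classical KL divergence, and applying Pinsker finishes the proof.

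There is no serious obstacle here. The one point needing care is the strict inequality $|a_{\widehat{x}} - a| > \eps$: this is what guarantees a deviation strictly greater than $\eps/2$ from $a_{\widehat{x}}$, so that Hoeffding applies with the same exponent $k\eps^2/2$ to both tails. Keeping both the $x$-tail and the $\widehat{x}$-tail at that exponent is exactly what produces the constant $4$.
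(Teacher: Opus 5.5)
Your proof is correct and follows the paper's argument step for step: the same test event $\{z : |\tfrac1k\sum_j z_j - \tr(O_i\rho_x)| \le \eps/2\}$, Hoeffding applied to both the true distribution and each bad label, the good/bad split of the mixture over $\widehat{x}$, and Pinsker to pass from total variation to relative entropy. One small correction to your closing remark: the strict inequality $|a_{\widehat{x}}-a|>\eps$ is not needed for the exponent, since Hoeffding already bounds deviations of at least $\eps/2$ by $2e^{-k\eps^2/2}$.
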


\begin{proof}[Proof of \cref{lem:bernoulli-prediction}]

We will drop the $k$-superscripts for convenience. View the $\{B_{x'}\}_{x' \in \calL}$ and $B_{\widehat{\bx}}$ as classical probability distributions, and abbreviate $\tr(\rho_{x'} \cdot O_i) \to \theta_{x'}$ so that $B_{x'} = \Ber( \theta_{x'})^{\otimes k}$. Let $z$ be a binary string of length $k$, and write $\overline{z} \coloneq |z|/k$ for its average value. Then by Hoeffding's inequality,
\begin{equation} \label{eq:4.7_eq_1}
    \Pr_{\bz \sim B_{x'}} \Big[ \big|\overline{\bz} - \theta_{x'} \big| > \epsilon/2 \Big] \leq 2e^{-k\eps^2/2}. 
\end{equation} 
Consequently, if $| \theta_{x'} - \theta_x| > \epsilon$, then 
\begin{equation*}
    \Pr_{\bz \sim B_{x'}} \Big[ \big|\overline{\bz} - \theta_{x} \big| \leq \epsilon/2 \Big]  \leq \Pr_{\bz \sim B_{x'}} \Big[ \big|\overline{\bz} - \theta_{x'} \big| > \epsilon/2 \Big]  \leq 2e^{-k\eps^2/2}. 
\end{equation*}
On the other hand, if $|\theta_{x'} - \theta_x| \leq \eps$, we always have the trivial bound $\Pr_{\bz \sim B_{x'}} \big[ |\overline{\bz} - \theta_{x} | \leq \epsilon/2 \big] \leq 1$. Thus
\begin{align}
    \Pr_{\bz \sim B_{\widehat{\bx}}} \Big[ \big| \overline{\bz} - \theta_x\big| \leq \eps/2 \Big] & \leq \Pr_{\widehat{\bx}} \Big[ \big| \theta_{\widehat{\bx}} - \theta_{x} \big| > \eps \Big] \cdot \Big( 2e^{-k\eps^2/2} \Big) + \Pr_{\widehat{\bx}} \Big[ \big| \theta_{\widehat{\bx}} - \theta_{x} \big| \leq \eps \Big] \cdot 1 \nonumber \\
    & = 1 - \Big( 1 - 2e^{-k\eps^2/2} \Big) \cdot \Pr_{\widehat{\bx}} \Big[ \big| \theta_{\widehat{\bx}} - \theta_{x} \big| > \eps \Big] \nonumber \\
    & \leq 1 - \Pr_{\widehat{\bx}} \Big[ \big| \theta_{\widehat{\bx}} - \theta_{x} \big| > \eps \Big] + 2e^{-k\eps^2/2}. \label{eq:4.7_eq_2}
\end{align}
Hence, the total variation distance between the auxiliary Bernoulli states satisfies 
\begin{align*}
    \mathrm{TV}( B_x, B_{\widehat{\bx}}) & \geq \Pr_{\bz \sim B_{x}} \Big[ \big| \overline{\bz} - \theta_x\big| \leq \eps/2 \Big] - \Pr_{\bz \sim B_{\widehat{\bx}}} \Big[ \big| \overline{\bz} - \theta_x\big| \leq \eps/2 \Big] \\
    & \geq \Big( 1 - 2e^{-k\eps^2/2} \Big) - \Big( 1 - \Pr_{\widehat{\bx}} \Big[ \big| \theta_{\widehat{\bx}} - \theta_{x} \big| > \eps \Big] + 2e^{-k\eps^2/2}\Big) \tag{\Cref{eq:4.7_eq_1,eq:4.7_eq_2}} \\
    & = \Pr_{\widehat{\bx}} \Big[ \big| \theta_{\widehat{\bx}} - \theta_{x} \big| > \eps \Big] - 4e^{-k\epsilon^2/2}.
\end{align*}
Rearranging, and using Pinsker's inequality, we obtain:
\begin{equation*}
    \Pr_{\widehat{\bx}} \Big[ \big| \theta_{\widehat{\bx}} - \theta_{x} \big| > \eps \Big] \leq 4e^{-k\epsilon^2/2} + \mathrm{TV}( B_x, B_{\widehat{\bx}}) \leq 4e^{-k\epsilon^2/2} + \sqrt{ \frac{1}{2} D(B_x \| B_{\widehat{\bx}})}. 
\end{equation*}
This completes the proof. \qedhere

\end{proof}

We continue with the ``second ingredient''. 

\begin{lemma*}[\Cref{lem:integral-pgm-recovery}, restated]
Fix a prior $q$ on $\calF$, a label $x\in\supp(q)$, and integers $n,k\geq1$.
Apply the integral PGM associated with $\{q_y,\rho_y^{\otimes n}\}_{y\in\calX}$ to $n$ copies of $\rho_x$, and let $\wh{\bx}$ denote its outcome.
Then
\begin{equation*}
    D\left(
        B_x^{\otimes k}
        \middle\| B_{\widehat{\bx}}^{(k)}(q, n)
    \right)
    \leq \Delta_x(p,n).
\end{equation*}
\end{lemma*}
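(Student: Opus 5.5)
The plan is to compute $\Delta_x(p,n)$ directly, exploiting the fact that the auxiliary registers are classical, and then apply \cref{lem:integral_PGM_inequality} once for each classical outcome. I will treat the prior in both sides as the same $p$: the $q$ in the restatement is the same prior as $p$ in the original statement. For $z\in\{0,1\}^k$ write $b_y(z)\coloneqq\langle z|B_y^{\otimes k}|z\rangle$. Because every $B_y^{\otimes k}$ is diagonal in the computational basis, the extended average state is block diagonal:
\[
S^+(p,n,k)=\sum_{z} S_z\otimes\ketbra{z}{z},\qquad S_z\coloneqq\sum_y p_y\, b_y(z)\,\rho_y^{\otimes n}.
\]
Likewise $\rho_x^{\otimes n}\otimes B_x^{\otimes k}=\sum_z b_x(z)\,\rho_x^{\otimes n}\otimes\ketbra{z}{z}$.

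\textbf{Step 1: rewrite $\Delta_x$.} Using the block structure, the logarithm splits blockwise. Summing only over $z$ with $b_x(z)>0$, and using $\sum_z b_x(z)=1$, I expect
\[
D\big(\rho_x^{\otimes n}\otimes B_x^{\otimes k}\,\big\|\,S^+\big)=\sum_z b_x(z)\Big[\log b_x(z)+\tr\big(\rho_x^{\otimes n}\log\rho_x^{\otimes n}\big)-\tr\big(\rho_x^{\otimes n}\log S_z\big)\Big].
\]
Subtracting $D(\rho_x^{\otimes n}\|S)=\tr(\rho_x^{\otimes n}\log\rho_x^{\otimes n})-\tr(\rho_x^{\otimes n}\log S)$, the entropy terms cancel, leaving
\[
\Delta_x(p,n)=\sum_z b_x(z)\log b_x(z)-\sum_z b_x(z)\,\tr\big(\rho_x^{\otimes n}(\log S_z-\log S)\big).
\]

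\textbf{Step 2: apply \cref{lem:integral_PGM_inequality}.} For each $z$ with $b_x(z)>0$, apply the lemma with $\sigma=\rho_x^{\otimes n}$ and $X=S_z$. The support hypotheses hold because $p_x b_x(z)\rho_x^{\otimes n}\preceq S_z\preceq S$, which uses $x\in\supp(p)$ and entries $b_y(z)\le 1$. The lemma gives
\[
\tr\big(\rho_x^{\otimes n}(\log S_z-\log S)\big)\le\log\tr\big(\rho_x^{\otimes n}\calD_S(S_z)\big).
\]
By linearity of $\calD_S$ and the definition of the integral PGM,
\[
\tr\big(\rho_x^{\otimes n}\calD_S(S_z)\big)=\sum_y b_y(z)\,\tr\big(\rho_x^{\otimes n}M_y\big)=\sum_y \Pr[\wh{\bx}=y]\,b_y(z).
\]
This last quantity is exactly the $z$-th diagonal entry of $B^{(k)}_{\wh{\bx}}(p,n)$.

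\textbf{Step 3: conclude.} Substituting into Step 1 gives
\[
\Delta_x(p,n)\ge\sum_z b_x(z)\log\frac{b_x(z)}{B^{(k)}_{\wh{\bx}}(p,n)(z)}=D\big(B_x^{\otimes k}\,\big\|\,B^{(k)}_{\wh{\bx}}(p,n)\big).
\]
This is the claim. The right-hand side is finite: whenever $b_x(z)>0$, the entry $B^{(k)}_{\wh{\bx}}(z)$ is positive, since $\tr(\rho_x^{\otimes n}\calD_S(S_z))\ge\tr(\rho_x^{\otimes n}\calD_S(p_xb_x(z)\rho_x^{\otimes n}))>0$ by positivity of $\calD_S$ and $\calD_S(S)=I$ on the support.

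\textbf{Main obstacle.} The algebra is straightforward; the delicate part is the bookkeeping in Step 1. One must justify that $\log S^+$ is computed blockwise on supports, that blocks with $b_x(z)=0$ contribute nothing, and that every relative entropy involved is finite. Once those support issues are settled, Steps 2 and 3 are a direct application of \cref{lem:integral_PGM_inequality} followed by the coarse-graining identity for the PGM outcomes.
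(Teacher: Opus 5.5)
Your proposal is correct and follows essentially the same route as the paper: write $S^+$ block-diagonally as $\sum_z X_z\otimes\ketbra{z}{z}$, apply \cref{lem:integral_PGM_inequality} with $\sigma=\rho_x^{\otimes n}$ and $X=X_z$ for each $z$ with $b_x(z)>0$, identify $\tr(\rho_x^{\otimes n}\calD_S(X_z))$ with the $z$-th diagonal entry of $B^{(k)}_{\wh{\bx}}$, and then sum with weights $b_x(z)$. The only difference is ordering (you expand $\Delta_x$ first, while the paper applies the inequality first and recognizes $\Delta_x$ at the end). Your reading of the $p$/$q$ mismatch as a typo for a single prior also matches what the paper's proof actually establishes.
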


\begin{proof}[Proof of \cref{lem:integral-pgm-recovery}]
Write $S\coloneqq S(q,n) = \sum_y q_y\rho_y^{\otimes n}$.
Since each $B_y^{\otimes k}$ is diagonal in the computational basis, for each
$z\in\{0,1\}^k$ define
\[
    X_z
    \coloneqq
    \sum_y
    q_y \cdot 
    \langle z|B_y^{\otimes k}|z\rangle \cdot 
    \rho_y^{\otimes n}.
\]
Then we have
\[
    S^+(q,n) = \sum_z X_z\otimes\ketbra{z}{z}.
\]

We first relate $X_z$ to the distribution predicted by the PGM. By the definition of $B_{\wh{\bx}}^{(k)}(q,n)$, we have
\begin{equation} \label{eq:second_ingredient_proof_eq_1}
\bra{z} B_{\widehat{\bx}}^{(k)}(q,n)\ket{z}
    = \sum_y
    \Pr_{\wh{\bx} \sim \calA(x,n)}[\wh{\bx}=y]
    \cdot \bra{z}B_y^{\otimes k}\ket{z}= \tr\!\big( \rho_x^{\otimes n} \cdot \calD_S(X_z) \big),
\end{equation}
where we have used the definition of the integral PGM and the linearity of
$\calD_S$.

Now fix $z$ such that $\bra{z}B_x^{\otimes k}\ket{z}>0$.
Since $x\in\supp(q)$, we have $q_x>0$, and hence $X_z \succeq q_x \cdot \bra{z}B_x^{\otimes k}\ket {z} \cdot \rho_x^{\otimes n}$.
Thus $\supp(\rho_x^{\otimes n}) \subseteq \supp(X_z)$.
Moreover,
\[
    X_z \preceq \sum_y q_y \rho_y^{\otimes n} = S,
\]
so $\supp(X_z)\subseteq\supp(S)$.
We may therefore apply \cref{lem:integral_PGM_inequality} with
$\sigma=\rho_x^{\otimes n}$, obtaining
\[
    \tr\!\big( \rho_x^{\otimes n} \cdot (\log X_z-\log S) \big) \leq \log \tr \big( \rho^{\otimes n}_x \cdot \calD_S(X_z) \big) = \log \bra{z} B_{\widehat{\bx}}^{(k)}(q,n) \ket z. 
\]
Here, we have also used \Cref{eq:second_ingredient_proof_eq_1}. 
Multiplying by $\langle z|B_x^{\otimes k}|z\rangle$, using the fact that $B_x^{\otimes k}$ and $B_{\wh{\bx}}^{(k)}$ are both states diagonal in the $z$-basis, and summing over $z$ then gives
\begin{align*}
    \tr\!\Big(
        B_x^{\otimes k} \cdot \log B_{\widehat{\bx}}^{(k)}(q,n)
    \Big)
    &\geq
    \Big(\sum_z \bra{z}B_x^{\otimes k}\ket z \cdot 
    \tr\big(\rho_x^{\otimes n} \cdot \log X_z\big) \Big) - \tr\big(\rho_x^{\otimes n} \cdot \log S\big).
\end{align*}
Since $S^+(q,n)=\sum_z X_z\otimes\ketbra{z}{z}$ is block-diagonal in the
$z$-register,
\[
    \sum_z \bra z B_x^{\otimes k} \ket z  \cdot \tr\big(\rho_x^{\otimes n} \cdot \log X_z\big)
    = \tr\!\Big( \big(\rho_x^{\otimes n}\otimes B_x^{\otimes k}\big) \cdot \log S^+(q,n)
    \Big).
\]
Therefore,
\begin{align*}
    D\left(
        B_x^{\otimes k}
        \middle\|
        B_{\widehat{\bx}}^{(k)}(q,n)
    \right) & = \tr \big( B^{\otimes k}_{x} \cdot \log B^{\otimes k}_x \big) - \tr \Big( B^{\otimes k}_x \cdot \log B^{(k)}_{\wh{\bx}}(q,n) \Big)\\
    & \leq \tr \big( B^{\otimes k}_{x} \cdot \log B^{\otimes k}_x \big) -\Big(\sum_z \bra{z}B_x^{\otimes k}\ket z \cdot 
    \tr\big(\rho_x^{\otimes n} \cdot \log X_z\big) \Big) + \tr\big(\rho_x^{\otimes n} \cdot \log S\big) \\
    & = \tr \big( B^{\otimes k}_{x} \cdot \log B^{\otimes k}_x \big) -\tr\!\big( \big(\rho_x^{\otimes n}\otimes B_x^{\otimes k}\big) \cdot \log S^+(q,n)
    \big) + \tr\big(\rho_x^{\otimes n} \cdot \log S\big).
\end{align*}
Finally, using
\[
    \log(\rho_x^{\otimes n}\otimes B_x^{\otimes k})
    = \log(\rho_x^{\otimes n})\otimes I + I\otimes\log(B_x^{\otimes k}),
\]
the right-hand side is exactly
\begin{align*}
    \Delta_q(x,n) & = D\left(
        \rho_x^{\otimes n}\otimes B_x^{\otimes k} \middle\| S^+(q,n)
    \right)
    - D\left(
        \rho_x^{\otimes n} \middle\| S(q,n)
    \right) \\
    & = \tr \Big( \rho_x^{\otimes n}\otimes B_x^{\otimes k} \cdot \log \big(\rho_x^{\otimes n}\otimes B_x^{\otimes k}\big) \Big) - \tr \Big( \rho_x^{\otimes n}\otimes B_x^{\otimes k} \cdot \log S^+ \Big) - \tr \Big( \rho_x^{\otimes n} \cdot \log \rho_x^{\otimes n} \Big) + \tr\Big( \rho_x^{\otimes n} \cdot \log S \Big),
\end{align*}
since 
\begin{equation*}
     \tr \Big( \rho_x^{\otimes n}\otimes B_x^{\otimes k} \cdot \log \big(\rho_x^{\otimes n}\otimes B_x^{\otimes k}\big) \Big) = \tr \Big( \rho_x^{\otimes n} \cdot \log \rho_x^{\otimes n} \Big) + \tr \Big( B^{\otimes k}_x \cdot \log B^{\otimes k}_x \Big).
\end{equation*}
This completes the proof.
\end{proof}

Finally, we fill in the proof of the ``third ingredient". This requires a little more effort than the previous proofs. We start by restating the lemma. 

\begin{lemma*}[\Cref{lem:info-bound}]
Let $p$ and $q$ be priors on $\calF$ with
$\supp(q)\subseteq\supp(p)$. Then, for every $k,T\geq1$,
    \begin{equation*}
        \frac1T
        \sum_{t=T}^{2T-1}
        \sum_x q_x \cdot \Delta_x(p, kt) \leq \frac{D(q\,\|\,p)+\log 2}{T}.
    \end{equation*}
\end{lemma*}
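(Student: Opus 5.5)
The plan is to telescope. I will show that each $\Delta_x(p,n)$ is bounded by an increment of a single function of the number of copies, so the average over $t$ collapses to a boundary term divided by $T$. For a label $x$ and an integer $N \ge 1$, define
\[
f_x(N) \coloneqq D\big(\rho_x^{\otimes N} \,\big\|\, S(p,N)\big),
\qquad
F(N) \coloneqq \sum_x q_x f_x(N).
\]
Since $\supp(q)\subseteq\supp(p)$, every $x$ with $q_x>0$ has $p_x>0$. Then $S(p,N)\succeq p_x\rho_x^{\otimes N}$, so $\supp(\rho_x^{\otimes N}) \subseteq \supp(S(p,N))$ and all these quantities are finite.

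\textbf{Step 1 (one-step bound).} Consider the channel $\calM$ that acts on $n+k$ registers, leaves the first $n$ untouched, and measures each of the last $k$ with the POVM $\{O_i, I-O_i\}$, recording the outcomes classically. By the Remark after the definition of the auxiliary Bernoulli state,
\[
\calM(\rho_x^{\otimes (n+k)}) = \rho_x^{\otimes n}\otimes B_x^{\otimes k}.
\]
By linearity, $\calM(S(p,n+k)) = S^+(p,n,k)$. Data processing then gives
\[
D\big(\rho_x^{\otimes n}\otimes B_x^{\otimes k}\,\big\|\,S^+(p,n,k)\big) \le f_x(n+k),
\]
and therefore $\Delta_x(p,n) \le f_x(n+k) - f_x(n)$. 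Multiplying by $q_x$ and summing over $x$ gives $\sum_x q_x \Delta_x(p,n) \le F(n+k)-F(n)$.

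\textbf{Step 2 (telescoping).} Set $n = kt$, so that $n+k = k(t+1)$. Summing over $t = T,\ldots,2T-1$ collapses the right-hand side:
\[
\sum_{t=T}^{2T-1} \sum_x q_x \Delta_x(p,kt) \le F(2kT) - F(kT).
\]
This is where the random choice of $\bt$ in \cref{alg:integral-pgm-estimator} pays off: averaging over $t$ converts a total information budget into an average per-step increment of order $1/T$. It therefore remains to show
\[
F(2kT) - F(kT) \le D(q\|p) + \log 2.
\]

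\textbf{Step 3 (bounding the boundary term; the main obstacle).} The crude bound is $F(N) \le \sum_x q_x \log(1/p_x) = D(q\|p) + H(q)$. It follows from $\log S(p,N) \succeq \log(p_x \rho_x^{\otimes N})$, which holds by operator monotonicity of $\log$. This loses $H(q)$ instead of $\log 2$, so the subtraction of $F(kT)$ must be used.

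First I will split $F$ using $S_q(N) \coloneqq S(q,N)$:
\[
F(N) = \chi_q(N) + D\big(S_q(N) \,\|\, S_p(N)\big),
\]
where $\chi_q(N) = \sum_x q_x D(\rho_x^{\otimes N} \| S_q(N))$ is the Holevo information of the ensemble $\{q_x, \rho_x^{\otimes N}\}$. The second term is at most $D(q\|p)$ by data processing applied to the classical--quantum states $\sum_x q_x \ketbra{x}{x} \otimes \rho_x^{\otimes N}$ and $\sum_x p_x \ketbra{x}{x} \otimes \rho_x^{\otimes N}$. Being nonnegative, it can also be dropped from $F(kT)$.

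The remaining task is to show $\chi_q(2N) - \chi_q(N) \le \log 2$ with $N = kT$. This is the step I am least sure of. Conditional independence of the copies given $x$ only yields $\chi_q(2N)\le 2\chi_q(N)$, which is too weak.

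What I would try first is to avoid this decomposition and instead compare the $2N$-copy and $N$-copy average states directly. Write $2N$ copies as $N+N$. Then $S_p(2N) \succeq \tfrac12\big(S_p(2N) + S'\big)$ for a suitably chosen $S'$, so that a factor-$2$ mixing costs exactly $\log 2$ via operator monotonicity of $\log$, and the second block of $N$ copies is controlled by $f_x(N)$ through data processing. Alternatively, I would redefine the telescoped potential with a mixed prior such as $\tfrac12(p+q)$, so that the boundary term becomes $D(q\|p)$ plus a mixing penalty of $\log 2$.

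If neither works, the fallback is the weaker bound $(D(q\|p)+H(q))/T$. That bound already suffices qualitatively, but it gives a worse constant in \cref{lem:randomized-integral-pgm}.
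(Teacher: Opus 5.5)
\textbf{There is a genuine gap.} Your Step 1 inequality is valid, but it is too lossy. As a result, the Step 3 target $F(2kT)-F(kT)\le D(q\|p)+\log 2$ is false in general, not merely hard to prove.

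A warning sign is that after Step 1 the observable $O_i$ has vanished, so your bound depends only on the family. Take $q=p$. Then $D(q\|p)=0$, $F(N)=\chi_p(N)$, and your target becomes $\chi_p(2kT)-\chi_p(kT)\le\log 2$. Here is a counterexample with $k=T=1$. Let $\calF$ consist of the diagonal states $\rho_{\{a,b\}}=\frac12(|a\rangle\langle a|+|b\rangle\langle b|)$ over unordered pairs $a\neq b$ in $[L]$, and let $p$ be uniform. A direct computation gives $F(1)=\log L-\log 2$ and $F(2)=\log L+\frac12\log(L-1)-\log 2$. So $F(2)-F(1)=\frac12\log(L-1)$, which exceeds $\log 2$ once $L\ge 6$.

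The lemma itself is fine here. Its left-hand side is $I(\bx;W\mid Z_1)\le H(W)\le\log 2$, where $W$ is the single Bernoulli bit. Step 1 replaced $W$ by a full copy $Z_2$. More generally, $\chi_p(2N)-\chi_p(N)$ scales roughly with the number of parameters of the family, so neither of your Step 3 ideas can succeed.

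Your fallback is also not just a worse constant. With $q=p^{(i)}$, you get $D(q\|p)+H(q)=\sum_x p^{(i)}_x\log(1/p_x)$, which equals $\log|\calF|$ for a uniform prior. The resulting inequality then says nothing about $\Pr[\mathsf{Bad}_{O_i}]$ unless $T\gtrsim\log|\calF|$. That is exactly the dimension dependence the argument is meant to avoid.

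\textbf{The missing idea} is to split $\Delta_x$ itself using the $q$-average before any data processing, as the paper does via \Cref{prop:donalds-identity}:
\begin{align*}
\sum_x q_x\Delta_x(p,s)
&=\Big[\sum_x q_x D\big(\rho_x^{\otimes s}\otimes B_x^{\otimes k}\,\big\|\,S^+(q,s)\big)-\sum_x q_x D\big(\rho_x^{\otimes s}\,\big\|\,S(q,s)\big)\Big]\\
&\qquad+D\big(S^+(q,s)\,\big\|\,S^+(p,s)\big)-D\big(S(q,s)\,\big\|\,S(p,s)\big).
\end{align*}

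Your data-processing step belongs only on the prior-mismatch terms. There it gives $D(S^+(q,s)\|S^+(p,s))\le D(S(q,s+k)\|S(p,s+k))$, and these terms telescope to at most $D(q\|p)$.

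The bracket must keep the Bernoulli registers. It equals $H(S^+(q,s))-H(S(q,s))-k\sum_x q_x h(\tr(O_i\rho_x))$. Apply data processing to $D(S^+(q,s)\|S(q,s)\otimes I/2^k)$ under measuring the first $s$ copies with $\{O_i,I-O_i\}$. This bounds the entropy difference by its analogue for the classical mixture of $\mathrm{Ber}(\theta_x)^{\otimes(s+k)}$. The Laplace-estimator bound for this one-parameter family then caps the bracket at $\log\frac{s+k+1}{s+1}$ for each $s$; this is \Cref{prop:info-bound-per-t}, to be compared with your per-step $F(s+k)-F(s)$. These per-step terms telescope multiplicatively to $\log\frac{2kT+1}{kT+1}\le\log 2$.

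So the $\log 2$ is a per-step Bernoulli information increment tied to the single observable, not a mixing penalty. It is exactly what Step 1 destroys by trading $B_x^{\otimes k}$ for $\rho_x^{\otimes k}$ inside the whole relative entropy.
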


We will prove a bound when the number of samples is fixed at some value $s$, which will imply the desired result. 

\begin{lemma}
    \label{prop:info-bound-per-t}
    Let $p$ and $q$ be priors on $\calF$ with $\supp(q) \subseteq\ supp(p)$. Then, for every $s, k\geq1$,
    \begin{equation*}
        \sum_x q_x \cdot \Delta_x(p, s) \leq D\big(S(q, s+k) \,\big\|\, S(p, s+k)\big) - D\big(S(q, s) \big\| S(p, s)\big) + \log(\frac{s + k + 1}{s + 1}).
    \end{equation*}
\end{lemma}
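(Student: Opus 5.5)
The plan is to split $\sum_x q_x \Delta_x(p,s)$ into a ``cross'' part, which telescopes into the two relative entropies between averaged states, and a ``Holevo'' part, which must be controlled by $\log\frac{s+k+1}{s+1}$. The starting point is the exact identity, valid for any ensemble $\{\sigma_x\}$ and priors $q,p$ with $\supp(q)\subseteq\supp(p)$:
\begin{equation*}
  \sum_x q_x D\Big(\sigma_x \,\Big\|\, \sum_y p_y\sigma_y\Big) = D\Big(\sum_x q_x\sigma_x \,\Big\|\, \sum_y p_y\sigma_y\Big) + \sum_x q_x D\Big(\sigma_x \,\Big\|\, \sum_y q_y \sigma_y\Big).
\end{equation*}
It follows by expanding the logarithms, since $\sum_x q_x \sigma_x$ is the same operator in both cross terms.

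I will apply this identity twice: once with $\sigma_x = \rho_x^{\otimes s}\otimes B_x^{\otimes k}$, and once with $\sigma_x=\rho_x^{\otimes s}$. Subtracting the two gives
\begin{equation*}
  \sum_x q_x\Delta_x(p,s) = \Big[D\big(S^+(q,s,k)\,\|\,S^+(p,s,k)\big) - D\big(S(q,s)\,\|\,S(p,s)\big)\Big] + \Big[\chi^+_q - \chi_q\Big],
\end{equation*}
where $\chi^+_q=\sum_x q_x D(\rho_x^{\otimes s}\otimes B_x^{\otimes k}\,\|\,S^+(q,s,k))$ and $\chi_q=\sum_x q_x D(\rho_x^{\otimes s}\,\|\,S(q,s))$ are Holevo quantities of the $q$-ensemble.

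For the first bracket I use the Remark after the definition of the auxiliary Bernoulli state. Measuring the last $k$ copies with $\{O_i,I-O_i\}$ is a single channel mapping $S(q,s+k)\mapsto S^+(q,s,k)$ and $S(p,s+k)\mapsto S^+(p,s,k)$ simultaneously. Data processing then gives $D(S^+(q,s,k)\|S^+(p,s,k)) \le D(S(q,s+k)\|S(p,s+k))$, which produces exactly the first two terms of the claimed bound. So it remains to show
\begin{equation*}
  \chi^+_q-\chi_q \;=\; I(X; B^k \mid A) \;\le\; \log\frac{s+k+1}{s+1},
\end{equation*}
where $X\sim q$ and $A$ is the $s$-copy register. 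Note that $p$ has dropped out entirely.

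This last step is where I expect the difficulty. Bounding it by the crude classical estimate $\log(k+1)$ (the count $|z|$ is a sufficient statistic) is too weak. Passing back to quantum copies via $I(X;B^k|A)\le I(X;\rho^{\otimes k}\mid A)$ loses all control, since the latter can grow with dimension. My plan is instead to prove the operator inequality
\begin{equation*}
  S^+(q,s,k) \succeq \frac{s+1}{s+k+1}\cdot \text{(something whose comparison with } \rho_x^{\otimes s}\otimes B_x^{\otimes k} \text{ costs nothing in expectation)},
\end{equation*}
modeled on Laplace's rule of succession. For coins, a uniform (Beta) mixture observed after $s$ flips predicts the next $k$ flips with a loss, relative to the true bias, of $\log\frac{s+k+1}{s+1}$ in the worst case. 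Concretely, I would chain over the bits: $I(X;B^k|A)=\sum_{j=1}^k I(X;B_j\mid A,B_{<j})$. I would then try to show that the $j$-th term is at most $\log\frac{s+j+1}{s+j}$, so that the sum telescopes to $\log\frac{s+k+1}{s+1}$.

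To get the per-bit bound, I would first try to exploit exchangeability. Symmetrizing over which of the $s+j$ ``previous'' systems is used should let the $j$-th new bit be compared against a mixture that is at least $\frac{s+j}{s+j+1}$ times the conditional state. Operator monotonicity of $\log$ would then convert this operator inequality into the logarithmic bound.
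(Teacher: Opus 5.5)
Your decomposition and your handling of the cross terms are exactly the paper's. The paper applies the same identity to both ensembles and bounds $D(S^+(q,s,k)\,\|\,S^+(p,s,k))\le D(S(q,s+k)\,\|\,S(p,s+k))$ by measuring the last $k$ copies. The gap is the remaining bound $\chi^+_q-\chi_q=I(X;B^k\mid A)\le\log\frac{s+k+1}{s+1}$, which you only sketch, and the sketched mechanism does not work.

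\textbf{Why the operator-inequality plan fails.} Via operator monotonicity of $\log$, an inequality $S^+\succeq c\,Y$ amounts, when everything is diagonal, to pointwise domination of the Bayes predictive probability by $c$ times a reference predictor. For prior-independent references this fails even classically. Take $q$ a point mass on a state with $\tr(O_i\rho_x)=\theta_0$ small. After a history of $m$ ones, the Bayes probability of a further one is $\theta_0$, whereas Laplace's rule gives $\frac{m+1}{m+2}$ and copying a random previous bit gives $1$, so no fixed $c$ works. The comparison with Laplace's rule holds only in expectation (the Bayes predictor minimizes expected log-loss), not pointwise. The exchangeability you want to symmetrize over is also unavailable: the $s$ registers of $A$ hold copies of $\rho_x$, not bits, so the new bit $B_j$ cannot be exchanged with them.

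\textbf{The missing idea.} Apply data processing to the conditioning register $A$ rather than to $B^k$ (the direction you rightly rejected). Because $B^k$ and $A$ are independent given $X$, $I(X;B^k\mid A)=H(B^k\mid A)-k\,\E_{x\sim q}h(\tr(O_i\rho_x))$. Only the first term depends on $A$, and $H(B^k\mid A)=k\log 2-D\big(S^+(q,s,k)\,\|\,S(q,s)\otimes I/2^k\big)$ can only increase when a channel acts on $A$. Measuring each of the $s$ copies with $\{O_i,I-O_i\}$ maps $S^+(q,s,k)\mapsto\sum_x q_xB_x^{\otimes(s+k)}$ and $S(q,s)\mapsto\sum_x q_xB_x^{\otimes s}$. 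Hence $\chi^+_q-\chi_q\le H(\bz_1,\dots,\bz_{s+k})-H(\bz_1,\dots,\bz_s)-k\,\E\,h(\btheta)$, where the $\bz_j\sim\Ber(\btheta)$ are i.i.d.\ given $\btheta=\tr(O_i\rho_{\bx})$ with $\bx\sim q$.

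\textbf{Finishing classically.} Now your chain rule and per-bit target go through. We have $H(\bz_{m+1}\mid\bz_{\le m})=\E\,h(\bphi)$, with $\bphi$ the posterior mean. By Gibbs' inequality this is at most the expected log-loss of the Laplace estimator $\frac{\bz_1+\cdots+\bz_m+1}{m+2}$. That log-loss exceeds $\E\,h(\btheta)$ by at most $\log\frac{m+2}{m+1}$ for every fixed $\theta$ \cite{MG22}. Summing over $m=s,\dots,s+k-1$ telescopes to $\log\frac{s+k+1}{s+1}$.
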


The proof of~\Cref{lem:info-bound} follows via telescoping.
\begin{proof}[Proof of \Cref{lem:info-bound}, assuming \Cref{prop:info-bound-per-t}]
    We apply~\Cref{prop:info-bound-per-t} to each term of the summation over $t$ to get
    \begin{align*}
        &\frac1T \sum_{t=T}^{2T-1} \sum_x q_x \cdot \Delta_x(p, kt) \\
        &\leq \frac1T \bigg[\sum_{t=T}^{2T-1} \Big( D\big(S(q, kt+k) \,\big\|\, S(p, kt+k)\big) - D\big(S(q, kt) \,\big\|\, S(p, kt)\big) \Big) + \sum_{t=T}^{2T-1} \log(\frac{kt + k + 1}{kt + 1}) \bigg] \\
        &= \frac1T \bigg[D\big(S(q, 2kT) \,\big\|\, S(p, 2kT)\big) - D\big(S(q, kT) \,\big\|\, S(p, kT)\big) + \log(\frac{2kT + 1}{kT + 1}) \bigg] \\
        &\leq \frac1T \Big[D\big(S(q, 2kT)\,\big\|\, S(p, 2kT)\big) + \log2 \Big].
    \end{align*}
    Now consider the classical-quantum channel $\calR$ which maps the classical label $x$ to $\rho_x^{\otimes 2kT}$. In particular, the channel will map the distribution $q$ of labels to the quantum state $S(q, 2kT)$ and similarly for $p$. Data processing then implies that
    \begin{equation*}
        D\big(S(q, 2kT) \,\big\|\, S(p, 2kT)\big)
        = D\big(\calR(q) \,\big\|\, \calR(p)\big)
        \leq D(q \,\|\, p ).
    \end{equation*}
    This completes the proof.
\end{proof}
Before we turn our attention to proving~\Cref{prop:info-bound-per-t}, we will need to collect a couple entropic identities. 

\begin{identity}[A first entropic identity]
    \label{prop:donalds-identity}
    Let $\{\sigma_x\}_{x \in \calL}$ be a family of quantum states and $p$ be a distribution over the labels $\calL$.
    For any quantum state $\tau$, it holds that
    \begin{equation*}
        \sum_x p_x \cdot D(\sigma_x\,\|\,\tau) = \sum_x p_x \cdot D\Big(\sigma_x\,\Big\|\,\sum_y p_y \cdot \sigma_y\Big) + D\Big(\sum_y p_y \cdot \sigma_y\,\Big\|\, \tau\Big).
    \end{equation*}
\end{identity}

\begin{proof} 
    We expand the relative entropies. The left-hand side is
    \begin{equation} \label{eq:donald_1}
        \sum_x p_x \cdot D(\sigma_x\,\|\,\tau) = \Tr\bigg[\sum_x p_x \cdot \sigma_x \cdot \log (\sigma_x) - \sum_x p_x \cdot \sigma_x \cdot \log (\tau) \bigg]. 
    \end{equation}
    Letting $S = \sum_y p_y \cdot \sigma_y$, the first term on the right-hand side is
    \begin{equation} \label{eq:donald_2}
        \sum_x p_x \cdot D\Big(\sigma_x\,\Big\|\,\sum_y p_y \cdot \sigma_y\Big) = \Tr\bigg[\sum_{x} p_x \cdot \sigma_x \cdot \log (\sigma_x) - \sum_{x} p_x \cdot \sigma_x \cdot \log (S) \bigg],
    \end{equation}
    and the second term on the right-hand side is
    \begin{equation} \label{eq:donald_3}
        D\Big(\sum_y p_y \cdot \sigma_y\,\Big\|\, \tau\Big) = \Tr\bigg[\sum_x p_x \cdot \sigma_x \cdot \log(S) - \sum_x p_x \cdot \sigma_x \cdot \log(\tau)\bigg]. 
    \end{equation}
    We see that adding the right-hand sides of \Cref{eq:donald_2,eq:donald_3} yields the right-hand side \Cref{eq:donald_1}. This completes the proof. 
\end{proof}

\begin{identity}[A second entropic identity]
    \label{prop:rel-entropy-wrt-entropy}
    Let $\{\sigma_x\}_{x \in \calL}$ be a family of quantum states and $p$ be a distribution over the labels $\calL$. It holds that
    \begin{equation*}
        \sum_x p_x \cdot D\Big(\sigma_x\,\Big\|\,\sum_y p_y\cdot \sigma_y\Big) = H\Big(\sum_x p_x\cdot \sigma_x\Big) - \sum_x p_x \cdot H(\sigma_x).
    \end{equation*}
\end{identity}

\begin{proof}
We expand the entropies. Letting $S = \sum_y p_y \cdot \sigma_y$, the left-hand side is
    \begin{equation} \label{eq:second_identity_1}
        \sum_x p_x \cdot D\Big(\sigma_x\,\Big\|\,\sum_y p_y\cdot \sigma_y\Big) = \Tr\bigg[\sum_x p_x \cdot \sigma_x \cdot \log(\sigma_x) - \sum_x p_x \cdot \sigma_x \cdot \log(S)\bigg]. 
    \end{equation}
    Meanwhile, the first term on the right-hand side is
    \begin{equation} \label{eq:second_identity_2}
        H\Big(\sum_x p_x\cdot \sigma_x\Big) = -\Tr\bigg[\sum_x p_x \cdot \sigma_x \cdot \log(S)\bigg],
    \end{equation}
    and the second term on the right-hand side is
    \begin{equation} \label{eq:second_identity_3}
        \sum_x p_x \cdot H(\sigma_x) = -\Tr\bigg[\sum_x p_x \cdot \sigma_x \cdot \log(\sigma_x)\bigg]. 
    \end{equation}
    We see that subtracting the right-hand side of \Cref{eq:second_identity_3} from the right-hand side of \Cref{eq:second_identity_2} yields the right-hand side \Cref{eq:second_identity_1}. This completes the proof. 
\end{proof}

\begin{proof}[Proof of \Cref{prop:info-bound-per-t}]
    We first apply~\Cref{prop:donalds-identity} to each term separately:
\begin{align*}
    \sum_x q_x \cdot \Delta_x(p, s)
    &= \sum_x q_x \cdot D\left( \rho_x^{\otimes s}\otimes B_x^{\otimes k} \middle\| S^+(p,s) \right) - \sum_x q_x \cdot D\left( \rho_x^{\otimes s} \middle\| S(p,s) \right) \nonumber \\
    &= \bigg(\sum_x q_x \cdot D\left( \rho_x^{\otimes s}\otimes B_x^{\otimes k} \middle\| S^+(q,s) \right) + D(S^+(q, s) \| S^+(p, s))\bigg) \nonumber \\
    &\qquad - \bigg(\sum_x q_x \cdot D\left( \rho_x^{\otimes s} \middle\| S(q,s) \right) + D(S(q, s) \| S(p, s))\bigg) \nonumber \\
    &= \sum_x q_x \cdot D\left( \rho_x^{\otimes s}\otimes B_x^{\otimes k} \middle\| S^+(q,s) \right) - \sum_x q_x \cdot D\left( \rho_x^{\otimes s} \middle\| S(q,s) \right) \nonumber \\
    &\qquad + D(S^+(q, s) \| S^+(p, s)) - D(S(q, s) \| S(p, s)). \label{eq:third_ingredient_eq6}
\end{align*}
We will keep the fourth term as is. The third term can be bounded above using data processing, as follow. Let $\calR$ be the channel that takes $s + k$ copies of a state and measures the last $k$ copies using the POVM $\{O_i, I - O_i\}$. Then
\begin{equation*}
    D\big(S^+(q, s) \big\| S^+(p, s)\big) = D\big(\calR(S(q, s+k))\big\|\calR(S(p, s+k))\big)
    \leq D\big(S(q, s+k) \big\| S(p, s+k)\big). \label{eq:third_ingredient_eq5}
\end{equation*}

It remains to bound the difference of the first two terms. In that case, we use~\Cref{prop:rel-entropy-wrt-entropy} and the additivity of entropy over tensor products to get
\begin{align} 
    &\sum_x q_x \cdot D\left( \rho_x^{\otimes s}\otimes B_x^{\otimes k} \middle\| S^+(q,s) \right) - \sum_x q_x \cdot D\left( \rho_x^{\otimes s} \middle\| S(q,s) \right) \nonumber \\
    &\qquad = \Big( H(S^+(q,s)) - \sum_x q_x \cdot H(\rho_x^{\otimes s}\otimes B_x^{\otimes k}) \Big) 
    - \Big( H(S(q,s)) - \sum_x q_x \cdot H(\rho_x^{\otimes s}) \Big) \nonumber \\
    & \qquad = \Big( H(S^+(q,s)) - \sum_x q_x \cdot \big( s \cdot H(\rho_x) + k \cdot H(B_x) \big) \Big) - \Big( H(S(q,s)) - \sum_x q_x \cdot s \cdot H(\rho_x)\Big) \nonumber \\
     &\qquad = H(S^+(q,s)) - H(S(q,s)) - \sum_x q_x \cdot k \cdot H(B_x) \nonumber \\
     & \qquad = H(S^+(q,s)) - H(S(q,s)) - k \cdot \sum_x q_x\cdot h\big( \tr(O_i \cdot \rho_x) \big). \label{eq:third_ingredient_eq_1}
\end{align}
Now, we claim the bound
\begin{equation} \label{eq:third_ingredient_eq_2}
    H(S^+(q,s)) - H(S(q,s)) \leq H \Big(\sum_x q_x \cdot B^{\otimes (n+k)}_{x} \Big) -  H \Big(\sum_x q_x \cdot B^{\otimes n}_{x} \Big).
\end{equation}
To see this, consider the relative entropy
\begin{align*}
        &D \Big( S^+(q,n) \Big\| S(q,n) \otimes \frac{I^{\otimes k}}{2^k} \Big) \\
        ={}& \Tr \bigg[ \Big( \sum_x q_x \cdot \rho_x^{\otimes n} \otimes B_x^{\otimes k} \Big) \cdot \bigg( \log  \Big( \sum_x q_x \cdot \rho_x^{\otimes n} \otimes B_x^{\otimes k} \Big) - \log  \Big( \sum_x q_x \cdot \rho_x^{\otimes n} \otimes \frac{I^{\otimes k}}{2^k} \Big)  \bigg)\bigg] \\
        ={}& -H \Big( S^+(q,n) \Big) + H( S(q,n) \Big) + k \log(2). 
    \end{align*}
    If we take these states and measure the first $n$ factors with the measurement $\{O_i, I - O_i\}$, then we get instead
    \begin{equation*}
        D \Big( \sum_x q_x \cdot B^{\otimes (n+k)}_{x} \Big\| \sum_x q_x \cdot B_x^{\otimes n} \otimes \frac{I^{\otimes k}}{2^k} \Big) = -H \Big(\sum_x q_x \cdot B^{\otimes (n+k)}_{x} \Big) +  H \Big(\sum_x q_x \cdot B^{\otimes n}_{x} \Big) + k \log (2). 
    \end{equation*}
    Thus, by data processing, after canceling the $k\log(2)$ terms and rearranging, we obtain \Cref{eq:third_ingredient_eq_2}. Substitution of that bound into \Cref{eq:third_ingredient_eq_1} then yields
    \begin{align*}
        \eqref{eq:third_ingredient_eq_1} & \leq  \bigg(H \Big(\sum_x q_x \cdot B^{\otimes (n+k)}_{x} \Big) -  H \Big(\sum_x q_x \cdot B^{\otimes n}_{x} \Big)\bigg) - k \cdot \sum_{x} q_x \cdot h \big( \tr(O_i \cdot \rho_x) \big) \\
        & \leq \bigg( k \cdot \sum_{x} q_x \cdot h \big( \tr(O_i \cdot \rho_x) \big) + \log\Big( \frac{n+k+1}{n+1} \Big) \bigg) - k \cdot \sum_{x} q_x \cdot h \big( \tr(O_i \cdot \rho_x) \big) \\
        & = \log\Big( \frac{n+k+1}{n+1} \Big).
    \end{align*}
    The second inequality follows from \Cref{lem:entropic_(di)lemma} below, with $\btheta \leftarrow \tr(O_i \cdot \rho_{\bx})$, and with $\bx \sim q$. This completes our bound on the difference of the first two terms. Putting everything together finally yields
    \begin{equation*}
         \sum_x q_x \cdot \Delta_x(p, s) \leq  D\big(S(q, s+k) \big\| S(p, s+k)\big) - D\big(S(q, s) \big\| S(p, s)\big) + \log\Big( \frac{n+k+1}{n+1} \Big),
    \end{equation*}
    which is what we wanted to show. \qedhere
    \end{proof}

All that remains is an inequality concerning Shannon entropies, which we appealed to in the previous lemma. 




\begin{lemma}[A useful inequality for Shannon entropies] \label{lem:entropic_(di)lemma}
    Let $\btheta$ be any random variable on $[0,1]$, and let $m$ and $n$ be integers such that $1 \leq m \leq n$. For each $i \in [n]$, let $\bz_i \sim \Ber( \btheta)$. Then for any $m \leq n$, 
    \begin{equation*}
        H(\bz_{1}, \dots, \bz_n) - H(\bz_{1}, \dots, \bz_m) \leq (n-m) \cdot \E_{\btheta} \big[ h(\btheta) \big] + \log \Big( \frac{n+1}{m+1} \Big). 
    \end{equation*}
\end{lemma}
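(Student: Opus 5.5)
We prove the inequality by conditioning on the count of ones. Write $\bz^{(n)} = (\bz_1,\dots,\bz_n)$ and let $\bK_n = \bz_1+\cdots+\bz_n$. The joint distribution of $\bz^{(n)}$ is exchangeable: conditioned on $\bK_n = j$, the string is uniform over the $\binom{n}{j}$ strings of weight $j$, and this holds for every mixing distribution of $\btheta$. Hence
\[
H(\bz^{(n)}) = H(\bK_n) + \E\Big[\log \binom{n}{\bK_n}\Big],
\]
and the same identity holds with $n$ replaced by $m$. The goal is to bound the difference of these two expressions against $(n-m)\E[h(\btheta)]$.

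An upper bound for the $n$-term and a lower bound for the $m$-term do not mix well, so we use a cleaner route through conditional entropy. By the chain rule,
\[
H(\bz^{(n)}) - H(\bz^{(m)}) = \sum_{j=m}^{n-1} H\big(\bz_{j+1} \mid \bz^{(j)}\big).
\]
For each term we compare against conditioning on $\btheta$ as well:
\[
H\big(\bz_{j+1}\mid \bz^{(j)}\big) = H\big(\bz_{j+1} \mid \bz^{(j)},\btheta\big) + I\big(\bz_{j+1};\btheta \mid \bz^{(j)}\big).
\]
Given $\btheta$, the variable $\bz_{j+1}$ is independent of $\bz^{(j)}$, so the first term equals $\E[h(\btheta)]$. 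Summing over $j$ gives
\[
H(\bz^{(n)}) - H(\bz^{(m)}) = (n-m)\,\E[h(\btheta)] + \sum_{j=m}^{n-1} I\big(\bz_{j+1};\btheta\mid \bz^{(j)}\big).
\]
By the chain rule for mutual information, the sum equals $I(\bz^{(n)};\btheta) - I(\bz^{(m)};\btheta)$. Because $\bK_n$ is sufficient for $\btheta$, this is $I(\bK_n;\btheta) - I(\bK_m;\btheta)$.

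It therefore remains to show that $I(\bK_n;\btheta) - I(\bK_m;\btheta) \le \log\frac{n+1}{m+1}$. This is the main obstacle, because the crude bound $I(\bK_n;\btheta)\le \log(n+1)$ ignores the subtracted term. Since $\bK_m$ is a garbling of $\bK_n$ (it arises by drawing $m$ of the $n$ coins without replacement, i.e.\ hypergeometric thinning), the difference equals $I(\bK_n;\btheta\mid\bK_m)$. This is at most $H(\bK_n \mid \bK_m) - H(\bK_n\mid \bK_m,\btheta)$.

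Two natural attempts fall short. Bounding $H(\bK_n\mid\bK_m)$ by $\log(n-m+1)$ is too weak. A direct Laplace/uniform-prior argument gives $\log\frac{n+1}{m+1}$ exactly when $\btheta$ is uniform: in that case $\bK_n$ is uniform on $n+1$ values and $I = \log(n+1) - \E[\log(\dots)]$ minus a $\theta$-dependent term, so the general case needs a convexity argument.

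We plan to use a redundancy bound instead. For any fixed reference distribution $Q_n$ on $\{0,\dots,n\}$,
\[
I(\bK_n;\btheta) \le \E_{\btheta}\, D\big(\mathrm{Bin}(n,\btheta)\,\|\,Q_n\big).
\]
We would apply this conditionally: $I(\bK_n;\btheta\mid\bK_m)$ is at most the expected divergence between the law of $\bK_n-\bK_m$ given $(\btheta,\bK_m)$ and a reference law given $\bK_m$. For the reference we take the Laplace predictive distribution, i.e.\ the uniform-prior posterior predictive after observing $\bK_m=a$. This is beta-binomial with parameters $(a+1,\,m-a+1)$ on the remaining $n-m$ coins. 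The resulting ratio of likelihood to mixture is
\[
\frac{\theta^{b}(1-\theta)^{n-m-b}\,(m+1)\binom{m}{a}}{(n+1)\binom{n}{a+b}\binom{n-m}{b}^{-1}}.
\]
We then need its expected logarithm to be at most $\log\frac{n+1}{m+1}$. This should follow from the maximum-likelihood bound $\theta^{K}(1-\theta)^{N-K}\binom{N}{K}\le 1$ applied to the $n$-sample, combined with an exact identity for the $m$-sample. Checking that this algebra closes, possibly after swapping the roles so that the $m$-term is handled exactly via the uniform mixture, is the step we expect to require the most care.
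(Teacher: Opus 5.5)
\textbf{Gap in the final step.} Your reduction is correct, and in substance it is the paper's mechanism. The chain rule gives the exact identity $H(\bz_1,\dots,\bz_n)-H(\bz_1,\dots,\bz_m)=(n-m)\E[h(\btheta)]+I(\bz_1,\dots,\bz_n;\btheta\mid \bz_1,\dots,\bz_m)$. Bounding this conditional mutual information by the expected divergence to the Laplace (uniform-prior) predictive is exactly what the paper's step $\E[h(\bphi)]\le{}$(cross-entropy against $\widehat{\btheta}_m$) does, one coin at a time. The paper then finishes by citing the Laplace estimator's KL-risk bound $\log\frac{m+2}{m+1}$ from \cite{MG22} and telescoping. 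You instead try to prove that bound yourself. That is the only nontrivial step, you leave it open, and the route you sketch for it does not work.

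First, your displayed ratio has the mixture factor inverted. Given $K_m=a$, the Laplace predictive probability of $b$ ones among the remaining $n-m$ coins is $\binom{n-m}{b}\frac{(m+1)\binom{m}{a}}{(n+1)\binom{n}{a+b}}$. So the likelihood ratio is
\[
\frac{(n+1)\binom{n}{a+b}\,\theta^{b}(1-\theta)^{n-m-b}}{(m+1)\binom{m}{a}}
=\frac{n+1}{m+1}\cdot\frac{\binom{n}{a+b}\,\theta^{a+b}(1-\theta)^{n-a-b}}{\binom{m}{a}\,\theta^{a}(1-\theta)^{m-a}} .
\]
Averaging its logarithm over $a\sim\mathrm{Bin}(m,\theta)$ and an independent $b\sim\mathrm{Bin}(n-m,\theta)$ gives exactly $\log\frac{n+1}{m+1}-H(\mathrm{Bin}(n,\theta))+H(\mathrm{Bin}(m,\theta))$.

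The maximum-likelihood bound only tells you $-H(\mathrm{Bin}(n,\theta))\le 0$. That discards precisely the term needed to cancel $H(\mathrm{Bin}(m,\theta))$, which is of order $\frac12\log m$ for $\theta$ away from $0$ and $1$. No identity for the $m$-sample alone removes that term, so this route loses $\Theta(\log m)$. Swapping the roles does not help either: both entropies grow, so what you need is a comparison, not a pointwise bound on either sample.

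The missing ingredient is monotonicity of binomial entropy, $H(\mathrm{Bin}(n,\theta))\ge H(\mathrm{Bin}(m,\theta))$. This is one line: $H(a+b)\ge H(a+b\mid b)=H(a)$ for independent $a,b$. With it, your argument closes for each fixed $\theta$, hence after averaging over $\btheta$. It then gives a self-contained proof in place of the citation. The case $n=m+1$ of the computation above is exactly the Laplace estimator's KL risk, so the \cite{MG22} bound is equivalent to this monotonicity.
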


\begin{proof}
    We first reduce to the $n = m+1$ case, since if that holds, then the general case does too, by telescoping:
    \begin{align*}
        H(\bz_{1}, \dots, \bz_n) - H(\bz_{1}, \dots, \bz_m) & = \sum_{k = m}^{n-1} \Big( H(\bz_1, \dots, \bz_{k+1}) - H(\bz_1, \dots, \bz_k) \Big) \\
        & \leq \sum_{k=m}^{n-1} \Big( \E_{\btheta} \big[ h(\btheta) \big] + \log \Big( \frac{k+2}{k+1}\Big) \Big) \\
        & = (n-m) \cdot \E_{\btheta} \big[ h(\btheta) \big] + \log \Big( \frac{n+1}{m+1} \Big).
    \end{align*}

    So, take $n = m+1$. The one-step inequality follows from Equation (17) in~\cite[Proposition 4]{MG22}. In particular, specializing their result to a binary alphabet, they show that after observing $m$ samples $\bz_1, \dots, \bz_m \sim \Ber(\theta)$ from a Bernoulli distribution, the \emph{Laplace estimator}
    \begin{equation*}
        \widehat{\btheta}_m = \frac{\bz_1 + \dots \bz_m +1}{m + 2}
    \end{equation*}
    satisfies
    \begin{equation*}
        \E\big[D(\Ber(\theta) \,\|\, \Ber(\widehat{\btheta}_m))\big] \leq \log(\frac{m+2}{m+1}).
    \end{equation*}
    Averaging over $\btheta$ then yields
    \begin{equation}
        \label{eq:bern-proof-gyros}
        \E\big[D(\Ber(\btheta) \,\|\, \Ber(\widehat{\btheta}_m))\big] \leq \log(\frac{m+2}{m+1}).
    \end{equation}
    Now we will expand the relative entropy as
    \begin{equation}
        \label{eq:bern-proof-pita}
        \E\big[D(\Ber(\btheta) \,\|\, \Ber(\widehat{\btheta}_m))\big] = -\E \big[h(\btheta)\big] + \E\Big[-\btheta \cdot \log \big(\widehat{\btheta}_m\big) - (1 - \btheta) \cdot \log\big(1 - \widehat{\btheta}_m\big)\Big].
    \end{equation}
    Let us now focus on the second term and use the law of total expectation. By conditioning on $\bz_1, \dots, \bz_m$, we rewrite this term as
    \begin{equation*}
        \E_{\bz_1, \dots, \bz_m} \Big[\E \big[-\btheta \cdot \log \big(\widehat{\btheta}_m\big)  \, \big| \, \bz_1, \dots, \bz_m\big] - \E\big[(1 - \btheta) \cdot \log\big(1 - \widehat{\btheta}_m\big) \,\big|\, \bz_1, \dots, \bz_m\big]\Big].
    \end{equation*}
    Since $\widehat{\btheta}_m$ is constant when conditioned on the samples $\bz_1, \dots, \bz_m$, this is then equal to
    \begin{equation*}
        \E_{\bz_1, \dots, \bz_m} \Big[-\E \big[\btheta \big| \bz_1, \dots, \bz_m\Big] \cdot \log \big(\widehat{\btheta}_m\big) - \Big(1 - \E\big[\btheta\mid \bz_1, \dots, \bz_m\big]\Big)\cdot \log\big(1 - \widehat{\btheta}_m\big)\Big].
    \end{equation*}
    If we now set $\bphi = \E \big[\btheta \big| \bz_1, \dots, \bz_m\big]$, then this can be rewritten as
    \begin{align}
        &\E_{\bz_1, \dots, \bz_m} \Big[-\bphi \cdot \log \big(\widehat{\btheta}_m\big) - (1-\bphi) \cdot \log\big(1 - \widehat{\btheta}_m\big)\Big] \nonumber \\
        &\qquad =
        \E_{\bz_1, \dots, \bz_m}[h(\bphi)] + \E_{\bz_1, \dots, \bz_m} \Big[D(\Ber(\bphi) \,\|\, \Ber(\widehat{\btheta}_m) \Big] \geq \E_{\bz_1, \dots, \bz_m}[h(\bphi)]. \label{eq:bern-proof-souvlaki}
    \end{align}
    The inequality uses the nonnegativity of the relative entropy.
    Finally, observe that 
    \begin{equation*}
        \bz_{m+1} \mid \bz_1, \dots, \bz_m \sim \Ber\Big(\E_{\btheta} \big[\btheta \big| \bz_1, \dots, \bz_m\big]\Big) = \Ber(\bphi).
    \end{equation*}
    Taking the entropy on both sides and averaging over $\bz_1, \dots, \bz_m$ gives:
    \begin{equation*}
        \E_{\bz_1, \dots, \bz_m}[h(\bphi)]
        = H(\bz_{m+1} \mid \bz_1, \dots, \bz_m) = H(\bz_1, \dots, \bz_{m+1}) - H(\bz_1, \dots, \bz_{m}).
    \end{equation*}
    Combining the above with~\Cref{eq:bern-proof-gyros,eq:bern-proof-pita,eq:bern-proof-souvlaki} yields the desired bound. \qedhere 
    

\end{proof}

\subsection{Worst-case shadow tomography}

The main conceptual message of this paper is that it can be easier to solve worst-case shadow tomography by first passing through the average case. 
Indeed, once we have an algorithm that succeeds for every prior, the corresponding worst-case result follows almost immediately from the minimax theorem. 
We can therefore now prove the main theorem, restated below, as a direct consequence of \cref{thm:average-case-shadow-tomography}.

\WorstCaseShadow*

\begin{proof}
    Apply \cref{thm:average-case-shadow-tomography} for every prior and then apply the minimax theorem, exactly as described in \cref{subsec:background-minimax}.
\end{proof}

\section*{Acknowledgments}
We would like to thank Omar Alrabiah and Harshit Sikchi for helpful conversations.
S.G.\ is supported by a Herman Goldstine Memorial Postdoctoral Fellowship. 
A.P.\ is supported by DARPA under Agreement No.\ HR00112020023.
J.S.\ and J.W.\ are supported by the NSF CAREER award CCF-233971.
E.T.\ is supported by the Miller Institute for Basic Research in Science, University of California, Berkeley. 

\bibliographystyle{alpha}
\bibliography{wright}

\appendix

\section{A sharper boosting analysis}
\label{app:boost}

We present a proof of \cref{thm:average-case-shadow-tomography} achieving the claimed $O(\log(m)\log(m/\delta)/\eps^2)$ copy complexity bound. 

\begin{proof}[Proof of \cref{thm:average-case-shadow-tomography}]
Fix parameter $T=\Theta(\log m)$ large enough so that $2m e^{-T/128}\leq 1/8$.
For any prior $q$ on $\calF$, run
\cref{alg:integral-pgm-estimator} with these parameters, and let
\[
    e_q(x) \coloneqq \max_{i\in[m]} \Pr\left[ \abs{\wh{\bo}_i-\tr(O_i\rho_x)}>\eps \mid \bx=x
    \right]
\]
denote the largest conditional error probability among the $m$ observable estimates on input $\rho_x$.
Here, the probability is over the randomness of \cref{alg:integral-pgm-estimator} when run with prior $q$.
By \cref{lem:randomized-integral-pgm} and a union bound over the $m$
observables,
\[
    \Pr_{\bx\sim q}\left[e_q(\bx)>\frac18\right]
    \leq 2m e^{-T/128}
    \leq\frac18.
\]
Since always $e_q(x)\leq1$, it follows that
\begin{equation}
\label{eq:weak-pgm-guarantee}
    \E_{\bx\sim q}[e_q(\bx)]
    \leq\frac14.
\end{equation}

Our goal now is to amplify this guarantee. We do so using a standard boosting argument, iteratively reweighting the prior toward states on which the previous estimators have larger error.
For an odd integer $R$, we will construct a sequence of priors $q_1,\ldots,q_R$, where each new prior places more weight on states for which the previous estimators have larger error. 
For each $r\in[R]$, we run \cref{alg:integral-pgm-estimator} with prior $q_r$, obtaining a vector of estimates $\wh{\bo}^{(r)}$. 
Our final output is the coordinatewise median of $\wh{\bo}^{(1)},\ldots,\wh{\bo}^{(R)}$.

Initialize the weights by setting $w_0(x)=p_x$ for each $x\in\calX$. 
For each $r=1,\ldots,R$, define
\[
    Z_{r-1}\coloneqq\sum_x w_{r-1}(x),
    \qquad
    q_r(x)\coloneqq\frac{w_{r-1}(x)}{Z_{r-1}}.
\]
Thus, $q_r$ is the distribution obtained by normalizing the current weights. Recall that $e_{q_r}(x)$ is the largest conditional error probability of the estimator associated with the prior $q_r$ on the state $\rho_x$. We use these error probabilities to update the weights according to
\[
    w_r(x) \coloneqq w_{r-1}(x)\bigl(1+2e_{q_r}(x)\bigr).
\]
In particular, states on which the current estimator has larger error receive more weight in the next round.

By \cref{eq:weak-pgm-guarantee},
\[
    Z_r = Z_{r-1}\left(1+2\E_{\bx\sim q_r}[e_r(\bx)]\right)
    \leq \frac32 Z_{r-1},
\]
and hence
\begin{equation}
\label{eq:weight-growth}
    Z_R\leq\left(\frac32\right)^R.
\end{equation}

We now show that the coordinatewise median of the $R$ estimators succeeds with high probability. Fix a state $\rho_x$ and an observable $O_i$. For each $r\in[R]$, define $\bF_{\!r}$ to be the indicator of the failure event that the $r$th estimator is inaccurate on $O_i$:
\[
\bF_{\!r} \coloneqq \mathbf 1\left\{ \abs{\wh{\bo}^{(r)}_i-\tr(O_i\rho_x)}>\eps \right\}.
\]
By definition, we have $\Pr[\bF_{\!r}=1\mid \bx=x] \leq e_{q_r}(x)$.
Since the $R$ estimators are applied independently to disjoint sets of copies, the random variables $\bF_{\!1},\ldots,\bF_{\!r}$ are independent conditioned on $\bx=x$.

If the median estimate for $O_i$ is inaccurate, then at least half of the individual estimates must be inaccurate. Hence
\begin{align*}
    \Pr\left[
        \abs{\wh{\bo}_i-\tr(O_i\rho_x)}>\eps \mid\bx=x
    \right]
    &\leq
    \Pr\left[
        \sum_{r=1}^R \bF_{\!r}\geq \frac{R}{2} \mid\bx=x
    \right] \\
    &\leq
    3^{-R/2}
    \E\left[
        3^{\sum_{r=1}^R \bF_{\!r}} \mid \bx=x
    \right] \\
    &=
    3^{-R/2}
    \prod_{r=1}^R
    \E\left[3^{\bF_{\!r}}\mid\bx=x\right] \\
    &\leq
    3^{-R/2}
    \prod_{r=1}^R
    \bigl(1+2e_{q_r}(x)\bigr).
\end{align*}
Here, the second inequality is Markov's inequality, the third line uses the independence of $\bF_{\!r}$'s, and in the last line we used
\[
    \E[3^{\bF_{\!r}}\mid\bx=x] = 1+2\Pr[\bF_{\!r}=1\mid\bx=x] \leq 1+2e_{q_r}(x).
\]

By the definition of the weights, we have
\[
    w_R(x)
    = p_x\prod_{r=1}^R \bigl(1+2e_{q_r}(x)\bigr).
\]
Thus, averaging over $\bx\sim p$, we obtain
\begin{align*}
    \Pr\left[
        \abs{\wh{\bo}_i-\tr(O_i\brho)}>\eps
    \right]
    &\leq
    3^{-R/2}
    \sum_x
    p_x\prod_{r=1}^R
    \bigl(1+2e_{q_r}(x)\bigr) \\
    &=
    3^{-R/2} Z_R \\
    &\leq
    \left(\frac{\sqrt{3}}{2}\right)^R,
\end{align*}
where the last inequality follows from \cref{eq:weight-growth}.

Finally, taking a union bound over the $m$ observables gives
\[
    \Pr\left[
        \max_{i\in[m]}
        \abs{\wh{\bo}_i-\tr(O_i\brho)}>\eps
    \right]
    \leq
    m\left(\frac{\sqrt{3}}{2}\right)^R.
\]
Taking $R=O(\log(m/\delta))$ therefore makes this probability at most $\delta$.

Each invocation of \cref{alg:integral-pgm-estimator} uses at most
\[
    k(2T-1) = O\left(\frac{\log m}{\eps^2}\right)
\]
copies. Since we invoke it $R=O(\log(m/\delta))$ times, the total number of copies is
\[
    O\left( \frac{\log(m)\log(m/\delta)}{\eps^2} \right),
\]
as claimed.
\end{proof}

\end{document}